\theoremstyle{plain}
\newtheorem{thm}{Theorem}[section]
  \theoremstyle{definition}
  \theoremstyle{remark}
  \newtheorem{rem}[thm]{Remark}
  \theoremstyle{plain}
  \newtheorem{assumption}[thm]{Assumption}
  \theoremstyle{plain}
  \newtheorem{prop}[thm]{Proposition}
  \theoremstyle{definition}
  \newtheorem{defn}[thm]{Definition}
  \theoremstyle{plain}
  \newtheorem{lem}[thm]{Lemma}
\DeclareMathOperator*{\G}{d\Gamma}
\DeclareMathOperator*{\GG}{d\Gamma^{(2)}}
\begin{document}
\global\long\def\Tr{{\mathrm{Tr}}}
\global\long\def\Span{{\mathrm{span}}}
\global\long\def\fh{\mathfrak{h}}
\global\long\def\fhn{\mathfrak{H}_{f}^{(N)}}
\global\long\def\fhk#1{\mathfrak{H}_{f}^{(#1)}}
\global\long\def\om{\omega}
\global\long\def\Ex{\mathfrak{X}}
\global\long\def\dmn{\mathfrak{S}_{N}}
\global\long\def\dmna{\mathfrak{S}_{N_A}}
\global\long\def\dmnb{\mathfrak{S}_{N_B}}
\global\long\def\ghf{p}
\global\long\def\sobolev{{\mathrm{H}^1(\mathbb{R}^3)}}
  \global\long\def\g{\gamma}
\global\long\def\dx{d^{3}x}
\global\long\def\dy{d^{3}y}
\global\long\def\dz{d^{3}z}
 \global\long\def\Ran{\mbox{Ran}}
 \global\long\def\Id{\mathbbm{1}}
 \global\long\def\loss{}
 
 \newcommand{\ghfnot}{p} 
 \newcommand{\ghfnotp}{q} 
 \newcommand{\ghft}{p_t} 
 \newcommand{\ghftp}{q_t} 
 \newcommand{\Psio}{\Psi^{1}}
 \newcommand{\Psimo}{\Psi^{-1}} 
 \newcommand{\Psit}{\Psi^{2}}
 \newcommand{\Psimt}{\Psi^{-2}}
 \newcommand{\rhoj}{\rho^{[j]}}
 \newcommand{\rhomj}{\rho^{[-j]}}
 \newcommand{\rhoone}{\rho^{[1]}}
 \newcommand{\rhomone}{\rho^{[-1]}}
 \newcommand{\rhotwo}{\rho^{[2]}}
 \newcommand{\rhomtwo}{\rho^{[-2]}}
 \newcommand{\gammaj}{\gamma^{[j]}}
 \newcommand{\gammamj}{\gamma^{[-j]}}
 \newcommand{\gammaone}{\gamma^{[1]}}
 \newcommand{\gammamone}{\gamma^{[-1]}} 
 \newcommand{\gammatwo}{\gamma^{[2]}}
 \newcommand{\gammamtwo}{\gamma^{[-2]}}
 \newcommand{\Ij}{I^{[j]}}
 \newcommand{\Ione}{I^{[1]}}
 \newcommand{\Ionep}{I^{[1]\prime}}
 \newcommand{\Itwo}{I^{[-2]}}
 \newcommand{\Itwop}{I^{[-2]\prime}}
 \newcommand{\fj}{f^{[j]}}
 \newcommand{\fone}{f^{[1]}}
 \newcommand{\fonep}{f^{[-1]\prime}}
 \newcommand{\ftwo}{f^{[-2]}}
 \newcommand{\ftwop}{f^{[-2]\prime}}
 \newcommand{\KTF}{K_{TF}}

\title{Kinetic Energy Estimates for the Accuracy of the Time-Dependent Hartree-Fock
Approximation with Coulomb Interaction}

\author{
Volker Bach\footnote{Technische Universit\"at Braunschweig, Institut f\"ur Analysis und Algebra, Pockelsstra\ss e 14, {38106} Braunschweig, Germany. E-mail: {\tt v.bach@tu-bs.de}}, 
S\'ebastien Breteaux\footnote{BCAM - Basque Center for Applied Mathematics, Alameda de Mazarredo 14, {48009} Bilbao, Spain. E-mail: {\tt sbreteaux@bcamath.org}}, 
S\"oren Petrat\footnote{Institute of Science and Technology Austria (IST Austria), Am Campus 1, 3400 Klosterneuburg, Austria. E-mail: {\tt soeren.petrat@ist.ac.at}},
Peter Pickl\footnote{Ludwig-Maximilians-Universit\"at, Mathematisches Institut, Theresienstr.\ 39, {80333} M\"unchen, Germany. E-mail: {\tt pickl@math.lmu.de}},
Tim Tzaneteas\footnote{Universit\"at T\"ubingen, Fachbereich Mathematik, Auf der Morgenstelle 10, {72076} T\"ubingen, Germany. E-mail: {\tt timmy.tzaneteas@uni-tuebingen.de}}
}

\newpage
\maketitle

\begin{abstract}
We study the time evolution of a system of $N$ spinless fermions in
$\mathbb{R}^3$ which interact through a pair potential, e.g., the
Coulomb potential. We compare the dynamics given by the solution to
Schr\"odinger's equation with the time-dependent Hartree-Fock
approximation, and we give an estimate for the accuracy of this
approximation in terms of the kinetic energy of the system. This
leads, in turn, to bounds in terms of the initial total energy of the
system.
\end{abstract}

\textbf{MSC class:} 35Q40, 35Q55, 81Q05, 82C10

\textbf{Keywords:} Hartree-Fock, many-body theory, mean-field limit for fermions

\section{Introduction}\label{sec:Introduction}
\paragraph{The Model.} 
In quantum mechanics, the state of a system of $N$ identical particles
is described by a wave function $\Psi_t$ which evolves in time $t \in
\mathbb{R}$ according to Schr\"odinger's equation,
\begin{equation}
\begin{cases}
i \partial_{t} \Psi_t & = H \Psi_t \, ,\\
\Psi_{t=0} & =\Psi_{0}\, .
\end{cases}\label{eq:Schrodinger-N-body}
\end{equation}
Given the (Bose-Einstein or Fermi-Dirac) particle statistics and the
one-particle Hilbert space $\fh$, the wave function $\Psi_t$ is a
normalized vector in $\mathfrak{H}_b^{(N)} := \mathcal{S}^{(N)}[
  \fh^{\otimes N}]$, for a system of $N$ bosons, or in $\fhn :=
\mathcal{A}^{(N)}[ \fh^{\otimes N}]$, for a system of $N$
fermions. Here $\mathcal{S}^{(N)}$ and $\mathcal{A}^{(N)}$ are the
orthogonal projections onto the totally symmetric and the totally
antisymmetric subspace, respectively, of the $N$-fold tensor product
$\fh^{\otimes N}$ of the one-particle Hilbert space $\fh$. The
dynamics~\eqref{eq:Schrodinger-N-body} is
 generated by the Hamilton
operator $H$ which is self-adjointly realized on a suitable dense
domain in $\mathfrak{H}_b^{(N)}$ or $\fhn$, respectively.

In the present article we study a system of $N$ spinless fermions in
$\mathbb{R}^3$, so $\Psi_t \in \fhn$, and $\fh = L^2[\mathbb{R}^3]$ is
the space of square-integrable functions on $\mathbb{R}^3$. The
Hamiltonian is given by
\begin{equation}
H \ = \ 
\nu + \sum_{j=1}^{N} h_{j}^{(1)} + \lambda\sum_{1\leq j<k\leq N}v(x_{j}-x_{k})\, ,
\label{eq:Hamiltonian}
\end{equation}
where
\begin{itemize}
\item the number $\nu \in \mathbb{R}$ is a constant contribution to
  the total energy. For example, if we describe a molecule in
  the Born-Oppenheimer approximation, then $\nu$ would account for the
  nuclear-nuclear repulsion,

\item  the coupling constant $\lambda >0$ is a small parameter and
  possibly depends on the particle number $N\geq1$ (while our interest ultimately lies in the description of systems with $N \gg 1$, the estimates in this article hold for any $N\geq1$),

\item  the self-adjoint operator $h^{(1)}$ on $\fh$  is of the form
  $-a \Delta + w(x)$, where $a>0$ and the external potential $w$ is an
  infinitesimal perturbation of the Laplacian,

\item 
and $v(x) := \pm |x|^{-1}$ is the Coulomb potential, for $x \in
  \mathbb{R}^3\setminus\{0\}$; $v(x) = +|x|^{-1}$ is the repulsive case, $v(x) = -|x|^{-1}$ the attractive case.
\end{itemize}
The Hamiltonian specified in \eqref{eq:Hamiltonian} describes several
situations of interest, e.g.:

\begin{itemize}
\item
\noindent\textbf{Atom.}
For an atom in the ($0^{th}$) Born-Oppenheimer approximation with a
nucleus of charge $Z$ at the origin, we have repulsive interaction and
\begin{equation}
\nu=0\,,\quad 
h^{(1)}=-\frac{\Delta}{2}-\alpha\frac{Z}{|x|}\,,\quad
\lambda=\alpha\,,
\label{eq:Def-Neutral-Atom}
\end{equation}
where $\alpha>0$ is the fine structure constant whose physical value
is $\alpha \simeq 1/137$. Note that our system of units is chosen such
that the reduced Planck constant $\hbar$, the electron mass $m$ and
the speed of light $c$ are equal to one, and the charge of the
electron is $-e=-\sqrt{\alpha}$. For more details about this choice of
units see~\cite[p.~21]{MR2583992}. 
\item
\noindent\textbf{Molecule.}
More generally, we can consider a molecule with $M \in
\mathbb{N}$ nuclei of charges $Z_{1}, \dots, Z_{M} >0$ at fixed,
distinct positions $R_{1}, \dots, R_{M} \in \mathbb{R}^3$ in the
Born-Oppenheimer approximation. In this case we have
\begin{align}
\nu  = \sum_{1\leq m<l\leq M} \frac{\alpha Z_{m}Z_{l}}{|R_{m}-R_{l}|}
\,,\quad 
h^{(1)} = -\frac{\Delta}{2} - \sum_{m=1}^{M}\frac{\alpha Z_{m}}{|x-R_{m}|}
\,,\quad
\lambda = \alpha \, .
\label{eq:Def-Molecule}
\end{align}
\item
\noindent\textbf{Particles in a Trap.}
For electrons in an external confining potential (realized, e.g., by a laser trap), we have repulsive interaction and
\begin{align}
\nu  = 0
\,,\quad 
h^{(1)} = -\frac{\Delta}{2} + w(x)
\,,\quad
\lambda = \alpha \,.
\label{eq:Def-Trap}
\end{align}
\item
\noindent\textbf{Fermion Star.} 
The Hamiltonian also describes systems of gravitating fermions, e.g., neutrons. In this case the interaction is attractive and
\begin{align}
\nu  = 0
\,,\quad 
h^{(1)} = -\frac{\Delta}{2}
\,,\quad
\lambda = G \,,
\label{eq:Def-Star}
\end{align}
where $G$ is Newton's gravitational constant (and recall that we set the mass $m=0$). A better description of a fermion star is achieved by replacing the non-relativistic Laplacian by the semi-relativistic operator $\sqrt{-\Delta+\Id}$.
\end{itemize}

For these situations the Hartree-Fock description that we are aiming at in this article and that we describe below can only be expected to hold for very short times (short relative to the large particle number $N$). For times of order $1$, we have to choose the coupling constant small in $N$ to see Hartree-Fock behavior (``mean-field scaling''). There are several possibilities to do that:

\begin{itemize}
\item
\noindent\textbf{Mean-Field Scaling for Large Volume.} 
Let us first note that for systems with large volume proportional to $N$, the kinetic energy is naturally also of order $N$. For such a system, the choice
\begin{equation}\label{eq:Def-Mean-Field-Scaling}
\nu = 0 ,\quad
h^{(1)} = -\frac{\Delta}{2} + w(x)
\, , \quad
\lambda = \frac{1}{N^{2/3}}
\end{equation}
leads to an interaction energy which is of the same order in $N$ as the kinetic energy (see \cite{Petrat:2014fk} for a more detailed discussion).
\item
\noindent\textbf{Mean-Field Scaling for Fixed Volume.} 
For systems with volume independent of $N$, the mean-field limit is naturally coupled to a semi-classical limit. Note that here the kinetic energy is of order $N^{5/3}$. Then the choice
\begin{equation}
\nu = 0, \quad  
h^{(1)} = -\frac{\Delta}{2N^{1/3}} + w(x)
\,,\quad
\lambda = \frac{1}{N^{2/3}}
\label{eq:Def-Semi-Classical-Mean-Field}
\end{equation}
leads to an interaction energy of the same order as the kinetic energy and nontrivial mean-field behavior (see in particular \cite{MR2092307,2013arXiv1305.2768B} for more details).
\item
\noindent$\boldsymbol{\lambda=N^{-1}}$ \textbf{Scaling.}
Very often, the term ``mean-field scaling'' is identified with the choice $\lambda=N^{-1}$. However, comparing with \eqref{eq:Def-Mean-Field-Scaling} and \eqref{eq:Def-Semi-Classical-Mean-Field}, in the two situations considered above, we see that this scaling leads to a subleading interaction.
\end{itemize}

\paragraph{Theory of the Time-Dependent Hartree-Fock Equation.}

Although (\ref{eq:Schrodinger-N-body}) admits the explicit solution
$\Psi_t = e^{-itH}\Psi_{0}$, this explicit form is not useful in
practice (from the point of view of numerics, for example) because of
the large number $N \gg 1$ of variables, and it therefore becomes
necessary to consider approximations to this equation. One such
approximation consists of restricting the wave function $\Psi_t$ to a
special class of wave functions. For fermion systems, the Hartree-Fock
approximation is a natural choice: it restricts $\Psi_t$ to the class
of Slater determinants, i.e., to those $\Phi \in \fhn$ which assume a
determinantal form,
\begin{equation}
\Phi(x_{1},\dots,x_{N})
=
\frac{1}{\sqrt{N!}}\det\left(\begin{array}{ccc}
    \varphi_{1}(x_{1}) & \cdots & \varphi_{1}(x_{N})\\ \vdots & \ddots &
    \vdots\\ \varphi_{N}(x_{1}) & \cdots &
    \varphi_{N}(x_{N})\end{array}\right)\ ,
\label{eq:Slater-determinant}
\end{equation}
where the orbitals $\varphi_{1}, \ldots, \varphi_{N} \in \fh$ are
orthonormal. We express (\ref{eq:Slater-determinant}) more concisely
as $\Phi = \varphi_{1} \wedge \cdots \wedge \varphi_{N}$. In
time-independent Hartree-Fock theory, one is interested in determining
the minimal energy expectation when varying solely over Slater determinants
 \cite{MR1175492, MR1297873, BachFroehlichJonsson2009,
  MR2781970, Bach2013}, i.e., one is interested in finding
\[
\inf\big\{ \langle\Phi,H\Phi\rangle \, \big| \, 
\Phi = \varphi_{1}\wedge\cdots\wedge\varphi_{N} \,, \quad
\langle \varphi_{i}, \varphi_{j} \rangle = \delta_{ij} \big\} \,.
\]

One can also study the evolution governed by
(\ref{eq:Schrodinger-N-body}) using Slater determinants, which gives
rise to time-dependent Hartree-Fock theory. Here the basic intuition
is that, for a system containing a large number of particles, the
solution will stay close to a Slater determinant (at least for short
times), provided the initial state is close to a Slater
determinant. Turning this intuition into mathematics requires the
specification of the equation of motion of the approximating Slater
determinant, as well as a mathematically rigorous notion of being
``close''. For the derivation of the former, one assumes that
the solution to (\ref{eq:Schrodinger-N-body}) is of the form $\Phi_t =
\varphi_{t,1} \wedge \cdots \wedge \varphi_{t,N}$, as in
\eqref{eq:Slater-determinant}. It is then easy to verify that the
orbitals $\varphi_{t,1},\dots,\varphi_{t,N}$ necessarily satisfy the
time-dependent Hartree-Fock (TDHF) equation, that is the system of $N$
non-linear equations given by
\begin{equation}
i\frac{d\varphi_{t,j}}{dt} \ = \ 
h^{(1)}\varphi_{t,j} + \lambda \sum_{k=1}^{N} 
\Big( [v*|\varphi_{t,k}|^{2}]\varphi_{t,j} - 
[v*(\varphi_{t,j} \bar{\varphi}_{t,k})]\varphi_{t,k} \Big) \, ,
\label{eq:hartree-fock_tradi}
\end{equation}
for $j = 1, \dots, N$ ($\bar{\varphi}$ is the complex conjugate of $\varphi$).

The TDHF equation \eqref{eq:hartree-fock_tradi} can be rewritten
in terms of the one-particle density 
matrix~$\ghf_t=\sum_{j=1}^{N}|\varphi_{t,j}\rangle\langle\varphi_{t,j}|$
with $\varphi_{t,j} \in\fh$ and 
$\langle\varphi_{t,j},\varphi_{t,k}\rangle=\delta_{jk}$
as
\begin{equation}
(\mbox{TDHF})\qquad 
i\partial_{t}\ghf_t \ = \ 
[h^{(1)},\ghf_t] + 
\lambda \Tr_{2}[v^{(2)},(\ghf_t\otimes\ghf_t)(\Id-\Ex)]\,.
\label{eq:hartree-fock_density_matrix}
\end{equation}
Here $\Ex$ is the linear operator on $\fh\otimes\fh$ such that
$\Ex(\varphi\otimes\psi)=\psi\otimes\varphi$ and $\Tr_{2}$ is the
partial trace (see~\eqref{eq:def-partial-trace}). Sometimes, we write $\ghf_t^{(2)} =
(\ghf_t\otimes\ghf_t)(\Id-\Ex)$. In the sequel, when
speaking of the TDHF equation, we refer to
\eqref{eq:hartree-fock_density_matrix}. The term involving $\Id$ is called the \emph{direct term}, the term involving $\Ex$ the \emph{exchange term}. 

Note that the TDHF equation \eqref{eq:hartree-fock_density_matrix} can
be written as $i\partial_{t}\ghf_t =
[h_{HF}^{(1)}(\ghf_t) , \ghf_t]$, where the effective
HF-Hamiltonian $h_{HF}^{(1)}(\gamma)$ is given by
\begin{equation}
h_{HF}^{(1)}(\gamma) \ := \ 
h^{(1)} + \lambda \Tr_{2}[v^{(2)}(\Id_{\fh\otimes\fh}-\Ex)(\Id_\fh \otimes \gamma)]\,.
\label{eq:hartree-fock_HF-Hamiltonian}
\end{equation}
Implicitly assuming the existence and regularity of $\ghf_t$,
the HF-Hamiltonian $h_{HF}^{(1)}(\ghf_t)$ is self-adjoint with
the same domain as $h^{(1)}$, and hence the solution to $\partial_{t}
U_{HF,t} = -i h_{HF}^{(1)}(\ghf_t) U_{HF,t}$, with $U_{HF,0} =
\Id$, is unitary. This has the important consequence that
\eqref{eq:hartree-fock_density_matrix} preserves the property of the
one-particle density matrix $\ghf_t$ of being a rank-$N$
orthonormal projection. In other words, if $\Phi_t \in \fhn$ evolves
according to the TDHF equation and $\Phi_0 = \varphi_{1} \wedge \cdots
\wedge \varphi_{N}$ is a Slater determinant, then so is $\Phi_t$, for all
$t \in \mathbb{R}$.

The TDHF equation for density matrices as in \eqref{eq:hartree-fock_density_matrix} 
has been studied in \cite{MR0424069} for a bounded two-body interaction. 
Then the mild solutions of the TDHF equation in the form \eqref{eq:hartree-fock_tradi} have been handled for a Coulomb two-body potential in \cite{MR0413843} for initial data in the Sobolev space $\mathrm{H}^1$. 
This result has been extended to the TDHF equation in the form 
\eqref{eq:hartree-fock_density_matrix} in \cite{MR0456066,MR0411439}. Note that \cite{MR0456066} also handles the case of a more general class of two-body potentials and the existence of a classical solution for initial data in a space similar to the Sobolev $\mathrm{H}^2$ space for density matrices. In \cite{MR1175475} the existence of mild solutions  of the TDHF in the form \eqref{eq:hartree-fock_tradi} was proved for a Coulomb two-body potential with an (infinite sequence of) initial data in $L^2$. For the convenience of the reader we state the precise results we use about the theory of the TDHF equation in Appendix~\ref{sec:Theory_TDHF}. In \cite{zbMATH00887075} the existence and uniqueness  of strong solutions to the von Neumann-Poisson equation, another nonlinear self-consistent time-evolution equation on density matrices, are proved with the use of a generalization of the Lieb-Thirring inequality. Another direction in which to generalize the Hartree equations is to consider, instead of an exchange term, a dissipative term in the Hartree equation; the existence and uniqueness of a solution for such an equation has been proved in \cite{MR2096738}.

\paragraph{One-particle Density Matrix.}
The notion of proximity of two states we use in this article is defined
by expectation values of $k$-particle observables, where $1 \leq k \ll
N$. More specifically, if $\Psi_{t} \in \fhn$ is the (normalized)
solution to (\ref{eq:Schrodinger-N-body}) and $\Phi_{HF,t} =
\varphi_{t,1} \wedge \cdots \wedge \varphi_{t,N}$, where $\varphi_{t,1}, \dots,
\varphi_{t,N}$ are the solutions to (\ref{eq:hartree-fock_tradi}), then,
for any $k$-particle operator $A^{(k)}$ (i.e., for any bounded
operator $A^{(k)}$ on $\fh^{\wedge k} := \mathcal{A}[\fh^{\otimes k}]$), 
we wish to control the quantity
\begin{align*}
\delta_{t}^{(k)}\big(A^{(k)}\big) 
\ := \
\frac{1}{\|A^{(k)}\|_{\infty}}
\big| \langle \Psi_{t} , \; (A^{(k)}\otimes \Id_{N-k}) \Psi_{t} \rangle
- \langle \Phi_{HF,t} , \; (A^{(k)}\otimes\Id_{N-k}) \Phi_{HF,t} \rangle
\big| \,.
\end{align*}
Here $\Id_{N-k}$ denotes the identity operator on $\fh^{\otimes
  (N-k)}$ and $\|\cdot\|_{\infty}$ denotes the operator norm on
$\mathcal{B}[\fh^{\wedge k}]$.

It is more convenient to reformulate this approach in terms of reduced
density matrices. We recall that, given $\Psi\in\fhn$,
the corresponding reduced $k$-particle density matrix is the
trace-class operator $\gamma_{\Psi}^{(k)}$ on $\fhk {k}$ whose kernel is
given by
\begin{multline}\label{eq:p-particle_density_matrix}
\gamma_{\Psi}^{(k)}(x_{1},\dots,x_{k};y_{1},\dots y_{k})
\\ =
\frac{N!}{(N-k)!} \int\overline{\Psi(x_{1},\dots x_{k},x_{k+1},\dots x_{N})}
\,\Psi(y_{1},\dots y_{k},x_{k+1},\dots x_{N}) \; \dx_{k+1} \cdots \dx_{N}\,.
\end{multline}
Note that we normalize the reduced density matrices so that
$\Tr\gamma_{\Psi}^{(k)}=\frac{N!}{(N-k)!}$.  We may then rewrite
$\delta_{t}^{(k)}(A^{(k)})$ as
\[
\delta_{t}^{(k)}\big(A^{(k)}\big)
=
\frac{1}{\|A^{(k)}\|_{\mathcal{B}(\fhk{{k}})}}
\Big| \Tr\big[ (\gamma_{\Psi_{t}}^{(k)} - \gamma_{\Phi_{HF,t}}^{(k)})A^{(k)} 
\big] \Big|
\]
and observe that 
\[
\sup_{A^{(k)}\in\mathcal{B}(\fhk{{k}})} \delta_{t}^{(k)}\big(A^{(k)}\big)
=
\big\| \gamma_{\Psi_{t}}^{(k)} - 
\gamma_{\Phi_{{HF,t} }}^{(k)} \big\|_{\mathcal{L}^{1}}\,,
\]
where $\|\cdot\|_{\mathcal{L}^{1}}$ denotes the trace norm. We are thus interested
in bounds on $\| \gamma_{\Psi_{t}}^{(k)} -
\gamma_{\Phi_{HF,t}}^{(k)}\|_{\mathcal{L}^{1}}$. In the present article we restrict
ourselves to the case $k=1$. 

\paragraph{Derivation of the TDHF Equation.}

The derivation of the TDHF equation
  may be seen as part of the quest for a derivation of macroscopic, or
  mesoscopic, dynamics from the microscopic classical or quantum-mechanical dynamics of many-particle systems as an effective theory. Let us first discuss some generally interesting examples and then come to the case of the TDHF equation for fermions.

In the case of the dynamics of $N$ identical quantum-mechanical particles, the time-dependent Hartree equation, that is the TDHF equation \eqref{eq:hartree-fock_tradi} without the exchange term, was first derived rigorously in~\cite{MR578142} for a system of $N$ distinguishable particles in the mean-field limit. For systems of indistinguishable particles, the case of bosons has received considerable attention compared to the case of fermions, and several methods have been developed. The so-called Hepp method has been developed in~\cite{MR0332046,MR530915,MR539736} in order to study the classical limit of quantum mechanics. It inspired, among others, \cite{zbMATH01645814}, where the convergence to the Hartree equation is proved, \cite{MR2530155}, where the rate of convergence toward mean-field dynamics is studied, and \cite{MR2465733,MR2513969}, where the propagation of Wigner measures in the mean-field limit is studied, with special attention to the relationships with microlocal and semiclassical analysis. In this direction, with a stochastic microscopic model, the linear Boltzmann equation was obtained as a weak-coupling limit in~\cite{Breteaux20141037} yielding an example for a derivation of an equation with non-local terms using methods of pseudodifferential calculus. The derivation of the linear Boltzmann equation in the earlier work~\cite{MR1744001}, along with the series of works following it, used  a different method based on series expansions in terms of graphs similar to Feynman diagrams. The result is valid on longer time-scales than in \cite{Breteaux20141037}, but with more restrictive initial data. Other limit dynamics have been obtained, a particularly interesting one is the weak-coupling limit for interacting fermions for which a (non-rigorous) derivation of the nonlinear Boltzmann equation has been given in~\cite{MR2083147}. Series expansion methods and the Bogoliubov-Born-Green-Kirkwood-Yvon (BBGKY) hierarchy have also proved fruitful in other works, e.g.,  \cite{MR578142,MR1869286,MR2209131,MR2331036,MR2276262,MR2802894,MR2680421,MR2525781}. In \cite{MR2680421,MR2525781} the Gross-Pitaevskii equation, which describes the dynamics of a Bose-Einstein condensate has been derived. Also for the Gross-Pitaevskii equation the formation of correlations has been studied in \cite{MR2511665}, providing information on the structure of solutions to the Gross-Pitaevskii equation. The techniques developed in \cite{MR2755061} to study the weakly nonlinear Schr\"odinger equation are used in \cite{MR2518987} to derive quantum kinetic equations; those techniques resemble the BBGKY hierarchy methods, but they do not impose the normal ordered product of operators when considering expectation with respect to the initial state. The bounds on the rate of convergence in the mean-field limit given in \cite{zbMATH01645814} have been sharpened in \cite{MR2518972} using a method inspired by Lieb-Robinson inequalities. Another method introduced in \cite{MR2313859} shows that the classical time evolution of observables commute with the Wick quantization up to an error term which vanishes in the mean-field limit, yielding an Egorov-type theorem. Recently a new method based on a Gr\o nwall lemma for a well-chosen quantity has been introduced \cite{MR2821235,MR2657816} in the bosonic case, which considerably simplifies the convergence proof for the Hartree equation.

In the fermionic case, the TDHF equation has been derived in \cite{MR1996777} in the 
$\lambda=N^{-1}$ scaling for initial data close to Slater determinants, and
with bounded two-body potentials. The same authors give bounds on the accuracy
of the TDHF approximation for uncorrelated initial states in
\cite{MR2054172}, still with a bounded two-body potential. For the same scaling, the TDHF equation has been derived in \cite{MR2841931} for the
Coulomb potential for sequences of initial states given by Slater determinants. 
The semi-classical mean-field scaling from \eqref{eq:Def-Semi-Classical-Mean-Field} has first been considered in \cite{narnhofer:1981} where it is shown that for suitably regular interactions the Schr\"odinger dynamics is close to the classical Vlasov dynamics. The results have been improved in \cite{spohn:1981}. 
In \cite{MR2092307}, in the semi-classical mean-field scaling, the closeness of the Schr\"odinger dynamics to the Hartree-Fock dynamics was discussed and bounds for
the Husimi function were given, assuming the
potential to be real-analytic and thus in particular bounded. Up to that point all the method used to derive the TDHF equation had always been based on BBGKY hierarchies. In
\cite{2013arXiv1305.2768B,2013arXiv1311.6270B} estimates of
$\|\gamma_{N,t}-\ghf_{N,t}\|_{\mathcal{L}^{1}}$ were given in terms
of the number $N$ of electrons and the time $t$, in the semi-classical mean-field scaling. Their method is based on the Gr\o nwall lemma, similarly to \cite{MR2821235} in the bosonic case. The second article deals with the semi-relativistic case. The authors pointed out that with a bounded potential, in this scaling, the exchange term in the
time-dependent Hartree-Fock equation does not play a role so that the time-dependent Hartree-Fock equation reduces to the time-dependent Hartree
equation.
In \cite{Petrat:2014fk}, the fermionic Hartree equation in the large volume case is considered by generalizing the method of \cite{MR2821235}. Interactions of the form $|x|^{-s}$ are considered, with the corresponding $\lambda=N^{-1+s/3}$. Under the condition that the Hartree-Fock kinetic energy per particle is bounded uniformly in time, a derivation of the TDHF equation is given for $0<s<3/5$, and for Coulomb interaction with either a mild singularity cutoff on a ball with radius $N^{-1/6+\varepsilon}$, for any $\varepsilon>0$, or for the full Coulomb interaction under certain Sobolev conditions on the solution to the TDHF equation which are not proven in this work. Explicit bounds in terms of $N$, the Hartree-Fock kinetic energy and $t$ are given. Furthermore, in \cite{Petrat:2014fk}, the main result of \cite{2013arXiv1305.2768B} is reproduced with a different method than in \cite{2013arXiv1305.2768B} and written down for weaker conditions on the closeness of the initial state to a Slater determinant.

\paragraph{Main Estimate of this Article (see Theorem~\ref{thm:estimate-trace-norm}).} 
Given a normalized initial state $\Psi_{0} \in \fhn$ and the one-particle density matrix
  $\ghf_0 \equiv \gamma_{\Phi_{HF,0}}$ associated with a Slater
  determinant $\Phi_{HF,0} = \varphi_{1,0} \wedge \cdots \wedge
  \varphi_{N,0}$, with $\langle \varphi_{i,0}, \varphi_{j,0} \rangle$
  being orthonormal orbitals in $\sobolev$, $\gamma_t$ the one-particle density matrix of the solution $\Psi_t$ 
   to \eqref{eq:Schrodinger-N-body} and $\ghf_t$ the solution to
  \eqref{eq:hartree-fock_density_matrix} obey the trace norm estimate 
\begin{equation}\label{eq:main_estimate}
\frac{1}{N} \|\gamma_{t}-\ghf_t\|_{\mathcal{L}^{1}} \leq \sqrt{8}\sqrt{N^{2/3} \frac{1}{N} \|\gamma_{0}-\ghf_0\|_{\mathcal{L}^{1}} \exp (C_{\lambda,N,K} t)+ N^{-1/3} \big( \exp (C_{\lambda,N,K} t)-1 \big) } \, ,
\end{equation}
with $C_{\lambda,N,K} = 30 \lambda \sqrt{K} N^{1/6}$, where $K$ is a bound on the kinetic energy of $\ghf_t$ which is
assumed to be uniform in time (see
\eqref{eq:K-uniformly-bounded}).

\paragraph{Discussion of the Results.}
Roughly speaking, the estimate \eqref{eq:main_estimate} implies that, starting from a state close to a Slater determinant for the $N$-body Schr\"odinger equation and from the corresponding  one-particle density matrix for the TDHF equation, the Hartree-Fock approximation is justified up to times of order~$(\lambda \sqrt{K} N^{1/6})^{-1}$, where $K$ is the kinetic energy (which, for repulsive systems, is bounded by the total energy of the system, uniformly in time) and $\lambda$ the coupling constant. Hence, our assumption on the initial state is given in terms of energy, and not in the form of ``increasing'' sequences of Slater determinants. This assumption seems more natural to the authors as it is closer to a thermodynamic assumption on the system. In our proof we obtain a rate of convergence of $N^{-1/6}$. For the initial data, in order to have convergence, we can allow states with $N^{-1/3}\|\gamma_{0}-\ghf_0\|_{\mathcal{L}^{1}} \to 0$ for $N\to\infty$. This means, e.g., that, for any $\varepsilon>0$, the initial state can have $N^{1/3-\varepsilon}$ particles outside the condensate, i.e., the Slater determinant structure.

The fact that the estimate~\eqref{eq:main_estimate} is relevant when $\lambda N^{1/6}K^{1/2}t$ is of order one, restricts its applicability to a regime where the kinetic energy dominates the direct and exchange terms. This implies that the evolution is the free evolution to leading order. Estimate~\eqref{eq:main_estimate} captures the subleading effect of the direct term on the dynamics and is thus relevant provided that $K\gg N^{4/3}$. We substantiate this by heuristic arguments in Appendix~\ref{sec:misc_appendix}. Let us stress that Estimate~\eqref{eq:main_estimate} requires no additional assumption on the initial states other than the Hartree-Fock kinetic energy to be finite. 
Furthermore, Estimate~\eqref{eq:main_estimate} applies to the repulsive or attractive Coulomb interaction, which is very relevant for many physical systems.

Compared to \cite{MR2841931}, where also the Coulomb potential was considered, our result holds for larger time scales.
In \cite{MR2841931}, the $\lambda=N^{-1}$-scaling was assumed and, by a rescaling in time and in space, the result also applies to a large neutral atom (i.e., with charge $N\gg 1$ and $\lambda=\alpha$). With the result of~\cite{MR2841931} the Hartree-Fock approximation is then justified up to times of order $N^{-2}$. Assuming we have a state with a  negative energy, the kinetic energy is controlled by a universal multiple of $N^{7/3}$ (see Sect.~\ref{sec:Main-Result} for more details), and our estimate allows us to justify the approximation up to much larger times, of order $N^{-4/3}$. (Note, however, that our estimate deteriorates if the energy of the state is higher.)

Compared to \cite{2013arXiv1305.2768B} where the semi-classical scaling \eqref{eq:Def-Semi-Classical-Mean-Field} is considered, our result allows us to control the approximation only up to times of order $N^{-1/3}$, whereas the estimates in~\cite{2013arXiv1305.2768B} allow one to control the approximation up to times of order $1$ (however, only for bounded two-body potentials). This comes from the fact that we do not assume any semi-classical structure on the initial data. Note that our strategy is similar to the one of~\cite{2013arXiv1305.2768B} since we do not use the BBGKY hierarchy but instead make use of a Gr\o nwall lemma. An important difference lies in the decomposition of the potential: in~\cite{2013arXiv1305.2768B} a Fourier decomposition is used whereas we use the Fefferman-de la Llave formula.

Let us compare our results to \cite{Petrat:2014fk} where the mean-field scaling for large volume \eqref{eq:Def-Mean-Field-Scaling} is considered. Note that there the Schr\"odinger dynamics is compared to the fermionic Hartree equation without exchange term. While in \cite{Petrat:2014fk} other interactions are also considered, for Coulomb interaction, essentially two results are proven. First, for regularized Coulomb interaction with singularity cut off on a ball with radius $N^{-1/6+\varepsilon}$ for any $\varepsilon>0$, convergence of the Schr\"odinger dynamics to the fermionic Hartree dynamics is shown in terms of a counting measure $\alpha_g$, with convergence rate depending on the cutoff. Note, that we use the same measure in our proof, see also Remark~\ref{rem:spectral_decomp}, but we formulate our main result only in terms of the trace norm difference of reduced densities. The improvement of our result is that it holds for full Coulomb interaction without any regularization and, in general, with a better convergence rate. For the second result in \cite{Petrat:2014fk} a bound on $\Tr[(-\Delta)^{3+\varepsilon}\ghf_t]$ is assumed. Under that condition convergence for full Coulomb interaction in terms of $\alpha_g$ and the trace norm difference is shown, with rate $N^{-1/2}$ in the trace norm sense. This bound on $\Tr[(-\Delta)^{3+\varepsilon}\ghf_t]$ was, however, not proven to hold for $t>0$. Compared to that, our result holds for any initial condition with kinetic energy bounded by $C N$, without further assumptions, but only with a convergence rate of $N^{-1/6}$ in the trace norm sense.

\paragraph{Sketch of our Derivation of Estimates on the Accuracy 
of the TDHF Approximation.}

We derive an estimate on the trace norm of the difference $\gamma_t-\ghf_t$
between the one-particle density matrix $\gamma_{t} \equiv
\gamma_{\Psi_t}$ of the (full) solution $\Psi_t = e^{-itH} \Psi_0$ of
\eqref{eq:Schrodinger-N-body} and the one-particle density matrix
$\ghf_t$ solving the TDHF equation
\eqref{eq:hartree-fock_density_matrix}. 
Our work is inspired by \cite{MR2821235}, where one of us developed a new method for bosonic systems which was generalized to fermion systems in \cite{Petrat:2014fk} by two of us. The method uses a Gr\"onwall estimate for a well-chosen quantity called the \textit{number of bad particles} in \cite{MR2821235}. We refer to the quantity we chose to control as the \textit{degree of evaporation} $S_g$. The subscript $g$ refers to a freedom in the choice of a weight function $g$ which allows us to fine-tune the distance of $\rho_t$ (the density matrix of $\Psi_t$) to $\ghft$ in a suitable way. For the simplest choice $g(x)=x$, $S_g$ is called the \textit{degree of non-condensation} in \cite[Remark~(a) on p.~5]{MR1301362}, while in \cite{1751-8121-42-8-085201} it is called \textit{Verdampfungsgrad}, which translates to \textit{degree of evaporation}.
 
We show that the degree of evaporation $S_g$ is directly related to the trace norm $\|\gamma-\ghf\|_{\mathcal{L}^{1}}$. 
We then calculate the time derivative of $S_g$ and split it into three terms that we estimate separately.
To obtain the estimates
we make
use of correlation inequalities which  may be
  seen to be a dynamical version of the correlation
estimate presented in
\cite{MR1175492}. (See also \cite{MR1301362} for
      an
       alternative proof of  that
     correlation estimate which does not make
       use of second quantization.) 
While we estimate two of the terms in a way very similar to \cite{Petrat:2014fk}, our estimate for the remaining term (here called $\mathcal{A}$; in \cite{Petrat:2014fk} called $(I)$) is very different and allows us to treat the full Coulomb potential. This term is physically the most important, since its smallness is a consequence of cancellations between the Hartree-Fock and the many-body interaction. The bounds on this term are the key estimates of this work. They are obtained by using the Fefferman-de la Llave decomposition formula
     \cite{MR864658}.  We
       remark that, in view of the generalization of this
       decomposition derived in \cite{MR1930084,MR1701408}, our
     result applies to a more general class of two-body interaction
     potentials. The Lieb-Thirring
     inequality~\cite{PhysRevLett.35.687} and Hardy's inequality then provide an estimate in
     terms of kinetic energy. Finally, we note that in many physically relevant
     cases the estimate in terms of kinetic energy can be stated in
     terms of an estimate on the initial total energy of the system.

\paragraph{Outline of the Article.}

In Sect.~\ref{sec:Main-Result} we state our main result, along with applications to molecules or the mean-field limit. In Sect.~\ref{sec:Relative-Entropy}
we introduce the degree of evaporation $S_g$ and relate it to the difference
between the one-particle density matrix of the solution to our model
and the solution to the TDHF equation. We then calculate the time derivative of $S_g$ and provide bounds for the different contributions, thus proving our main theorem.
In Appendix~\ref{sec:Theory_TDHF} we recall some results about 
the theory of the TDHF equation.

\section{Main Result and Applications}\label{sec:Main-Result}

Our main result is an estimate of the trace norm $\| \cdot \|_{\mathcal{L}^{1}}$ of the difference between the
one-particle density matrix of the solution to the many-body
Schr\"odinger equation \eqref{eq:Schrodinger-N-body} and the solution to
the time-dependent Hartree-Fock
equation \eqref{eq:hartree-fock_density_matrix} in terms of the
kinetic energy of the system. As usual, we denote by $\sobolev$ the
Sobolev space of weakly differentiable functions with
square-integrable derivative.

We henceforth make use of the following notation:
\begin{itemize}
\item
Let $\Psi_{0} \in \fhn$ be a normalized initial state, and let $\gamma_{t} := \gamma_{\Psi_{t}}$ be the one-particle density matrix of the solution $\Psi_{t} = e^{-iHt} \Psi_{0}$ to the Schr\"odinger equation \eqref{eq:Schrodinger-N-body} with Hamiltonian $H$ from \eqref{eq:Hamiltonian} (i.e., with Coulomb interaction).
\item 
Let $\Phi_{HF,0} = \varphi_{1,0} \wedge
  \cdots \wedge \varphi_{N,0}$ be a Slater determinant, with $\varphi_{j,0}
  \in \sobolev$ and
  $\langle\varphi_{j,0},\varphi_{k,0}\rangle_{\fh}=\delta_{jk}$, for $1
  \leq j,k \leq N$. Let $\ghf_0 := \gamma_{\Phi_{HF,0}}$ be the
  one-particle density matrix of $\Phi_{HF,0}$ and 
  $\ghf_t$ be the solution to the time-dependent Hartree-Fock
  equation \eqref{eq:hartree-fock_density_matrix} with initial
  condition $\ghf_0$.
\end{itemize}
  \begin{thm}
\label{thm:estimate-trace-norm}
Assume that the kinetic energy of $\ghf_t$ is uniformly bounded in time,
\begin{equation}
K \ := \ 
\sup_{t\geq0} \Tr[-\Delta\ghf_t]
\ < \ \infty \,.
\label{eq:K-uniformly-bounded}
\end{equation}
Under the assumption of \eqref{eq:K-uniformly-bounded}
the estimate
\begin{equation}\label{eq:main_result}
\frac{1}{N} \|\gamma_{t}-\ghf_t\|_{\mathcal{L}^{1}} \leq \sqrt{8}\sqrt{N^{2/3} \frac{1}{N} \|\gamma_{0}-\ghf_0\|_{\mathcal{L}^{1}} \exp (C_{\lambda,N,K} t)+ N^{-1/3} \big( \exp (C_{\lambda,N,K} t)-1 \big) } 
\end{equation}
holds true with $C_{\lambda,N,K} = 30 \lambda \sqrt{K} N^{1/6}$.
\end{thm}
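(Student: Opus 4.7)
The plan is to run a Grönwall argument on the degree of evaporation $S_g(t)$ that measures how far $\Psi_t$ has drifted out of the Slater determinant structure built from $\ghf_t$. Let $q_t := \Id_\fh - \ghf_t$ and, for each $1 \leq i \leq N$, let $\ghft^{(i)}$ and $q_t^{(i)}$ denote the lifts to $\fhn$ acting on the $i$-th factor. For a weight $g:\{0,\dots,N\}\to[0,1]$, introduce the symmetric operator $\hat g_t := g\big(N^{-1}\sum_{i=1}^{N} q_t^{(i)}\big)$ and set $S_g(t):=\langle \Psi_t, \hat g_t \Psi_t\rangle$. The first step is to check, by a standard Cauchy--Schwarz / spectral decomposition argument on $\gamma_t-\ghf_t$ (keeping track of the four blocks $\ghft\cdot\ghft$, $\ghft\cdot q_t$, $q_t\cdot\ghft$, $q_t\cdot q_t$), the quantitative bound $\frac{1}{N}\|\gamma_t-\ghf_t\|_{\mathcal{L}^1}^{\,2} \leq 8\,S_g(t)$ for the linear choice $g(k)=k/N$; this is the mechanism by which controlling $S_g$ controls the trace norm.

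Second, I will differentiate $S_g$ in time using $i\partial_t\Psi_t=H\Psi_t$ and the TDHF equation $i\partial_t\ghf_t=[h_{HF}^{(1)}(\ghf_t),\ghf_t]$, which gives
\begin{equation*}
\tfrac{d}{dt}S_g(t) \;=\; i\langle \Psi_t,\big[\,\tfrac{\lambda}{2}\sum_{j\neq k}v(x_j-x_k)-\sum_{j=1}^{N}h_{HF,j}^{(1)}(\ghf_t)\,+\,\textrm{const},\,\hat g_t\,\big]\Psi_t\rangle.
\end{equation*}
Inserting $\Id=\ghft^{(i)}+q_t^{(i)}$ on each factor, all one-particle contributions and the diagonal $\ghft\otimes\ghft$ piece of the interaction cancel, and what survives splits into three pieces $\mathcal{A}+\mathcal{B}+\mathcal{C}$ organised by the number of $q_t$ projectors pressed against the potential $v$: $\mathcal{A}$ contains the critical $\ghft\otimes\ghft$ vs.\ $\ghft\otimes q_t$ contrast where the direct and exchange cancellations with the mean field live, while $\mathcal{B},\mathcal{C}$ involve two or more $q_t$'s and are therefore already small in $S_g$.

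The core difficulty is controlling $\mathcal{A}$ with the Coulomb singularity and a price of only one kinetic-energy factor. The key device is the Fefferman--de la Llave representation
\begin{equation*}
\frac{1}{|x-y|} \;=\; \frac{1}{\pi}\int_0^{\infty}\!\frac{dr}{r^{5}}\int_{\mathbb{R}^3}\!dz\; \mathbbm{1}_{B_r(z)}(x)\,\mathbbm{1}_{B_r(z)}(y),
\end{equation*}
which turns $v(x_j-x_k)$ into a positive superposition of products of characteristic functions. This factorisation splits the two-body correlation occurring in $\mathcal{A}$ into a tensor product of one-body multiplication operators $M_{z,r}:=\mathbbm{1}_{B_r(z)}$, so that after a Cauchy--Schwarz the estimate reduces to controlling integrals of $\Tr[M_{z,r}\ghf_t M_{z,r}]$ and $\Tr[M_{z,r}\,q_t\gamma_t q_t\,M_{z,r}]$ against $r^{-5}\,dr\,dz$. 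Combining the Lieb--Thirring inequality for the fermionic density $\rho_{\ghf_t}$ with the Hardy inequality to handle the $r\to 0$ region, and a trivial $\|M_{z,r}\|\leq 1$ bound for $r$ large, yields a bound of the form $|\mathcal{A}|\leq 30\lambda\sqrt{K}\,N^{1/6}\,(S_g(t)+N^{-1/3})$, where the factor $N^{1/6}$ is the price of converting density-type quantities into the weight $g(k)=k/N$.

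Finally, since $\mathcal{B}$ and $\mathcal{C}$ are estimated in the same spirit as in~\cite{Petrat:2014fk} (they are already $O(S_g)$ up to the same constant), one gets the differential inequality $\tfrac{d}{dt}S_g(t)\leq C_{\lambda,N,K}\,(S_g(t)+N^{-1/3})$ with $C_{\lambda,N,K}=30\lambda\sqrt{K}N^{1/6}$. Grönwall's lemma then gives $S_g(t)\leq S_g(0)\,e^{C_{\lambda,N,K} t}+N^{-1/3}(e^{C_{\lambda,N,K}t}-1)$, and inserting the initial bound $S_g(0)\leq N^{-1}\|\gamma_0-\ghf_0\|_{\mathcal{L}^1}\cdot N^{2/3}/8\,+\ldots$ (coming from the same Cauchy--Schwarz-type inequality used in the reverse direction) together with the trace-norm comparison from Step~1 yields precisely~\eqref{eq:main_result}. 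The main obstacle, as stressed above, is the treatment of $\mathcal{A}$ for the unregularised Coulomb interaction: the Fefferman--de la Llave identity is essential because a Fourier split would not allow the required single kinetic-energy factor to absorb the singularity.
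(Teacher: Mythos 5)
Your overall strategy is the paper's: a Gr\"onwall argument on a counting functional $S_g$, the trace-norm comparison exploiting the rank-$N$ structure of $\ghft$, the splitting of $\frac{d}{dt}S_g$ into $\mathcal{A}+\mathcal{B}+\mathcal{C}$, and the Fefferman--de la Llave representation combined with Lieb--Thirring to handle the Coulomb singularity in $\mathcal{A}$. However, there is a genuine gap: you run the whole argument with the linear weight $g(k)=k/N$, and with that weight the term $\mathcal{C}_t = 2\Im \Tr\big[\GG\big((\ghftp \otimes\ghftp )v^{(2)}(\ghft\otimes\ghftp )\big)\, \rho_t^{[-1,1]}\big]$ cannot be bounded by $C\lambda\sqrt{K}N^{1/6}S_g$. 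After Cauchy--Schwarz one is left with a factor $\Tr[\GG(\ghftp\otimes\ghftp)\rho_t^{[1]}]^{1/2}$; for the linear weight the discrete derivative $g-\tau_1 g$ is constant, so the best available bound is $\langle\G(\ghftp)^2\rangle\leq N\langle\G(\ghftp)\rangle$, which produces $N^{1/2}$ in place of $N^{1/6}$ and destroys the claimed constant $C_{\lambda,N,K}=30\lambda\sqrt{K}N^{1/6}$. The paper's fix is the truncated weight $g_{1/3}(x)=N^{2/3}\,x\,1_{[0,N^{1/3}]}(x)+N\,1_{(N^{1/3},\infty)}(x)$, for which the pointwise inequality $x(x-1)\big(\tau_jg_{1/3}-\tau_kg_{1/3}\big)(x)\leq C\,N^{1/3}g_{1/3}(x)$ holds, yielding $\Tr[\GG(\ghftp\otimes\ghftp)\rho_t^{[1]}]\leq C N^{1/3}S_{g_{1/3}}$ and hence the $N^{1/6}$ rate. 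This choice is also what generates the $N^{2/3}$ factors in the final estimate, via $S_{g_{1/3}}(\rho_0,\ghf_0)\leq N^{2/3}\|\gamma_0-\ghf_0\|_{\mathcal{L}^1}$ and via the inhomogeneous term $N^{1-\theta}=N^{2/3}$ in the Gr\"onwall inequality; your sketch cannot reproduce either with the linear weight.

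Two smaller bookkeeping errors. First, the additive term $N^{-1/3}\big(e^{C_{\lambda,N,K}t}-1\big)$ does not come from $\mathcal{A}$ as you claim: in the paper $\mathcal{A}_t$ and $\mathcal{C}_t$ are bounded purely by multiples of $S_{g_{1/3}}$, and the inhomogeneous term is produced solely by $\mathcal{B}_t$, which is therefore not ``already $O(S_g)$'' as you assert. Second, Hardy's inequality enters in the bound for $\mathcal{C}_t$ (to control $\|f_{HF}*v^2\|_{\infty}\leq 4K$), not in the $r\to0$ region of the $\mathcal{A}_t$ integral; that region is handled by H\"older/Young with $\|1_{B(0,R)}\,v\|_{5/2}$ against $\|f_{HF}\|_{5/3}$ followed by optimization over $R$. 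Neither of these affects the viability of the method, but the weight-function issue does.
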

The proof of Theorem~\ref{thm:estimate-trace-norm} is postponed to Sect.~\ref{sec:Relative-Entropy}.
\begin{rem}
One of the ingredients of our proof is the Fefferman-de la Llave
decomposition of the Coulomb potential \cite{MR864658}
\begin{equation}
\frac{1}{|x|} \ = \ 
\int_{0}^{\infty} \frac{16}{\pi \, r^5} \,(1_{B(0,r/2)}*1_{B(0,r/2)})(x) \, dr\,,
\label{eq:Feffermann-de_la_Lave}
\end{equation}
an identity that holds for all $x \in \mathbb{R}^{3}\setminus\{0\}$,
where $1_{B(0,r/2)}$ is the characteristic function of the
ball of radius $r/2$ centered at the origin in $\mathbb{R}^{3}$. A
generalization of this decomposition to a class of two-body
interaction potentials $v$ of the form
\begin{equation}
v(x) \ = \ \int_{0}^{\infty}g_{v}(r)\,(1_{B(0,r/2)}*1_{B(0,r/2)})(x)\, dr\,,
\label{eq:Hainzl-Seiringer}
\end{equation}
with $x \in \mathbb{R}^{3}\setminus\{0\}$, was given in
\cite{MR1930084} under Assumption \ref{ass:Fefferman-delaLlave} below, and our proof largely generalizes to those
potentials $v$. More precisely, the assertion of
Theorem~\ref{thm:estimate-trace-norm} holds true and without any
change in the constants, if we replace the Coulomb potential by any
pair potential $v$ that satisfies
Assumption~\ref{ass:g(r)leq_r-5} below, which in particular implies $v(x) \leq |x|^{-1}$. Note that the assumption of semi-boundedness of $v$ is only used to ensure the global existence of a solution to the TDHF equation. One could drop it to study problems up to the time the solution to the TDHF blows up.
\end{rem}

\begin{assumption}
\label{ass:Fefferman-delaLlave} 
The function $v: \mathbb{R}^{3}\setminus\{0\} \to \mathbb{R}$ has the following properties:
\begin{itemize}
\item $v$ is a radial function, and there exists a function
  $\tilde{v} \in C^3[ (0,\infty); \mathbb{R}]$ such that $v(x) =
  \tilde{v}\big(|x|\big)$, for all $x \in \mathbb{R}^{3} \setminus
  \{0\}$,

\item $r^{m}\, \frac{d^m \tilde{v}}{dr^m}(r) \to 0$, as 
$r \to \infty$, for $m=0,1,2$,

\item $\lim_{R\to\infty} \int_{1}^{R} r^{3} \, g_{v}(r) \, dr$ exists,
  with $g_{v}(r) := \frac{2}{\pi} \frac{d}{dr} \big( \frac{1}{r}
  \frac{d^2 \tilde{v}}{dr^2}(r) \big)$.
\end{itemize}
\end{assumption}
\noindent
Note that $g_{|\cdot|^{-1}}(r) = \frac{16}{\pi} r^{-5}$ in case of the
Coulomb potential which is prototypical for the following further
assumption.
\begin{assumption}\label{ass:g(r)leq_r-5}
(With the same notation as in
Assumption~\ref{ass:Fefferman-delaLlave}.) The function $v:\mathbb{R}^{3}\to\mathbb{R}$  satisfies Assumption~\ref{ass:Fefferman-delaLlave}, $|g_{v}(r)|\leq
\frac{16}{\pi} r^{-5}$ and, for some $\mu\in \mathbb R$, $v(x)\geq \mu$ for all $x$.
\end{assumption}

\begin{rem}
Note that we actually prove a slightly stronger result in Theorem~\ref{thm:estimate-S} in terms of the degree of evaporation $S_g(t)$ (with properly chosen $g$), which is defined in Definition~\ref{def:S}. The way our result is formulated in Theorem~\ref{thm:estimate-S} can directly be compared to the results in \cite{Petrat:2014fk}.
\end{rem}
\begin{rem}
Note that the two summands in the square root on the right-hand side of \eqref{eq:main_result} come from different contributions which we call $\mathcal{A}_t$, $\mathcal{B}_t$ and $\mathcal{C}_t$ (and which are called $(I)$, $(II)$, $(III)$ in \cite{Petrat:2014fk}), see Proposition~\ref{pro:equation-dS_g/dt}. It is interesting to note that all three terms contribute to the first summand (which is proportional to $\|\gamma_{0}-\ghf_0\|_{\mathcal{L}^{1}}$) but only the $\mathcal{B}_t$ term contributes to the second summand.
\end{rem}
\begin{rem}\label{rem:proof_in_H1_is_enough}
Note that it is sufficient to prove Theorem~\ref{thm:estimate-trace-norm} with $\Psi_0$ in $\fhn \cap \sobolev^{\otimes N}$. A density argument then provides the result for a general $\Psi_0$ in $\fhn$.
\end{rem}

Let us discuss some cases when the assumption that the Hartree-Fock kinetic energy is uniformly bounded in time is satisfied. In Propositions \ref{cor:large-atoms} and \ref{cor:mean-field} we give
explicit bounds on the kinetic energy $K$ in terms of the energy
expectation value $\langle \Phi_{HF,0}, \, H \Phi_{HF,0} \rangle$ of the initial state $\Phi_{HF,0}$ and the ground state
energy for examples presented in Sect.~\ref{sec:Introduction}.
In the case of atoms or molecules this follows from known estimates, which we
now recall.

To formulate these, we denote the energy expectation value 
and the kinetic energy expectation value of a normalized 
wave function $\Psi \in \fhn \cap \sobolev^{\otimes N}$
by
\[
\mathcal{E}(\Psi) = \langle\Psi, \, H\Psi\rangle
\quad \text{and} \quad
\mathcal{K}(\Psi) = \bigg\langle\Psi, \, \Big(\sum_{j=1}^N -\Delta_j\Big) \Psi\bigg\rangle \,.
\]
For atoms and molecules the ground state energy $E_{gs}$ is defined as
\begin{align*}
E_{gs} = \inf\Big\{ \mathcal{E}(\Psi) \: \Big| &\:
\Psi \in \fhn \cap \sobolev^{\otimes N} , \ \|\Psi\|_{\fhn} = 1 \ , \\
& \quad R_1,\dots,R_M\in\mathbb R ^3, l\neq m\Rightarrow R_l\neq R_m
\Big\}\,.
\end{align*}
Equipped with this notation, we formulate the coercivity of the energy
functional on the Sobolev space of states with finite kinetic energy:
\begin{prop}
\label{pro:molecule-kinetik-leq-Total-energy}
Consider a neutral atom or a molecule
as in \eqref{eq:Def-Neutral-Atom} or \eqref{eq:Def-Molecule}.
If $E_{gs} \leq 0$ then
\begin{align*}
\mathcal{K}(\Psi) & \leq
\Big( \sqrt{\mathcal{E}(\Psi)-E_{gs}} + \sqrt{-E_{gs}} \Big)^{2}
\leq
2 \mathcal{E}(\Psi) + 4|E_{gs}|\,.
\end{align*}
\end{prop}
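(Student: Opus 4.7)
The plan is a scaling (virial) argument. For $\lambda>0$, I would introduce the rescaled state $\Psi_{\lambda}(x_{1},\dots,x_{N}):=\lambda^{3N/2}\Psi(\lambda x_{1},\dots,\lambda x_{N})$, which is still normalized in $\fhn$, together with the simultaneously rescaled nuclear positions $R/\lambda:=(R_{1}/\lambda,\dots,R_{M}/\lambda)$. A direct change of variables yields $\mathcal{K}(\Psi_{\lambda})=\lambda^{2}\,\mathcal{K}(\Psi)$, while every Coulomb contribution --- the nucleus-nucleus repulsion $\nu$, the electron-nucleus attractions $-\alpha Z_{m}/|x-R_{m}|$, and the electron-electron repulsions $|x_{j}-x_{k}|^{-1}$ --- is homogeneous of degree $-1$ in distance and, combined with $R\mapsto R/\lambda$, picks up a factor $\lambda$. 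Writing $V(\Psi):=\mathcal{E}(\Psi)-\tfrac{1}{2}\mathcal{K}(\Psi)$ for the total potential energy at the original nuclei, this produces
\[
E(\Psi_{\lambda};\,R/\lambda)\;=\;\tfrac{\lambda^{2}}{2}\,\mathcal{K}(\Psi)\,+\,\lambda\,V(\Psi).
\]

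Because $E_{gs}$ is by definition the infimum of $\mathcal{E}(\,\cdot\,)$ over all normalized wave functions and all admissible nuclear configurations, the right-hand side is $\geq E_{gs}$ for every $\lambda>0$. Substituting $V(\Psi)=\mathcal{E}(\Psi)-\tfrac{1}{2}\mathcal{K}(\Psi)$ and minimizing the resulting quadratic in $\lambda$ over $(0,\infty)$ --- the interior optimum sits at $\lambda_{\star}=\tfrac{1}{2}-\mathcal{E}(\Psi)/\mathcal{K}(\Psi)$, which is positive precisely when $\mathcal{E}(\Psi)<\mathcal{K}(\Psi)/2$; in the complementary regime the sought bound is immediate --- yields the key quadratic estimate
\[
\bigl(\tfrac{1}{2}\mathcal{K}(\Psi)-\mathcal{E}(\Psi)\bigr)^{2}\;\leq\;2\,\mathcal{K}(\Psi)\,|E_{gs}|.
\]

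Rewriting this as a quadratic inequality in $\sqrt{\mathcal{K}(\Psi)}$ and retaining the positive root produces the first inequality of the proposition. The second is then the elementary AM-GM bound $(\sqrt{a}+\sqrt{b})^{2}=a+b+2\sqrt{ab}\leq 2(a+b)$, applied with $a=\mathcal{E}(\Psi)-E_{gs}$ and $b=-E_{gs}$. The only delicate step I anticipate is the transformation law for the electron-nucleus attractions: it is only the joint rescaling of $\Psi$ together with the nuclear positions that makes every Coulomb term homogeneous of degree one in $\lambda$, and without this the clean quadratic structure in $\lambda$ would be lost. Once that point is in place the rest is routine algebra.
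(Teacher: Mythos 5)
The paper offers no argument of its own here --- its ``proof'' is the citation to \cite[p.~132]{MR2583992} --- and the scaling argument you give is exactly the one behind that citation, so in approach you and the source coincide. Your steps are correct up to and including the key quadratic estimate: with $V(\Psi)=\mathcal{E}(\Psi)-\tfrac{1}{2}\mathcal{K}(\Psi)$, the inequality $\tfrac{\lambda^{2}}{2}\mathcal{K}(\Psi)+\lambda V(\Psi)\geq E_{gs}$ for all $\lambda>0$ does yield $\bigl(\tfrac{1}{2}\mathcal{K}(\Psi)-\mathcal{E}(\Psi)\bigr)^{2}\leq 2\,\mathcal{K}(\Psi)\,|E_{gs}|$, and you are right that the simultaneous rescaling of the nuclear positions is the point that makes $E_{gs}$ (an infimum over nuclear configurations as well as over states) usable as a lower bound.

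The gap is in the final step. Solving $\tfrac{1}{2}\mathcal{K}-\mathcal{E}\leq\sqrt{2\mathcal{K}|E_{gs}|}$ as a quadratic in $\sqrt{\mathcal{K}}$ gives $\sqrt{\mathcal{K}}\leq\sqrt{2}\,\bigl(\sqrt{-E_{gs}}+\sqrt{\mathcal{E}-E_{gs}}\bigr)$, i.e.\ $\mathcal{K}(\Psi)\leq 2\bigl(\sqrt{\mathcal{E}(\Psi)-E_{gs}}+\sqrt{-E_{gs}}\bigr)^{2}$ --- an extra factor of $2$ relative to the first displayed inequality. Equivalently, your argument proves the stated bound for $\tfrac{1}{2}\mathcal{K}(\Psi)$, the kinetic energy as it actually enters $H$ (recall $h^{(1)}=-\Delta/2-\alpha Z/|x|$), not for $\mathcal{K}(\Psi)=\langle\Psi,\sum_{j}(-\Delta_{j})\Psi\rangle$ as defined in the paper. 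This is not something a sharper root extraction can fix: for the hydrogen ground state ($N=Z=1$) the virial theorem gives $\mathcal{K}=\langle-\Delta\rangle=\alpha^{2}=-2E_{gs}$, whereas the claimed right-hand side equals $-E_{gs}=\alpha^{2}/2$, so the proposition as literally printed fails by exactly that factor; the cited bound in \cite{MR2583992} is for the kinetic energy normalized as in the Hamiltonian. So either the left-hand side should be read as $\tfrac{1}{2}\mathcal{K}(\Psi)$ (or the right-hand side should carry a factor $2$), or your derivation must carry the factor $2$ through explicitly --- but the assertion that ``retaining the positive root produces the first inequality of the proposition'' is not correct as written.
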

\begin{proof}
See \cite[p.132]{MR2583992}.
\end{proof}
Using Proposition~\ref{pro:molecule-kinetik-leq-Total-energy} along
with the conservation of the total energy for both the Schr\"odinger
equation and the TDHF equation (see Appendix~\ref{sec:Theory_TDHF}) we get the following bound on the kinetic
energy.
\begin{prop}
Assume that $\Phi_{HF,0} = \varphi_{1,0} \wedge \cdots \wedge
\varphi_{N,0}$ is a Slater determinant, with $\varphi_{j,0} \in \sobolev$
and $\langle\varphi_{j,0},\varphi_{k,0}\rangle_{\fh}=\delta_{jk}$, for $1
\leq j,k \leq N$. Then, in 
the case of atoms or molecules as in \eqref{eq:Def-Neutral-Atom} or \eqref{eq:Def-Molecule}, 
\begin{align} \label{eq:K-estimate-vb2}
K & \ := \ \sup_{t \geq 0} \Tr[-\Delta \ghf_t] \ \leq \
 \Big( \sqrt{\mathcal{E}(\Phi_{HF,0})-E_{gs}} + \sqrt{-E_{gs}} \Big)^{2} \,.
\end{align}
Thus, if 
$\mathcal{E}(\Phi_{HF,0})\leq0$ then
\begin{align} \label{eq:KKK}
K \ \leq \ -4 E_{gs} \, .
\end{align}
\end{prop}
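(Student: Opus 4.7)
The plan is to combine three inputs: (i) the previous coercivity estimate, Proposition~\ref{pro:molecule-kinetik-leq-Total-energy}, applied at every time $t\geq0$ to the evolved Slater determinant $\Phi_{HF,t}$; (ii) the fact, recalled in the Introduction, that the TDHF flow preserves the Slater-determinant form of the initial state, so that $\Phi_{HF,t}=\varphi_{t,1}\wedge\cdots\wedge\varphi_{t,N}$ with orbitals still in $\sobolev$, and in particular the one-particle density matrix $\ghf_t$ associated with $\Phi_{HF,t}$ satisfies the identity $\mathcal{K}(\Phi_{HF,t})=\Tr[-\Delta\,\ghf_t]$; and (iii) the conservation of the TDHF total energy, $\mathcal{E}(\Phi_{HF,t})=\mathcal{E}(\Phi_{HF,0})$, recalled in Appendix~\ref{sec:Theory_TDHF}.

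With these ingredients in hand, I would write, for every $t\geq0$,
\[
\Tr[-\Delta\,\ghf_t]
=\mathcal{K}(\Phi_{HF,t})
\leq \Big(\sqrt{\mathcal{E}(\Phi_{HF,t})-E_{gs}}+\sqrt{-E_{gs}}\Big)^{2}
=\Big(\sqrt{\mathcal{E}(\Phi_{HF,0})-E_{gs}}+\sqrt{-E_{gs}}\Big)^{2},
\]
where the inequality in the middle uses Proposition~\ref{pro:molecule-kinetik-leq-Total-energy} (which requires $E_{gs}\leq0$, a standard fact for atoms and molecules with at least one positively charged nucleus, since test states with an electron far away already achieve negative energy). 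Taking the supremum over $t\geq0$ yields \eqref{eq:K-estimate-vb2}.

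For \eqref{eq:KKK} I would then argue purely algebraically: under the additional hypothesis $\mathcal{E}(\Phi_{HF,0})\leq0$ one has $\mathcal{E}(\Phi_{HF,0})-E_{gs}\leq -E_{gs}=|E_{gs}|$, so the bracket in \eqref{eq:K-estimate-vb2} is bounded by $2\sqrt{-E_{gs}}$, and squaring gives $K\leq 4(-E_{gs})=-4E_{gs}$.

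There is no genuine obstacle here; the only points that need a brief justification are the identity $\mathcal{K}(\Phi_{HF,t})=\Tr[-\Delta\,\ghf_t]$ (immediate from $\ghf_t=\sum_j|\varphi_{t,j}\rangle\langle\varphi_{t,j}|$ and orthonormality of the orbitals) and the applicability of Proposition~\ref{pro:molecule-kinetik-leq-Total-energy} at each $t$, which in turn requires that $\Phi_{HF,t}$ remain in $\fhn\cap\sobolev^{\otimes N}$. This last point is guaranteed by the global $\mathrm{H}^1$ well-posedness of the TDHF equation for Coulomb interactions recalled in Appendix~\ref{sec:Theory_TDHF}, together with the unitarity of the HF-propagator $U_{HF,t}$ noted after \eqref{eq:hartree-fock_HF-Hamiltonian}.
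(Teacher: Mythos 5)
Your argument is correct and is essentially the paper's own: the paper derives \eqref{eq:K-estimate-vb2} precisely by combining Proposition~\ref{pro:molecule-kinetik-leq-Total-energy} with conservation of the TDHF energy (and preservation of the Slater-determinant structure), and \eqref{eq:KKK} by the same elementary algebra you give. No gaps.
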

We also recall a known bound for the ground state energy, see
\cite{PhysRevLett.35.687} or \cite{MR2583992}, whose units we use.
\begin{prop}[Ground state energy of a molecule]
For a molecule with nuclei of charges $Z_1, \ldots, Z_M >0$ at
pairwise distinct positions $R_1, \ldots, R_M \in \mathbb{R}^3$, with
$\lambda=\alpha$, $\nu = \sum_{m<l} \alpha Z_m Z_l / |R_m-R_l|$ as in
\eqref{eq:Def-Molecule}, and $Z=\max\{Z_{1},\dots,Z_{M}\}$, the ground
state energy satisfies the bound
\[
0 \ < \ -E_{gs} \ \leq \ (0.231) \alpha^{2} N 
\bigg[1 + 2.16\, Z \Big( \frac{M}{N} \Big)^{1/3} \bigg]^{2} \,.
\]
\end{prop}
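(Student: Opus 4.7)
The plan is to establish the two parts of the statement separately: the strict negativity $-E_{gs} > 0$ is a variational fact, while the quantitative upper bound on $-E_{gs}$ is the content of the Lieb--Thirring stability-of-matter estimate. For the negativity, I would construct an explicit trial configuration: place the nuclei at pairwise well-separated positions $R_1,\dots,R_M$ so that the nuclear--nuclear repulsion $\nu$ becomes arbitrarily small, and pick a Slater determinant $\Psi$ of hydrogenic orbitals localized near each nucleus, respecting an electron allocation with $\sum_m Z_m \leq N$. The cross terms between distant clusters decay exponentially, so $\mathcal{E}(\Psi)$ equals a sum of single-atom energies plus arbitrarily small remainders; each atomic term being strictly negative gives $E_{gs} < 0$.

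For the quantitative bound I would follow the Lieb--Thirring strategy. Apply their kinetic energy inequality
\[
\Big\langle \Psi,\, \sum_{j=1}^N (-\Delta_j) \Psi \Big\rangle \,\geq\, \KTF \int_{\mathbb{R}^3} \rho_\Psi(x)^{5/3}\, dx,
\]
with $\rho_\Psi$ the one-particle density associated with $\Psi$, and combine it with a Lieb--Oxford/Baxter electrostatic inequality that bounds the two-body Coulomb repulsion from below by the direct term $\tfrac{\alpha}{2} \iint \rho_\Psi(x)\rho_\Psi(y)|x-y|^{-1}\,dx\,dy$ minus an $\int \rho_\Psi^{4/3}$ correction. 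Together with the nuclear attraction this yields $\langle \Psi, H\Psi\rangle \geq \mathcal{E}^{TF}(\rho_\Psi) + \nu$, where $\mathcal{E}^{TF}$ is the Thomas--Fermi functional of the molecule. Minimization of $\mathcal{E}^{TF}$ over $\rho \geq 0$ with $\int \rho = N$ is a classical convex problem, and Teller's no-binding theorem dominates the molecular TF minimum by a sum of atomic TF energies, each explicitly bounded by a multiple of $\alpha^2 Z_m^{7/3}$. Regrouping with an inequality of the form $\sum_m Z_m^{7/3} \leq Z^{4/3}\sum_m Z_m$ and optimizing the allocation of $N$ electrons among the atomic subsystems packages the bound in the form $\alpha^2 N[1 + 2.16\, Z(M/N)^{1/3}]^2$.

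The chief obstacle is tracking sharp numerical constants through the chain Lieb--Thirring $\to$ Lieb--Oxford $\to$ Thomas--Fermi minimization $\to$ no-binding: each step introduces an optimizable constant, and the specific values $0.231$ and $2.16$ in the statement depend on using near-sharp versions. A geometric subtlety is that the bound must hold uniformly over all admissible nuclear positions $R_m$; it is Teller's no-binding monotonicity that decisively reduces the $R_m$-dependent molecular estimate to the geometry-independent atomic one, after which only the combinatorics of the $N$-and-$M$ dependence remains.
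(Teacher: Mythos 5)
Your outline is essentially the paper's own ``proof'': the paper gives none, simply citing Lieb--Thirring \cite{PhysRevLett.35.687} and the Lieb--Seiringer monograph \cite{MR2583992}, and the chain you describe (Lieb--Thirring kinetic inequality, a Lieb--Oxford-type bound on the indirect Coulomb energy, reduction to the Thomas--Fermi functional, Teller's no-binding theorem, and the atomic bound of order $\alpha^{2}Z_m^{7/3}$) is exactly the argument in those references, with the constants $0.231$ and $2.16$ coming from the near-sharp kinetic and exchange constants used there. The only (harmless) imprecision is in the easy part: the electrostatic cross terms between well-separated clusters in your trial state for $E_{gs}<0$ decay only polynomially in the separation, not exponentially, but since $E_{gs}$ is defined as an infimum over the nuclear positions they still vanish in the limit, which is all that is needed.
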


\begin{prop}[Neutral atom]
\label{cor:large-atoms} 
In case of an atom with $N=Z$ the ground state energy satisfies 
\[
0 \ < \ -E_{gs} \ \leq (2.31) \alpha^{2} \, N^{7/3} \,.
\]
\end{prop}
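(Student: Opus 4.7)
The plan is to derive this bound as a direct specialization of the preceding Proposition on the ground state energy of a molecule. A neutral atom is precisely the case $M=1$, $Z_1 = Z = N$, so the only thing to verify is that substituting these values into the molecular bound and simplifying the resulting arithmetic yields the constant $2.31$ claimed here.

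Concretely, setting $M=1$ and $Z = N$ in the inequality
\[
0 \ < \ -E_{gs} \ \leq \ (0.231)\,\alpha^{2}\,N\,\bigl[\,1 + 2.16\,Z\,(M/N)^{1/3}\,\bigr]^{2}
\]
gives the bracket $[\,1 + 2.16\,N^{2/3}\,]^{2}$. For $N \geq 1$ one has $1 \leq N^{2/3}$, so this bracket is bounded by $(3.16)^{2}\,N^{4/3}$. Since $0.231 \cdot (3.16)^{2} \approx 2.307 \leq 2.31$, multiplying by the prefactor $\alpha^{2} N$ yields the claimed estimate $-E_{gs} \leq (2.31)\,\alpha^{2}\,N^{7/3}$. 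The strict positivity $-E_{gs} > 0$ is inherited verbatim from the molecular case.

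There is no real obstacle: the result is arithmetic on top of the molecule proposition, and no further variational or operator-theoretic input is needed. The only mild point to be careful about is ensuring the constant $2.31$ absorbs the small excess $0.231 \cdot (3.16)^{2} - 2.307 > 0$, which it does once one uses $N \geq 1$ to collapse the ``$1$'' inside the bracket into the $N^{2/3}$ term.
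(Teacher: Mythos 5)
Your proof is correct and is essentially the paper's (implicit) argument: the proposition is just the molecule bound specialized to $M=1$, $Z=N$, with the bracket $\bigl[1+2.16\,N^{2/3}\bigr]^{2}$ absorbed into $(3.16)^{2}N^{4/3}$ using $N\geq 1$, and $0.231\cdot(3.16)^{2}\approx 2.307\leq 2.31$. Nothing further is needed.
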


\begin{prop}[TDHF equations without external potential and with repulsive interaction]
\label{cor:mean-field} 
For $h^{(1)}=-\Delta/2$ and $v(x)=|x|^{-1}$, the Hartree-Fock kinetic energy is bounded by the total Hartree-Fock energy (for any $\lambda>0$), which is preserved in time, i.e., 
\[
K \ \leq \ \mathcal{E}(\Phi_{HF,0}) \,.
\]
\end{prop}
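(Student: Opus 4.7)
The plan is to combine two ingredients: conservation of the Hartree-Fock energy along the TDHF flow, and non-negativity of the repulsive Coulomb interaction energy on Slater determinants. I would begin by recalling from Appendix~\ref{sec:Theory_TDHF} that for $\mathrm{H}^1$ initial orbitals the TDHF flow preserves the total Hartree-Fock energy. With the assumptions $\nu=0$, $h^{(1)}=-\Delta/2$ and $v(x)=|x|^{-1}$, this conserved quantity reads
\[
\mathcal{E}(\Phi_{HF,t}) \;=\; \tfrac{1}{2}\Tr[-\Delta\,\ghf_t] \;+\; \tfrac{\lambda}{2}\int\frac{\ghf_t(x;x)\,\ghf_t(y;y) - |\ghf_t(x;y)|^{2}}{|x-y|}\,\dx\,\dy,
\]
and it equals $\mathcal{E}(\Phi_{HF,0})$ for every $t\ge 0$.

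The heart of the argument is to show that the interaction integral above is non-negative. Since $\ghf_t$ is a positive semi-definite operator on $\fh$, its integral kernel satisfies the pointwise (a.e.) Cauchy-Schwarz bound $|\ghf_t(x;y)|^{2}\le\ghf_t(x;x)\,\ghf_t(y;y)$; combined with $|x-y|^{-1}\ge 0$, this makes the numerator of the integrand non-negative. Hence the interaction contribution to $\mathcal{E}(\Phi_{HF,t})$ is non-negative, the kinetic term $\tfrac{1}{2}\Tr[-\Delta\,\ghf_t]$ is bounded by $\mathcal{E}(\Phi_{HF,0})$ for every $t$, and taking the supremum over $t\ge 0$ yields the claimed bound on $K$.

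I do not anticipate a genuine obstacle here. Both ingredients are standard: energy conservation is part of the TDHF well-posedness theory recalled in the appendix, and the positivity is a structural feature of repulsive pair potentials tested against positive density matrices (the same pointwise bound underlies the non-negativity of the direct-minus-exchange integral in static Hartree-Fock theory). The only minor technical point is that the direct and exchange integrals must each be finite along the flow, which is guaranteed by the preservation of $\mathrm{H}^1$ regularity of the orbitals $\varphi_{t,j}$ stated in Appendix~\ref{sec:Theory_TDHF}.
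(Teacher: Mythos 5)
Your approach---conservation of the Hartree--Fock energy along the TDHF flow combined with pointwise non-negativity of the direct-minus-exchange integrand---is exactly the argument this proposition rests on; the paper states it without any written proof, so there is nothing more specific to compare against. Both of your ingredients are sound: the conservation law is the third bullet of the theorem quoted in Appendix~\ref{sec:Theory_TDHF}, and the bound $|\ghf_t(x;y)|^{2}\le\ghf_t(x;x)\,\ghf_t(y;y)$ is the standard Cauchy--Schwarz estimate for the kernel of a non-negative operator, which together with $v\ge 0$ makes the interaction energy non-negative.

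The one place your write-up does not close is the final inference. With the paper's normalization $K=\sup_{t\ge0}\Tr[-\Delta\ghf_t]$ from \eqref{eq:K-uniformly-bounded} and $h^{(1)}=-\Delta/2$, your (correct) intermediate conclusion $\tfrac12\Tr[-\Delta\ghf_t]\le\mathcal{E}(\Phi_{HF,0})$ gives, after taking the supremum over $t$, the bound $K\le 2\,\mathcal{E}(\Phi_{HF,0})$ rather than $K\le\mathcal{E}(\Phi_{HF,0})$; the sentence ``taking the supremum \ldots yields the claimed bound on $K$'' silently drops this factor of $2$, and no refinement of the positivity argument can recover it (the interaction energy can be made arbitrarily small at fixed kinetic energy). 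You should either state the conclusion as $K\le 2\,\mathcal{E}(\Phi_{HF,0})$---which is all that matters for Theorem~\ref{thm:estimate-trace-norm}, where only the order of magnitude of $K$ enters---or note explicitly that the displayed inequality holds for the kinetic part of the energy $\tfrac12\Tr[-\Delta\ghf_t]$ rather than for $K$ as defined. The discrepancy is arguably already present in the proposition as stated, but a proof should flag it rather than reproduce it.
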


Finally, let us note that for attractive Coulomb interaction without external field, we have the bound
\begin{equation}\label{eq:bound_on_e_kin}
K \ \leq \ 2\mathcal{E}(\Phi_{HF,0}) + C \lambda^2 N^{7/3} \, ,
\end{equation}
which follows from the Lieb-Thirring inequality and which we prove in Appendix~\ref{sec:misc_appendix}. Thus, also for attractive interaction, the bounds $K\leq C N$ in the mean-field scaling for large volume \eqref{eq:Def-Mean-Field-Scaling} and $K\leq C N^{5/3}$ in the semi-classical mean-field scaling \eqref{eq:Def-Semi-Classical-Mean-Field} hold, if the corresponding bounds hold for the total energy.

\section{Control of the Degree of Evaporation $S_g$}\label{sec:Relative-Entropy}

We first introduce the degree of evaporation $S_g$, which is a function of a state on the Fock space and a one-particle density matrix. We use $S_g$ as an indicator of closeness of the Hartree-Fock to the Schr\"odinger quantum state.

\subsection{Definition and Properties of the Degree of Evaporation}
For $A$ and more generally $B^{(M)}$ ($M\leq N$) linear operators acting on $\fh$
and $\fhk M$, respectively, we use the notation 
\begin{equation}
\label{eq:def_dGamma_dGammatwo}
\G(A) := \sum_{j=1}^{N} A_{j} 
\qquad\text{and}\qquad
{\G}^{(M)}(B^{(M)}) := \sum_{\substack{j_1,\ldots,j_M=1\\ j_1\neq j_2 \ldots \neq j_M} }^{N} B_{j_1\ldots j_M}^{(M)} \,,
\end{equation}
as operators on $\fhn$, with $A_{j}$ acting on the $j^{th}$ factor in
$\fh^{\otimes N}$ and $B^{(M)}$ acting on the $j_1^{th},\ldots, j_M^{th}$ factors in $\fh^{\otimes N}$, respectively.

\begin{rem}Although we do not use the fermion creation and annihilation operators $a^*$, $a$, note that \eqref{eq:def_dGamma_dGammatwo} coincides with the second quantization $\G$ in quantum field theory in the sense that
\[
\G(A)=\int A(x;y) \ a^*(x)a(y) \ dxdy \, ,
\]
or, more exactly, its restriction to the $N$-particle sector. Similarly, e.g., 
\[
\GG(B^{(2)})=\int B^{(2)}(x_1,x_2;y_1,y_2) \ a^*(x_2)a^*(x_1)a(y_1)a(y_2) \ dx_1dx_2dy_1dy_2 \ .
\]
\end{rem}

Let $\mathcal{L}^{1}(\mathfrak{H})$ denote the space of trace class operators on a Hilbert space $\mathfrak{H}$. We use the partial trace $\Tr_{2}: \mathcal{L}^{1}(\fh^{\otimes2}) \to
\mathcal{L}^{1}(\fh)$ which is defined for 
$B^{(2)}\in\mathcal{L}^{1}(\fh^{\otimes2})$ to be the operator 
$\Tr_{2}[B^{(2)}] \in \mathcal{L}^{1}(\fh)$ such that
\begin{equation}
\Tr\big[ \Tr_{2}(B^{(2)}) \, A \big]
\ = \ 
\Tr\big[ B^{(2)} \, ( A \otimes \Id_{\fh}) \big]
\label{eq:def-partial-trace}
\end{equation}
holds for all $A\in\mathcal{B}(\fh)$.
\begin{defn}\label{one_two_particle_density_matrix}
For an $N$-particle density matrix $\rho \in \mathcal{L}_+^1(\fhn)$,
i.e., a non-negative trace-class operator on $\fhn$ of unit trace, the
one- (resp.\ two-)particle density matrix of $\rho$ is denoted by $\gamma_{\rho}$
(resp. $\gamma_{\rho}^{(2)}$). It is the operator on $\fh$ (resp. $\fhk
2$) such that 
\begin{align} \label{eq:def-1pdm}
\forall A \in \mathcal{B}(\fh): \quad &
\Tr_{\fhn}[\rho\, \G(A)] 
 \ = \
\Tr_{\fh}[\gamma_{\rho} \, A] \,,
\\ \label{eq:def-2pdm}
\forall B^{(2)} \in \mathcal{B}(\fhk 2): \quad & 
\Tr_{\fhn}\big[\rho \, \GG(B^{(2)}) \big] 
 \ = \ 
\Tr_{\fhk 2}\big[ \gamma_{\rho}^{(2)} \, B^{(2)} \big] \,.
\end{align}
\end{defn}
We note that $\gamma_{\rho}$ and $\gamma_{\rho}^{(2)}$ satisfy 
\begin{equation} \label{eq:prop_p-particle_density_matrix}
0 \leq \gamma_{\rho} \leq \Id \,, 
\ \ 
\Tr_{\fh}[\gamma_{\rho}] = N 
\,, \quad
0 \leq \gamma_{\rho}^{(2)} \leq N \,, 
\ \ \Tr_{\fhk 2}[\gamma_{\rho}^{(2)}] = N(N-1) \,
\end{equation}
(see, e.g., \cite[Theorem~5.2]{Bach2013}). Further note that we are
slightly abusing notation since the one-particle density matrix was
defined for wave functions in Eq.~\eqref{eq:p-particle_density_matrix}, rather than for density matrices. We
thus identify $\gamma_\Psi \equiv
\gamma_{|\Psi\rangle\langle\Psi|}$, for all normalized $\Psi \in \fhn$,
whenever this does not lead to confusion.

\begin{defn}
\label{def:S} Let $N\in\mathbb{N}$ and 
\[
\dmn \ := \ 
\big\{ \eta \in \mathcal{L}^{1}(\fh) \, \big| \;
0 \leq \eta \leq \Id \, , \ \Tr[\eta] = N \big\} \,,
\]
and $g$ be a continuous function from $\mathbb R$ to $\mathbb R$. The map $S_g: \mathcal L_+^1(\fhn) \times \dmn \to \mathbb{R}_0^+$ defined by
\begin{align}
S_g(\rho,\eta) 
\ := \
\Tr[\rho \   \hat g]
\,,
\label{eq:Def-S}
\end{align}
where $\hat{g}:=g\big(\G(\Id-\eta )\big)$, is called the degree of evaporation (of $\rho$ relative to $\eta$). The translation of $g$ by $k\in\mathbb{Z}$, is denoted by $\tau_{k}g(y):=g(y-k)$. 
\end{defn}

Note that when the expression of $g$ is too long to fit under a hat, we write $(g)^{\wedge}$ instead of $\hat{g}$.

\begin{rem}
 $\hat{g}:=g\big(\G(\Id-\eta )\big)$ is defined using the functional calculus \cite{MR0493419,MR0071727}. 
\end{rem}
\begin{rem}\label{rem:spectral_decomp}The particular case in which $\eta$ is a rank-$N$ projector is of importance in the sequel, and we then write 
\[
p:=\eta \qquad \text{and} \qquad q:=\Id-\eta \, .
\]
In this case, only the values of $g$ on $\{ 0,\ldots,N \}$ are relevant for the definition of $\hat{g}$ and the continuity assumption on $g$ can be dropped. We can then give an alternative and equivalent viewpoint using the orthogonal projections
\[
P_{m}^{(M)}:=\sum_{\substack{a\in\{0,1\}^{M}_0\\|a|=m}}\bigotimes_{\ell=1}^{M}\big((1-a_{\ell})\ghfnot+a_{\ell}\ghfnotp\big) 
= 1_{\{m\}}\big(\G(\ghfnotp)\big)
\]
on $\fh^{\otimes M}$, with $|a|=a_{1}+\cdots+a_{M}$, for $M\in\mathbb{N}$ and $m\in\mathbb Z$. (Note that with this definition $P_m^{(M)}=0$ for $m\notin \{0,\dots,M\}$.) We can then write down the spectral decomposition of $\G(\ghfnotp)$,
\begin{equation}\label{eq:spectral_decomp}
\G(\ghfnotp) = \sum_{n\in\mathbb Z} n\, P_{n}^{(N)}\,,
\end{equation}
i.e., $P_{n}^{(N)}$ is the projection on the eigenspace of $\G(\ghfnotp)$ associated with the eigenvalue $n \in \mathbb Z$.
It follows that
\begin{equation}
S_g(\rho,\ghfnot) = \sum_{n=0}^N g(n) \, \Tr\big[ \rho P_n^{(N)} \big].
\end{equation}
In this form we see directly that for $\rho = |\Psi\rangle\langle\Psi |$ we have $S_g=\alpha_g$, where $\alpha_g$ is the functional used in \cite[Definition~2.1]{Petrat:2014fk} to control the closeness of a Hartree-Fock state to a Schr\"odinger state. However, note that there is a difference in the choice of normalization. The particular choices of functions, $f$ in \cite{Petrat:2014fk} and $g$ in our article, are related through $g=N\,f$, such that $S_{g}=N\alpha_{f}$. Note also that for $g(x)=x$ we find $S_{Id_{\mathbb R}}(\rho,\ghfnot) = \Tr[ \gamma_{\rho} (\Id-\ghfnot) ]$.
\end{rem}

\begin{rem}
For $g(x)=x$, the functional $S_g$ has been used in \cite{bach:1993,MR1301362,1751-8121-42-8-085201} 
in the context of mean-field approximations for ground states. In \cite{MR1301362}, $S_g$ is called ``degree of non-condensation'' or ``the relative number of particles outside the Fermi sea''. For general $g$, a bosonic variant of $S_g$ was introduced for the derivation of mean-field dynamics in \cite{MR2821235} and for the derivation of the NLS equation in \cite{pickl:2010gp_pos,pickl:2010gp_ext}. For the derivation of mean-field dynamics for fermions, $S_g$ was introduced in \cite{Petrat:2014fk}. Note that for $g(x)=x$ and $\Psi_0\in\fhn$, $2S_g(t)$ coincides with the quantity $\langle\mathcal{U}_{N}(t;0)\xi,
\mathcal{N}\mathcal{U}_{N}(t;0)\xi\rangle$ in \cite{2013arXiv1305.2768B} in case $\xi=R_{\nu_N}^*\Psi_0$.
\end{rem}

Let us collect some properties of $S_g(\rho,\eta)$ and show how it relates to the distance of $\gamma_\rho$ to $\eta$ in trace norm. (Note that some of the statements were already proven in \cite{2013arXiv1305.2768B} and \cite{Petrat:2014fk}.) We denote the Hilbert-Schmidt norm by $\|\cdot\|_{\mathcal{L}^{2}}$.
\begin{prop}
\label{pro:S} For $\eta \in \dmn$ and $\rho$ a density matrix with one particle density matrix $\gamma$, the degree 
$S_g(\rho,\eta)$ of evaporation has the properties

\begin{align} \label{eq:DoE-1}
 \inf_{0 \leq x\leq N} g(x) \leq S_g & (\rho, \eta) \leq  \sup_{0 \leq x\leq N} g(x)\, ,
\\
\label{eq:DoE-2} 
\|\gamma-\eta\|_{\mathcal{L}^{2}}^{2}  \ & \leq \ 
2 S_{Id_{\mathbb R}}(\rho,\eta) \, ,\\
\label{eq:S_positivity_preserving} 
g_1\leq g_2 \text{ on } [ 0,N ] & \Rightarrow S_{g_1}(\rho,\eta)\leq S_{g_2}(\rho,\eta) \, ,\\
\label{eq:S_linear}
g\mapsto S_{g}(\rho,\eta) & \text{ is linear,} 
\end{align}
for $g,g_1,g_2$ functions from $\mathbb R$ to $\mathbb{R}$.

If furthermore $\ghfnot^{2} =\ghfnot$ is a rank-$N$ orthogonal projection and $g(0)=0$, $g(x)\geq x$ on $[0,N]$, then 
\begin{align}\label{eq:DoE-3} 
\frac{1}{N} \| \gamma - \ghfnot \|_{\mathcal{L}^{1}}
\ & \leq \ 
\sqrt{\frac{8}{N}\, S_g(\rho,\ghfnot) } \, ,
\\ \label{eq:S_g_leq_S_Id}
S_g(\rho,\ghfnot) & \leq \sup_{0<x\leq N} \Big|\frac{g(x)}{x}\Big| \| \gamma - \ghfnot \|_{\mathcal{L}^{1}} \, .
\end{align}
\end{prop}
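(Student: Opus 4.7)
The plan is to organize the proof around the spectral theorem applied to the self-adjoint operator $\G(\Id-\eta)$, whose spectrum lies in $[0,N]$ since $0\leq\Id-\eta\leq\Id$ implies $0\leq\G(\Id-\eta)\leq N\Id$ on $\fhn$. The bounds \eqref{eq:DoE-1} will then be immediate from functional calculus: $\hat g$ has spectrum in $g([0,N])$, and sandwiching with the unit-trace density $\rho\geq 0$ preserves the inequality. The monotonicity \eqref{eq:S_positivity_preserving} and linearity \eqref{eq:S_linear} will follow the same way, from monotonicity/linearity of the functional calculus together with positivity/linearity of $\Tr$. For \eqref{eq:DoE-2} I will expand $\|\gamma-\eta\|_{\mathcal{L}^{2}}^2=\Tr[\gamma^2]+\Tr[\eta^2]-2\Tr[\gamma\eta]$, use $\gamma^2\leq\gamma$ and $\eta^2\leq\eta$ (both valid because the operators lie in $[0,\Id]$) to obtain $\Tr[\gamma^2],\Tr[\eta^2]\leq N$, and identify $N-\Tr[\gamma\eta]=\Tr[\gamma(\Id-\eta)]=S_{Id_{\mathbb R}}(\rho,\eta)$ by \eqref{eq:def-1pdm}.

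The heart of the proposition is \eqref{eq:DoE-3}. My plan has three ingredients. First, I will show that the number $n_-$ of strictly negative eigenvalues of $\gamma-p$ is at most $N$: for any $\epsilon>0$, if the spectral subspace $V_\epsilon=\mathbf{1}_{(-\infty,-\epsilon]}(\gamma-p)\mathcal{H}$ had dimension $N+1$, then, because $p$ has rank $N$, rank-nullity would produce a nonzero $\xi\in V_\epsilon\cap\ker p$, and hence $\langle\xi,\gamma\xi\rangle=\langle\xi,(\gamma-p)\xi\rangle\leq-\epsilon\|\xi\|^2<0$, contradicting $\gamma\geq 0$; letting $\epsilon\to 0^+$ yields $n_-\leq N$ (note that $\gamma-p$ is compact as a difference of trace-class operators, so the negative part is an honest sum of eigenvalues). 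Second, since $\Tr[\gamma-p]=N-N=0$ the positive and negative eigenvalues balance, giving $\|\gamma-p\|_{\mathcal{L}^{1}}=2\sum_{\lambda_i<0}|\lambda_i|$. Third, Cauchy--Schwarz yields
\[
\bigg(\sum_{\lambda_i<0}|\lambda_i|\bigg)^{\!2}\leq n_-\sum_{\lambda_i<0}\lambda_i^2\leq N\,\|\gamma-p\|_{\mathcal{L}^{2}}^2\leq 2N\,S_{Id_{\mathbb R}}(\rho,p)\leq 2N\,S_g(\rho,p),
\]
where the last inequality combines \eqref{eq:S_positivity_preserving} with the hypothesis $g(x)\geq x$ on $[0,N]$. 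Taking a square root and dividing by $N$ delivers \eqref{eq:DoE-3} with the constant $\sqrt 8$. I expect this rank-of-the-negative-part estimate to be the main obstacle, since without an a priori cap on $n_-$ the passage from $\mathcal{L}^{2}$ to $\mathcal{L}^{1}$ would cost an uncontrolled factor.

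Finally for \eqref{eq:S_g_leq_S_Id} I will use the spectral decomposition $\G(\Id-p)=\sum_{n=0}^N n\,P_n^{(N)}$ from Remark~\ref{rem:spectral_decomp}. Since $g(0)=0$, writing $g(n)=(g(n)/n)\cdot n$ for $n\geq 1$ and using $\Tr[\rho P_n^{(N)}]\geq 0$ gives
\[
|S_g(\rho,p)|\leq\sup_{0<n\leq N}\Big|\frac{g(n)}{n}\Big|\sum_{n=1}^N n\,\Tr[\rho P_n^{(N)}]=\sup_{0<x\leq N}\Big|\frac{g(x)}{x}\Big|\;S_{Id_{\mathbb R}}(\rho,p).
\]
It remains to absorb $S_{Id_{\mathbb R}}$ into the trace distance: $S_{Id_{\mathbb R}}(\rho,p)=N-\Tr[\gamma p]=\Tr[(p-\gamma)p]\leq\|\gamma-p\|_{\mathcal{L}^{1}}\|p\|_\infty=\|\gamma-p\|_{\mathcal{L}^{1}}$, which closes the argument.
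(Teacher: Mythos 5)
Your proposal is correct and follows essentially the same route as the paper: functional calculus for \eqref{eq:DoE-1}, \eqref{eq:S_positivity_preserving}, \eqref{eq:S_linear}; the expansion of $\|\gamma-\eta\|_{\mathcal{L}^2}^2$ for \eqref{eq:DoE-2}; the bound $n_-\leq N$ on the negative eigenvalues of $\gamma-\ghfnot$ (the paper argues via minmax where you intersect with $\ker\ghfnot$, but the idea is identical) combined with $\Tr[\gamma-\ghfnot]=0$ and Cauchy--Schwarz for \eqref{eq:DoE-3}; and the pointwise bound $g(x)\leq\sup_{0<x\leq N}|g(x)/x|\,x$ together with $S_{Id_{\mathbb R}}(\rho,\ghfnot)=\Tr[(\ghfnot-\gamma)\ghfnot]\leq\|\gamma-\ghfnot\|_{\mathcal{L}^1}$ for \eqref{eq:S_g_leq_S_Id}. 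The only differences are presentational, so no further comparison is needed.
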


\begin{proof} The spectrum of $\G(\ghfnotp )$ (restricted to $\fhn$) is included in $[0,N]$, thus
\[
 \inf_{0 \leq x\leq N} g(x) \leq \hat{g} \leq  \sup_{0 \leq x\leq N} g(x)\, , 
\]
in the sense of quadratic forms. As $\rho$ is a state, Eq.~\eqref{eq:DoE-1} follows.

Equation~\eqref{eq:DoE-2} follows from 
\begin{align*}
\| \gamma - \eta \|_{\mathcal{L}^{2}}^{2} 
& \ = \ \Tr\big[ (\gamma-\eta)^{2} \big]
\ = \ \Tr\big[ \gamma^{2} + \eta^{2} -2 \gamma\eta \big]
\\[1ex] 
& \ = \ 
2 S_{Id_{\mathbb R}}(\rho,\eta) - \Tr[\gamma-\gamma^2]- \Tr[\eta-\eta^2] 
\ \leq \  
2 S_{Id_{\mathbb R}}(\rho,\eta) \,.
\end{align*}

Equations~\eqref{eq:S_positivity_preserving} and \eqref{eq:S_linear} follow from the properties of the functional calculus.

For the proof of \eqref{eq:DoE-3}, we first remark that $\gamma -
\ghfnot$ has at most $N$ negative eigenvalues (counting
multiplicities). This is a well-known consequence of $\gamma -
\ghfnot \geq - \ghfnot$ and the fact that $\ghfnot$ is a
rank-$N$ orthogonal projection (see, e.g., \cite{MR0493421}), but
we include its proof for the sake of completeness:
Suppose that $\gamma-\ghfnot$ has at least $N+1$ negative
eigenvalues. Then there is a subspace $W$ of dimension $N+1$ such
that $\langle \varphi | (\gamma -
\ghfnot) \varphi \rangle < 0$, for all $\varphi \in W \setminus \{0\}$.
Since $\gamma \geq 0$, this implies that 
$\langle \varphi | \ghfnot \varphi \rangle > 0$, for all 
$\varphi \in W \setminus \{0\}$. On the other hand, the largest
dimension of a subspace with this property is $N$, by the
minmax principle and the fact that $\ghfnot$ has precisely $N$ negative
eigenvalues, which contradicts the existence of $W$.

Denoting the number of negative eigenvalues (counting multiplicities)
of $\gamma-\ghfnot$ by $M$, we consequently have that $M \leq
N$. Let $\lambda_{1},\dots,\lambda_{M}$ be these $M$ negative
eigenvalues of $\gamma-\ghfnot$, and $\lambda_{M+1},
\lambda_{M+2}, \dots$ be the non-negative ones. Since
$\Tr[\gamma-\ghfnot] = 0$, it follows that 
\[
-(\lambda_{1}+\dots+\lambda_{M})
\ = \
\sum_{m=M+1}^{\infty}\lambda_{m} \,.
\]
Using the Cauchy-Schwarz inequality and $M\leq N$, we obtain
\begin{align*}
\|\gamma - \ghfnot\|_{\mathcal{L}^{1}} 
& \ = \
\sum_{m=M+1}^{\infty} \lambda_{m} - \sum_{m=1}^{M} \lambda_{m}
\ = \ 
- 2 \sum_{m=1}^{M}\lambda_{m}
\ \leq \ 
2 \sqrt{M} \bigg( \sum_{m=1}^{M} \lambda_{m}^{2} \bigg)^{1/2}
\\ &
\ \leq \
2 \sqrt{N} \bigg( \sum_{m=1}^{\infty} \lambda_{m}^{2} \bigg)^{1/2}
\ = \ 
2 \sqrt{N} \|\gamma-\ghfnot\|_{\mathcal{L}^{2}} \,,
\end{align*}
and Eq.~\eqref{eq:DoE-3} follows from Eq.~\eqref{eq:DoE-2} and $S_{Id_\mathbb R}(\rho,\ghfnot) \leq S_{g}(\rho,\ghfnot) $. 
  
To prove Eq.~\eqref{eq:S_g_leq_S_Id}, we observe that $g(x)\leq \sup_{0<x\leq N}\Big\{\Big|\frac{g(x)}{x}\Big|\Big\} x$  on $[0,N]$ and thus, using positivity preservation and linearity, we have
\[
S_{g}(\rho,\ghfnot) \leq \sup_{0<x\leq N}\Big\{\Big|\frac{g(x)}{x}\Big| x\Big\} S_{Id_\mathbb R}(\rho,\ghfnot)\ .
\]
We conclude with
\[
S_{Id_\mathbb R}(\rho,\ghfnot) 
\ = \ 
\Tr[ \gamma (1 - \ghfnot)]
\ = \ 
\Tr[ \ghfnot (\ghfnot - \gamma) \ghfnot ]
\ \leq \ 
\| \gamma - \ghfnot \|_{\mathcal{L}^{1}} \, , 
\]
using again $\ghfnot = \ghfnot^2$ and $\Tr[\gamma]=N=\Tr[\ghfnot]$.
\end{proof}

Let us now state the main result of this section. Recall that we defined $K := \sup_{t \geq 0} \Tr[-\Delta \ghf_t]$.

\begin{thm}\label{thm:estimate-S} 
Assume \eqref{eq:K-uniformly-bounded} holds, i.e., the kinetic energy is uniformly bounded, as in  Theorem~\ref{thm:estimate-trace-norm}. Then, writing $\rho_t = |\Psi_{t}\rangle\langle\Psi_{t}|$,
\begin{align}\label{eq:estimation_S}
S_{g_{1/3}}(\rho_t,\ghft) \leq S_{g_{1/3}}(\rho_0,\ghfnot_0) \exp\big(30\lambda \sqrt{K} N^{1/6} t\big)+N^{2/3} \Big( \exp\big(30\lambda \sqrt{K} N^{1/6} t\big) - 1 \Big) \, , \end{align}
where, for $\theta >0$, $g_\theta$ is the function from $\mathbb R$ to $\mathbb R$ defined by
\begin{equation}\label{eq:g_theta}
 \forall x\in\mathbb{R}\,, \quad
g_{\theta}(x):=N^{1-\theta} \ x \ 1_{[0,N^{\theta}]}(x)+ N\  1_{(N^{\theta},\infty)}(x) \,. 
\end{equation}
\end{thm}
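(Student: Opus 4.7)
The plan is to derive a differential inequality of the form
\[
\frac{d}{dt} S_{g_{1/3}}(\rho_t,\ghft) \ \leq \ C_{\lambda,N,K} \bigl( S_{g_{1/3}}(\rho_t,\ghft) + N^{2/3} \bigr),
\]
with $C_{\lambda,N,K}=30\lambda\sqrt{K}N^{1/6}$, and to obtain the stated estimate by a direct application of Gr\"onwall's lemma. The bulk of the work lies in computing $\frac{d}{dt}S_{g_{1/3}}(\rho_t,\ghft)$ and estimating the resulting terms; this is precisely the content referenced later as Proposition~\ref{pro:equation-dS_g/dt}, which splits the derivative into three pieces $\mathcal{A}_t$, $\mathcal{B}_t$, $\mathcal{C}_t$.

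First I would take the time derivative of $\hat g_t = g_{1/3}\bigl(\G(\Id-\ghft)\bigr)$, using Schr\"odinger for $\rho_t$ and the TDHF equation~\eqref{eq:hartree-fock_density_matrix} for $\ghft$. The one-body part $\G(h^{(1)})$ of $H$ commutes with the corresponding piece coming from $\partial_t \ghft$, and these cancel; the constant $\nu$ is irrelevant. What survives is the difference between the genuine two-body interaction $\lambda\sum_{j<k}v(x_j-x_k)$ and the HF mean-field interaction $\lambda\Tr_2[v^{(2)}(\Id-\Ex)(\Id_\fh\otimes\ghft)]$. After expanding $\Id=\ghft+\ghftp$ on each tensor factor and grouping terms by how many $\ghftp$'s appear, one isolates three contributions: $\mathcal{B}_t$ and $\mathcal{C}_t$ contain at least one explicit $\ghftp$-factor in the ``outer'' positions and can be estimated in a fashion parallel to \cite{Petrat:2014fk}, exploiting the shift properties of $\hat g_{1/3}$ under conjugation by creation/annihilation operators (the differences $\tau_{\pm1} g_{1/3} - g_{1/3}$ being controlled on $[0,N]$ by $N^{2/3}$), together with Cauchy-Schwarz and the operator inequality $p v^2 p \leq C \Tr[-\Delta p]$ coming from Hardy.

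The main obstacle and the novel ingredient is the estimate of $\mathcal{A}_t$, the term where the HF potential almost entirely cancels against the interaction and which is therefore sensitive to genuine two-particle correlations. Here I would insert the Fefferman-de la Llave decomposition~\eqref{eq:Feffermann-de_la_Lave}, writing $v(x_i-x_j)$ as an integral over $r>0$ of rank-one positive operators of the form $|f_r(\cdot-\tau)\rangle\langle f_r(\cdot-\tau)|$ with $f_r=1_{B(0,r/2)}$; this turns the interaction into a superposition of non-negative two-body operators to which Cauchy-Schwarz can be applied without losing the favorable sign structure. The $L^2$-mass of $f_r$ scales as $r^{3/2}$, and the weight $16/(\pi r^5)$ is controlled by trading a power of $r$ against the Lieb-Thirring bound $\Tr[\ghft \, 1_{B(x,r)}] \leq C r^2 \sqrt{\Tr[-\Delta\ghft]}$ applied to $p_t$, while a complementary range of small $r$ is handled by Hardy's inequality $|x|^{-2}\leq 4(-\Delta)$. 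These estimates combine to produce a bound
\[
|\mathcal{A}_t| \ \leq \ C\,\lambda\sqrt{K}\,N^{1/6}\bigl(S_{g_{1/3}}(\rho_t,\ghft)+N^{2/3}\bigr),
\]
with an explicit constant that (together with the analogous contributions from $\mathcal{B}_t,\mathcal{C}_t$) totals to at most $30$. Gr\"onwall then yields~\eqref{eq:estimation_S}, and Theorem~\ref{thm:estimate-trace-norm} follows by combining with \eqref{eq:DoE-3} and \eqref{eq:S_g_leq_S_Id} to translate between $S_{g_{1/3}}$ and $\|\gamma-p\|_{\mathcal L^1}$.
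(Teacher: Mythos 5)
Your overall architecture coincides with the paper's: a Gr\"onwall inequality for $S_{g_{1/3}}$, the decomposition of $\frac{d}{dt}S_{g_{1/3}}$ into $\mathcal{A}_t$, $\mathcal{B}_t$, $\mathcal{C}_t$ via the projections built from $\ghft$ and $\ghftp$ (Proposition~\ref{pro:equation-dS_g/dt}), the Fefferman--de la Llave decomposition for $\mathcal{A}_t$, and Lieb--Thirring/Hardy to convert everything into the kinetic energy $K$. Two corrections to your description of the setup: the Fefferman--de la Llave formula writes $v^{(2)}$ as a superposition of $X_\omega\otimes X_\omega$ with $X_\omega$ a \emph{multiplication} operator by the indicator of a ball (so $0\le X_\omega\le\Id$), not as rank-one projections; and in the paper Hardy's inequality enters only through $\mathcal{C}_t$ (via $\|v^2*f_{HF}\|_\infty\le 4K$), while $\mathcal{B}_t$ is handled by the Hardy--Littlewood--Sobolev inequality, not Hardy.

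The substantive issue is your treatment of $\mathcal{A}_t$, which is exactly the step that carries the content of the theorem and which you essentially assert. The inequality you invoke, $\Tr[\ghft 1_{B(z,r)}]\le Cr^2\sqrt{K}$, is not what Lieb--Thirring gives: H\"older against $\|f_{HF}\|_{5/3}\le (K/C_{LT})^{3/5}$ yields $\Tr[\ghft 1_{B(z,r)}]\le C r^{6/5}K^{3/5}$, and it is precisely the exponents $6/5$ and $3/5$ that, after optimizing the splitting radius against the trivial bound $\Tr[\ghft 1_{B(z,r)}]\le N$, produce the factor $K^{1/2}N^{1/6}$. Likewise, handling the short-range part of $v$ by Hardy ($\int_{|x-y|\le R}v\,f_{HF}\le R\int v^2 f_{HF}\le 4RK$) and the long-range part by $N/R$ gives, after optimization, a bound of order $\sqrt{KN}$, which is worse than $\sqrt{K}N^{1/6}$ by a factor $N^{1/3}$ and would shorten the admissible time scale accordingly; so the route you describe does not deliver the stated constant $30\lambda\sqrt{K}N^{1/6}$. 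The paper's route is: first the operator bound of Proposition~\ref{pro:estimate_a(X)} for each fixed $0\le X\le\Id$ (Cauchy--Schwarz plus the fermionic bound $\G(A)\le\|A\|_{\mathcal{L}^1}$), then resummation of the $\omega$-integral into $\int v(x-y)f_{HF}(x)h(y;y)\,d^3x\,d^3y$, and finally the splitting $v=1_{B(0,R)}v+1_{B(0,R)^c}v$ with $1_{B(0,R)}v\in L^{5/2}$ paired with $\|f_{HF}\|_{5/3}$ and $1_{B(0,R)^c}v\in L^\infty$ paired with $\|f_{HF}\|_1=N$ (Lemma~\ref{lem:Young-Holder-1}). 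With that replacement for the $\mathcal{A}_t$ step, your sketch matches the paper's proof.
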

Note that the function $g_{\theta}$ was also used to obtain the results in \cite{Petrat:2014fk}. Theorem~\ref{thm:estimate-S} will be proved in the following subsections. The strategy is to obtain a bound for $dS_t/dt$ in terms of $S_t$ and $N^{\delta}$, for some $\delta<1$, and then integrate it, in the spirit of the Gr\"onwall lemma.

Before we turn to the proof of Theorem~\ref{thm:estimate-S}, we show how Theorem~\ref{thm:estimate-S} and the properties of the degree of evaporation 
imply Theorem~\ref{thm:estimate-trace-norm}, the main result of this article.
\begin{proof}[Proof of Theorem~\ref{thm:estimate-trace-norm}] 
Since $g_{1/3} \geq Id_{\mathbb R}$ on $[0,N]$, we can apply Eq.~\eqref{eq:DoE-3} to Eq.~\eqref{eq:estimation_S} which gives
\begin{align*}
\frac{1}{N} \| \gamma_t - \ghft \|_{\mathcal{L}^{1}} & \leq \sqrt{\frac{8}{N}}\sqrt{S_{g_{1/3}}(\rho_0,\ghfnot_0) \exp (30\lambda \sqrt{K} N^{1/6} t)+N^{2/3} \big( \exp (30\lambda \sqrt{K} N^{1/6} t) - 1 \big) } \, .
\end{align*}
Equation~\eqref{eq:S_g_leq_S_Id} with $g_{1/3}$ yields $S_{g_{1/3}}(\rho,\ghfnot)  \leq N^{2/3}  \| \gamma - \ghfnot \|_{\mathcal{L}^{1}}$ which then gives Eq.~\eqref{eq:main_result}.
\end{proof}

The rest of this section is devoted to the proof of Theorem~\ref{thm:estimate-S}.

\subsection{\label{sec:time-derivative-S}Time-Derivative of the Degree of Evaporation}

In this subsection we calculate the time derivative of the degree of evaporation $S_{g}(t) := S_g(\rho_t,\ghf_t)$ and bring it into a form that can be conveniently estimated. Then most of the following subsections provide bounds on the different contributions to the time derivative. First, recall the Fefferman-de la Llave decomposition, for $x\neq y\in\mathbb R^3$,
\begin{equation} \label{eq:Feffermann-de_la_Llave1}
\frac{1}{|x-y|}
\ = \ 
\int_{\mathbb{R}^3} d^3z \int_0^\infty \frac{dr}{\pi \, r^{5}} \:
X_{r,z}(x) \, X_{r,z}(y) \, ,
\end{equation} 
of the Coulomb potential, where $X_{r,z}(x) := 1_{|x-z|\leq r}$ is the
characteristic function of the ball in $\mathbb{R}^3$ of radius $r >0$
centered at $z \in \mathbb{R}^3$. This formula can also be written as 
\begin{equation} \label{eq:Feffermann-de_la_Lave2}
v^{(2)} \ = \ \int d\mu(\om) \: X_\om \otimes X_\om \, ,
\end{equation}
where $\om = (r,z) \in \mathbb{R}^+ \times \mathbb{R}^3$ and $\int d\mu(\om) \, f(\om) := \int_{\mathbb{R}^3} d^3z \int_0^\infty
\frac{dr}{\pi \, r^{5}} \: f(r,z)$.  The form~\eqref{eq:Feffermann-de_la_Llave1} is convenient for the estimates
derived below, but we note that it agrees with
\eqref{eq:Hainzl-Seiringer}, of course. 
Note that the terms $\mathcal{A}_t$, $\mathcal{B}_t$, $\mathcal{C}_t$ in the following proposition are the same as  $(I)$, $(II)$, $(III)$ in \cite[Lemma~6.5]{Petrat:2014fk}. However, an important difference lies in the presentation of the $\mathcal{A}_t$ term using the  decomposition~\eqref{eq:Feffermann-de_la_Llave1}, which enables us to handle the case of the Coulomb interaction.
In the following, $\Im$ denotes the imaginary part.
\begin{prop}\label{pro:equation-dS_g/dt} For all monotonically increasing $g:\mathbb{R}\to\mathbb{R}$, the time-derivative of 
$S_{g}(t) = S_g(\rho_t,\ghf_t)$ (with the notation from Theorems~\ref{thm:estimate-trace-norm} and \ref{thm:estimate-S}) is
\begin{equation} \label{eq:dS/dt}
\frac{dS_{g}(t)}{dt} = \lambda \big( \mathcal{A}_t + \mathcal{B}_t + \mathcal{C}_t \big) \, ,
\end{equation}
where
\begin{align} \label{eq:A_t}
\mathcal{A}_t & 
:= \int 2\Im \Tr\big[ \G(\ghftp X_\om\ghft)\big(\G(\ghft X_\om\ghft)-\Tr[X_\om\ghft]\big)\rho_t^{[-1,1]} \big] d\mu(\omega) \, ,
\\[1ex] \label{eq:b(X)}
 \mathcal{B}_t & := \Im \Tr\big[\GG\big((\ghftp \otimes\ghftp )v^{(2)}(\ghft\otimes\ghft)\big)\,\rho_t^{[-2,2]}\big]  \, ,
\\[1ex] \label{eq:c}
\mathcal{C}_t & := 2\Im \Tr\big[\GG\big((\ghftp \otimes\ghftp )v^{(2)}(\ghft\otimes\ghftp )\big)\, \rho_t^{[-1,1]}\big] \,,
\end{align}
with $\ghftp := \Id-\ghf_t$, and 
\begin{equation}
 \rho^{[-j,j]} :=(\tau_{-j}g-g)^{1/2\wedge}\,\rho (g-\tau_{j}g)^{1/2\wedge}\ .
\end{equation}
\end{prop}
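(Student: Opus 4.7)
I would begin by differentiating $S_g(t)=\Tr[\rho_t\hat g_t]$, where $\hat g_t=g(M_t)$ with $M_t:=\G(\ghftp)$. The Schr\"odinger equation contributes $i\Tr[\rho_t[H,\hat g_t]]$ after cyclicity. For the second term, the TDHF equation $i\partial_t\ghft=[h^{(1)}_{HF}(\ghft),\ghft]$, together with linearity of $\G$ and the Lie-algebra identity $[\G(A),\G(B)]=\G([A,B])$, gives $i\partial_t M_t=[\G(h^{(1)}_{HF}(\ghft)),M_t]$; functional calculus lifts this to $\hat g_t$. The constant $\nu$ and the free part $\G(h^{(1)})$ cancel between $H$ and $\G(h^{(1)}_{HF}(\ghft))$, leaving
\[
\frac{dS_g}{dt}=i\lambda\,\Tr\!\Big[\rho_t\Big[\tfrac{1}{2}\GG(v^{(2)})-\G(V_{\ghft}),\ \hat g_t\Big]\Big],\qquad V_{\ghft}:=\Tr_2\big[v^{(2)}(\Id-\Ex)(\Id\otimes\ghft)\big].
\]

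\textbf{Block decomposition and commutator identity.} Writing $p:=\ghft$, $q:=\ghftp$ and $\Id=p+q$, I would decompose $v^{(2)}$ into the sixteen blocks $(a\otimes b)v^{(2)}(c\otimes d)$ and $V_{\ghft}$ into the four blocks $aV_{\ghft}c$ with $a,b,c,d\in\{p,q\}$. Each resulting block $F$ shifts the spectrum of $M_t$ by $k:=\#_q\{a,b\}-\#_q\{c,d\}$, hence $\hat g_tF=F(\tau_{-k}g)^{\wedge}$ and $[\hat g_t,F]=F(\tau_{-k}g-g)^{\wedge}$. Blocks with $k=0$ drop out. The surviving shifts $k\in\{\pm1,\pm2\}$ arise in Hermitian-conjugate pairs (using self-adjointness of $v^{(2)}$ and $\Ex v^{(2)}\Ex=v^{(2)}$), so combining $+k$ with $-k$ produces a factor $2\Im$, and the positive weights $(\tau_{-j}g-g)^{1/2\wedge}$ and $(g-\tau_j g)^{1/2\wedge}$ (well-defined since $g$ is monotone) can be moved onto either side of $\rho_t$, assembling the densities $\rho_t^{[-j,j]}$.

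\textbf{Identification of $\mathcal{A}_t, \mathcal{B}_t, \mathcal{C}_t$.} The $k=\pm2$ blocks appear only in $\tfrac{1}{2}\GG(v^{(2)})$ (there is no $k=\pm2$ contribution from the one-body piece); they directly reassemble into $\mathcal{B}_t$. For the $k=\pm1$ blocks I would insert the Fefferman-de la Llave formula $v^{(2)}=\int d\mu(\om)\,X_\om\otimes X_\om$, so that every two-body block factorizes. The ``bulk'' blocks (with a $pp$ or $qq$ pair on one side and one $q$ on the other) factorize under $\GG$ as $\G(aX_\om c)\,\G(pX_\om p)-\G\big((aX_\om c)(pX_\om p)\big)$ (analogous for $qq$), and are then matched against the $\G(V_{\ghft})$ counterparts: the direct part $\int d\mu(\om)\,X_\om\,\Tr[X_\om p]$ cancels the ``bulk'' factor $\G(pX_\om p)$ against the scalar $\Tr[X_\om p]$, while the exchange part $\int d\mu(\om)\,X_\om pX_\om$ absorbs the diagonal self-interaction $\G\big((aX_\om c)(pX_\om p)\big)$; the residue is precisely $\G(qX_\om p)\big(\G(pX_\om p)-\Tr[X_\om p]\big)$ and its conjugate, i.e.\ $\mathcal{A}_t$. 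The remaining $k=\pm1$ blocks, carrying $pq$ or $qp$ on \emph{both} sides and having no one-body counterpart, reassemble into $\mathcal{C}_t$.

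\textbf{Main obstacle.} The delicate step is clearly this direct-plus-exchange cancellation that produces $\mathcal{A}_t$. One must bookkeep simultaneously (i) the three sources of $k=\pm1$ pieces in $\tfrac{1}{2}\GG(v^{(2)})$ (the bulk-$pp$, bulk-$qq$, and mixed blocks); (ii) the factor $\tfrac{1}{2}$ from $\GG=2\sum_{j<k}$; (iii) the reduction $\sum_{j\neq k}A_jB_k=\G(A)\G(B)-\G(AB)$ that splits each $\GG$ block into an outer product and a diagonal contraction, so that the $-\Ex$ inside $V_{\ghft}$ matches the diagonal contractions and the $\Id$ inside matches the outer-product $\Tr[X_\om p]$; and (iv) the Hermitian conjugation that pairs $\pm k$ into the imaginary parts. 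Once this combinatorial matching is established, the stated identities for $\mathcal{A}_t, \mathcal{B}_t, \mathcal{C}_t$ follow algebraically.
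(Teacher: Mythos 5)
Your overall route is the same as the paper's: differentiate $S_g$, use the TDHF evolution of $\hat g$ (so that $\nu$ and $\G(h^{(1)})$ cancel), decompose into spectral blocks of $\G(\ghftp)$, use the shift relation $\hat g F = F\,\widehat{\tau_{-k}g}$ to discard the diagonal blocks and produce the weights $\rho^{[-j,j]}$, and insert the Fefferman--de la Llave formula to exhibit the direct-plus-exchange cancellation that yields $\mathcal{A}_t$. That cancellation itself you describe correctly: $\tfrac12\GG\big(P_1^{(2)}v^{(2)}P_0^{(2)}\big)-\G(\ghftp v^{(1)}_{HF}\ghft)$ does reduce to $\int\G(\ghftp X_\om\ghft)\big(\G(\ghft X_\om\ghft)-\Tr[X_\om\ghft]\big)d\mu(\om)$ exactly as you say.

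There is, however, a concrete error in your block bookkeeping at the step you yourself flag as delicate. Blocks $(a\otimes b)v^{(2)}(c\otimes d)$ ``carrying $pq$ or $qp$ on both sides'' have shift $k=1-1=0$; they commute with $\hat g$ and drop out entirely, so they cannot be the source of $\mathcal{C}_t$. The term $\mathcal{C}_t=2\Im\Tr\big[\GG\big((\ghftp\otimes\ghftp)v^{(2)}(\ghft\otimes\ghftp)\big)\rho_t^{[-1,1]}\big]$ comes from the shift-$\pm1$ blocks $P_2^{(2)}v^{(2)}P_1^{(2)}$ and its adjoint, i.e.\ precisely the ``$qq$ pair on one side, one $q$ on the other'' blocks that you instead fold into the ``bulk'' and claim to match against $\G(V_{\ghft})$ counterparts ``analogous(ly)''. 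No such counterparts exist: since $V_{\ghft}=\Tr_2[v^{(2)}(\Id-\Ex)(\Id\otimes\ghft)]$ traces against $\ghft$, its off-diagonal blocks $\ghftp V_{\ghft}\ghft$ and $\ghft V_{\ghft}\ghftp$ are consumed entirely by the $pp$-paired blocks in forming $\mathcal{A}_t$. As written, your matching would either report $\mathcal{C}_t=0$ or leave the $P_2^{(2)}v^{(2)}P_1^{(2)}$ blocks unaccounted for. The fix is purely organizational --- assign those blocks directly to $\mathcal{C}_t$ (no Fefferman--de la Llave decomposition is needed there) --- after which your argument coincides with the paper's proof.
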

\noindent
Before we turn to the proof we note that 
\begin{equation} \label{eq:orthogonality-gamma_HF}
  \ghftp \, \ghf_t \ = \  \ghf_t \, \ghftp \ = \ 0 
 \, ,
\end{equation}
since $\ghf_t$ is a projection. We further note that, for
$A$, $B$ linear and bounded operators on $\fh$, we have that
\begin{equation} \label{eq:dGdG_dG2+dG}
\G(A) \, \G(B) 
\ = \ 
\GG (A \otimes B) + \G(AB) \,.
\end{equation}
To prove Proposition~\ref{pro:equation-dS_g/dt} we need several lemmas. We begin with an evolution equation for $\hat g$.
\begin{lem}\label{lem:evolution_equation_g_hat}
For any function $g:\mathbb R \to \mathbb R$, with $\hat g=g\big(\G(\ghftp )\big)$ and $h^{(1)}_{HF}$ defined in Eq.~\eqref{eq:hartree-fock_HF-Hamiltonian}, 
\begin{equation}\label{eq:evolution_hat_g}
i\partial_{t}\hat{g}=[\G\big(h^{(1)}_{HF}(\ghft)\big),\hat{g}]\,.
\end{equation}

\end{lem}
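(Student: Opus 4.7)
The plan is to reduce the claim to the fact that $\G(\ghftp)$ evolves by unitary conjugation under the one-parameter family of unitaries generated by $\G\bigl(h^{(1)}_{HF}(\ghft)\bigr)$, and then to apply functional calculus to transfer this to $\hat g$.

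First I would derive the evolution of $\ghftp$ from the TDHF equation \eqref{eq:hartree-fock_density_matrix}, using the fact that $i\partial_t \ghft = [h^{(1)}_{HF}(\ghft),\ghft]$ and that $\Id$ commutes with everything, to get
\[
i\partial_t \ghftp \ = \ -\,i\partial_t \ghft \ = \ [h^{(1)}_{HF}(\ghft),\,\Id - \ghft] \ = \ [h^{(1)}_{HF}(\ghft),\ghftp]\, .
\]
Since $h^{(1)}_{HF}(\ghft)$ is self-adjoint on a fixed domain (as recalled just after \eqref{eq:hartree-fock_HF-Hamiltonian}), the theory of time-dependent Hamiltonians provides a strongly continuous unitary propagator $U_{HF,t}$ on $\fh$ with $i\partial_t U_{HF,t} = h^{(1)}_{HF}(\ghft)\,U_{HF,t}$ and $U_{HF,0}=\Id$. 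The above commutator identity then integrates to
\[
\ghftp \ = \ U_{HF,t}\,\ghftp_0\,U_{HF,t}^{*}\, .
\]

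Next I would lift this to $\fhn$. Let $\mathcal U_t := U_{HF,t}^{\otimes N}\big|_{\fhn}$, which is a unitary on $\fhn$ satisfying $i\partial_t \mathcal U_t = \G\bigl(h^{(1)}_{HF}(\ghft)\bigr)\,\mathcal U_t$. By the covariance of second quantization under conjugation by product unitaries, $\G(U_{HF,t}\,A\,U_{HF,t}^*) = \mathcal U_t\,\G(A)\,\mathcal U_t^{*}$ for any bounded $A$ on $\fh$, so in particular
\[
\G(\ghftp) \ = \ \mathcal U_t\,\G(\ghftp_0)\,\mathcal U_t^{*}\, .
\]
Since the conjugation by a unitary commutes with the continuous functional calculus, $g\bigl(\mathcal U_t X\mathcal U_t^{*}\bigr)=\mathcal U_t\,g(X)\,\mathcal U_t^{*}$, and therefore
\[
\hat g \ = \ g\bigl(\G(\ghftp)\bigr) \ = \ \mathcal U_t\,g\bigl(\G(\ghftp_0)\bigr)\,\mathcal U_t^{*}\, .
\]

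Finally, differentiating this identity using the product rule,
\[
i\partial_t \hat g \ = \ \bigl(i\partial_t \mathcal U_t\bigr)\,g(\G(\ghftp_0))\,\mathcal U_t^{*} + \mathcal U_t\,g(\G(\ghftp_0))\,\bigl(i\partial_t \mathcal U_t^{*}\bigr) \ = \ \bigl[\G\bigl(h^{(1)}_{HF}(\ghft)\bigr),\,\hat g\bigr]\, ,
\]
which is \eqref{eq:evolution_hat_g}. The main subtlety is the existence and regularity of the propagator $\mathcal U_t$ generated by a time-dependent self-adjoint operator (and the justification of differentiation through the functional calculus); this is secured by the TDHF well-posedness and regularity results recalled in Appendix~\ref{sec:Theory_TDHF}, which ensure that $t\mapsto h^{(1)}_{HF}(\ghft)$ is sufficiently regular for $U_{HF,t}$ to exist as a strongly differentiable unitary family on the relevant domain.
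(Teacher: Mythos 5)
Your argument is correct, but it follows a genuinely different route from the paper's. The paper exploits the fact that only the values of $g$ on the finite set $\{0,\dots,N\}$ (the spectrum of $\G(\ghftp)$ on $\fhn$) matter, replaces $g$ by an interpolating polynomial, and then verifies \eqref{eq:evolution_hat_g} for each monomial $\G(\ghftp)^n$ by a telescoping commutator computation, the base case $n=1$ being exactly your first display $i\partial_t\G(\ghftp)=[\G(h^{(1)}_{HF}(\ghft)),\G(\ghftp)]$. You instead integrate the flow: you solve $i\partial_t\ghftp=[h^{(1)}_{HF}(\ghft),\ghftp]$ as $\ghftp=U_{HF,t}\,\ghftp_0\,U_{HF,t}^*$, lift to $\fhn$ via $\mathcal U_t=U_{HF,t}^{\otimes N}$, and use that the functional calculus commutes with unitary conjugation before differentiating. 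Both proofs are sound. The paper's approach is purely algebraic and needs nothing beyond the differentiability of $t\mapsto\ghft$ (no propagator, no discussion of how $g$ of a time-dependent operator is differentiated, since everything reduces to polynomials in a single evolving operator); yours buys a cleaner conceptual picture -- $\hat g$ is a fixed operator transported by a unitary flow -- at the price of invoking the existence and strong differentiability of the non-autonomous propagator $U_{HF,t}$, which you correctly flag as the main technical point and which is consistent with the discussion following \eqref{eq:hartree-fock_HF-Hamiltonian} and Appendix~\ref{sec:Theory_TDHF}. A minor remark: your reduction also silently uses that $\mathcal U_t$ preserves $\fhn$, which holds because product unitaries commute with the antisymmetrizer.
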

\begin{proof}
First observe that only the values of $g$ on the spectrum of $\G (\ghftp)$ are used to define $\hat g$. Hence we could as well consider a polynomial which coincides with $g$ on the spectrum of $\G (\ghftp)$. It is then enough to prove that Eq.~\eqref{eq:evolution_hat_g} holds for any monomial $\G(\ghftp )^n$. It indeed holds for $n=1$:
\[
i\partial_{t}\G(\ghftp )=\G(i\partial_{t}\ghftp )=\G\big([h^{(1)}_{HF}(\ghft),\ghftp ]\big)=\big[\G\big(h^{(1)}_{HF}(\ghft)\big),\G(\ghftp )\big]\,.
\]
Then, for any $n\in \mathbb N$,
\begin{align*}
i\partial_{t} \big( \G(\ghftp ) \big)^n &=
\sum_{j=1}^{n} \big( \G(\ghftp ) \big)^{j-1} \
\big[\G\big(h^{(1)}_{HF}(\ghft)\big),\G(\ghftp )\big] \
 \big( \G(\ghftp ) \big)^{n-j} \\
 &=\big[\G\big(h^{(1)}_{HF}(\ghft)\big),\big( \G(\ghftp ) \big)^n\big]
\,,
\end{align*}
as all the terms but two simplify in the sum.
\end{proof}
Next, we need a commutation relation (analogous to \cite[Lemma~6.4]{Petrat:2014fk}) involving $\hat g$ that enables us to write the time derivative of $S_g(t)$ in terms of a discrete derivative of $g$.
Recall that $\tau_{j-k}g(x)=g(x-j+k)$.
\begin{lem}
\label{pro:dGamma_theta__tau_theta_dGamma}
For integers $0\leq j,k\leq M\leq N$, any function $g:\mathbb R \to \mathbb R$, with the notations of Definition~\ref{def:S}, and $h^{(M)}\in\mathcal L(\fh^{\otimes M})$,
\begin{equation}
\label{eq:dGamma_theta__tau_theta_dGamma}
{\G}^{(M)}\big(P_{j}^{(M)}h^{(M)}P_{k}^{(M)}\big)\,\hat{g}=\widehat{\tau_{j-k}g}\, {\G}^{(M)}\big(P_{j}^{(M)}h^{(M)}P_{k}^{(M)}\big)\,.
\end{equation}
\end{lem}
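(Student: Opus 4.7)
The plan is to reduce the identity to a shift property for the spectral projectors $P_n^{(N)} = 1_{\{n\}}({\G}(\ghftp))$, and then conclude by the functional calculus. Throughout, I exploit that only the values of $g$ on the integer spectrum of ${\G}(\ghftp)$ enter, so $\hat g = \sum_{n=0}^{N} g(n)\,P_n^{(N)}$ by the spectral decomposition \eqref{eq:spectral_decomp}.

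The first step is to prove the intertwining relation
\[
{\G}^{(M)}\bigl(P_j^{(M)} h^{(M)} P_k^{(M)}\bigr)\,P_n^{(N)} \;=\; P_{n+(j-k)}^{(N)}\,{\G}^{(M)}\bigl(P_j^{(M)} h^{(M)} P_k^{(M)}\bigr)
\]
for every $n$. Unwinding ${\G}^{(M)}$ as a sum over ordered $M$-tuples of pairwise distinct indices in $\{1,\dots,N\}$, and expanding $P_n^{(N)}$ via the formula from Remark~\ref{rem:spectral_decomp}, each such $M$-tuple singles out $M$ ``chosen'' tensor factors and leaves $N-M$ ``unchosen'' factors untouched. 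On the chosen factors, $P_k^{(M)}$ selects the inputs with exactly $k$ of them in $\Ran \ghftp$, and $P_j^{(M)}$ selects outputs with exactly $j$ in $\Ran \ghftp$; on the unchosen factors, the excitation count is preserved because the operator acts as the identity there. Consequently, the number of $\ghftp$-excitations in the image equals $(n-k)+j = n+(j-k)$, which is precisely the statement that the image lies in the range of $P_{n+(j-k)}^{(N)}$.

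With the intertwining in hand, the lemma follows by assembling pieces. Substituting $\hat g = \sum_n g(n)\,P_n^{(N)}$ and using the shift yields
\[
{\G}^{(M)}\bigl(P_j^{(M)} h^{(M)} P_k^{(M)}\bigr)\,\hat g \;=\; \sum_n g(n)\,P_{n+(j-k)}^{(N)}\,{\G}^{(M)}\bigl(P_j^{(M)} h^{(M)} P_k^{(M)}\bigr).
\]
Re-indexing $m = n+(j-k)$ turns the coefficient into $g(m-(j-k)) = (\tau_{j-k}g)(m)$, so the sum collapses to $\bigl(\sum_m (\tau_{j-k}g)(m)\,P_m^{(N)}\bigr)\,{\G}^{(M)}\bigl(P_j^{(M)} h^{(M)} P_k^{(M)}\bigr) = \widehat{\tau_{j-k}g}\,{\G}^{(M)}\bigl(P_j^{(M)} h^{(M)} P_k^{(M)}\bigr)$, which is \eqref{eq:dGamma_theta__tau_theta_dGamma}.

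The main obstacle is the combinatorial verification in step one, specifically reconciling the projector $P_k^{(M)}$ (acting on $\fh^{\otimes M}$) with $P_n^{(N)}$ (acting on $\fhn$). One needs to argue that, for a fixed ordered $M$-tuple of chosen indices, $P_n^{(N)}$ decomposes as a sum over how many of the $M$ chosen sites are excited, so that composition with $P_k^{(M)}$ simply fixes this count to $k$ while leaving the excitation count on the remaining $N-M$ sites equal to $n-k$. The substance is nothing more than the tensor-factor reordering identity $P_n^{(N)} = \sum_{\ell=0}^{M} P_\ell^{(M)}\otimes P_{n-\ell}^{(N-M)}$ (after permuting the chosen tensor slots to the front), but the notation is awkward and is where care is required.
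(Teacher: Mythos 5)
Your proof is correct and follows essentially the same route as the paper's: the heart of both arguments is the intertwining relation $\bigl(P_j^{(M)}h^{(M)}P_k^{(M)}\otimes\Id^{\otimes N-M}\bigr)P_n^{(N)}=P_{n+j-k}^{(N)}\bigl(P_j^{(M)}h^{(M)}P_k^{(M)}\otimes\Id^{\otimes N-M}\bigr)$, established exactly as you do via $P_n^{(N)}=\sum_{d}P_d^{(M)}\otimes P_{n-d}^{(N-M)}$ and the orthogonality $P_{m_1}^{(M)}P_{m_2}^{(M)}=\delta_{m_1m_2}P_{m_1}^{(M)}$. The only (cosmetic) difference is in the final assembly: you conclude directly by writing $\hat g=\sum_n g(n)P_n^{(N)}$ and re-indexing (which is harmless, since $P_m^{(N)}\,{\G}^{(M)}(P_j^{(M)}h^{(M)}P_k^{(M)})=0$ for the values of $m$ falling outside the shifted range), whereas the paper first deduces the case $g=Id$ from the intertwining and then extends to general $g$ by the monomial argument of Lemma~\ref{lem:evolution_equation_g_hat}.
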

\begin{proof}
First, note that if
\[
{\G}^{(M)}\big(P_{j}^{(M)}h^{(M)}P_{k}^{(M)}\big)\G(\ghfnotp)=\widehat{\tau_{j-k}Id}\: {\G}^{(M)}\big(P_{j}^{(M)}h^{(M)}P_{k}^{(M)}\big)
\]
holds, then Eq.~\eqref{eq:dGamma_theta__tau_theta_dGamma} follows by 
the same argument as in the proof of Lemma \ref{lem:evolution_equation_g_hat}.
Using  $P_{m_{1}}^{(M)}P_{m_{2}}^{(M)}=\delta_{m_{1}m_{2}}P_{m_{1}}^{(M)}$
and $P_{m}^{(N)}=\sum_{d\in\mathbb Z}P_{d}^{(M)}\otimes P_{m-d}^{(N-M)}$ (recall that $P_d^{(M)}=0$ for $d\notin\{0,\dots,M\}$),
and without loss of generality singling out the first $M$ variables,
\begin{align*}
\Big(\big(P_{j}^{(M)}h^{(M)}P_{k}^{(M)}\big)\otimes \Id ^{\otimes N-M}\Big)P_{n}^{(N)} & =\Big(\big(P_{j}^{(M)}h^{(M)}P_{k}^{(M)}\big)\otimes \Id^{\otimes N-M}\Big)\Big(\sum_{d\in\mathbb{Z}}P_{d}^{(M)}\otimes P_{n-d}^{(N-M)}\Big)\\
 & =\big(P_{j}^{(M)}h^{(M)}P_{k}^{(M)}\big)\otimes P_{n-k}^{(N-M)}\\
 & =\Big(\sum_{d\in\mathbb{Z}}P_{d}^{(M)}\otimes P_{n-k+j-d}^{(N-M)}\Big)\Big(\big(P_{j}^{(M)}h^{(M)}P_{k}^{(M)}\big)\otimes \Id^{\otimes N-M}\Big)\\
 & =P_{n-k+j}^{(N)}\Big(\big(P_{j}^{(M)}h^{(M)}P_{k}^{(M)}\big)\otimes \Id^{\otimes N-M}\Big)\ .\end{align*}
It follows from the spectral decomposition \eqref{eq:spectral_decomp} of $\G(\ghfnotp)$ 
that 
\begin{align*}
{\G}^{(M)}\big(P_{j}^{(M)}h^{(M)}P_{k}^{(M)}\big)\G(\ghfnotp) & ={\G}^{(M)}\big(P_{j}^{(M)}h^{(M)}P_{k}^{(M)}\big) \sum_{n\in\mathbb{Z}}n\, P_{n}^{(N)}\\
 & =\sum_{n\in\mathbb{Z}}n\, P_{n-k+j}^{(N)}{\G}^{(M)}\big(P_{j}^{(M)}h^{(M)}P_{k}^{(M)}\big)\\
 & =\sum_{n\in\mathbb{Z}}(n+k-j)\, P_{n}^{(N)}{\G}^{(M)}\big(P_{j}^{(M)}h^{(M)}P_{k}^{(M)}\big)\\
 & =\widehat{\tau_{j-k}Id}\: {\G}^{(M)}\big(P_{j}^{(M)}h^{(M)}P_{k}^{(M)}\big) \, ,
\end{align*}
which, as discussed above, implies the result.
\end{proof}
\begin{proof}[Proof of Proposition \ref{pro:equation-dS_g/dt}]Without loss of generality we assume that $\Psi_0 \in \fhn \cap \sobolev^{\otimes N}$ (see Remark~\ref{rem:proof_in_H1_is_enough}). Using the evolution equation for $\hat g$ from Lemma~\ref{lem:evolution_equation_g_hat}, we find
\begin{align*}
\frac{dS_{g}}{dt}(\rho_t,\ghft) & =\Tr\big[-i[H,\rho_t]\hat{g}+\rho_t\,(-i[\G(h_{HF}^{(1)}(\ghft)),\hat{g}])\big]\\
 & =\Tr\big[i[ H - \G(h_{HF}^{(1)}(\ghft)),\hat{g}]\rho_t\big]\\
 & =\lambda \Tr\big[i[\frac{1}{2}\GG(v^{(2)})-\G(v_{HF}^{(1)}(\ghft)),\hat{g}]\rho_t\big]
 \end{align*}
with $v^{(1)}_{HF}(\ghft):=\Tr_2[v^{(2)}(1-\mathfrak X)(\Id\otimes\ghft)]$. Now, recall that $\sum_{m=0}^M P_m^{(M)}=\Id_{\fh^{\otimes M}}$.
Inserting this identity for $M=1$ and $M=2$ and using Lemma~\ref{pro:dGamma_theta__tau_theta_dGamma} gives
\begin{align*}
\frac{dS_{g}}{dt}(\rho_t,\ghft) & =\frac{\lambda}{2} \Tr\big[i[\GG(\big(P_{0}^{(2)}+P_{1}^{(2)}+P_{2}^{(2)}\big)v^{(2)}\big(P_{0}^{(2)}+P_{1}^{(2)}+P_{2}^{(2)}\big))\allowdisplaybreaks\\
 &\qquad - 2\G(\big(P_{0}^{(1)}+P_{1}^{(1)}\big)v^{(1)}_{HF}(\ghft)\big(P_{0}^{(1)}+P_{1}^{(1)}\big)),\hat{g}]\rho_t\big]\allowdisplaybreaks\\
 & =\frac{\lambda}{2} \Tr\big[i[\GG(P_{0}^{(2)}v^{(2)}P_{2}^{(2)}+P_{2}^{(2)}v^{(2)}P_{0}^{(2)}\\
 & \qquad+P_{0}^{(2)}v^{(2)}P_{1}^{(2)}+P_{1}^{(2)}v^{(2)}P_{0}^{(2)}\\
 & \qquad+P_{1}^{(2)}v^{(2)}P_{2}^{(2)}+P_{2}^{(2)}v^{(2)}P_{1}^{(2)})\\
 &\qquad -2\G\big(P_{0}^{(1)}v^{(1)}_{HF}(\ghft) P_{1}^{(1)}+P_{1}^{(1)}v^{(1)}_{HF}(\ghft) P_{0}^{(1)}\big),\hat{g}]\rho_t\big]\allowdisplaybreaks\\
& = \lambda\Im \Tr\big[\big(\GG(P_{1}^{(2)}v^{(2)}P_{0}^{(2)})
-2\G\big(P_{1}^{(1)}v^{(1)}_{HF}(\ghft) P_{0}^{(1)}\big)\big)\,\rho_t^{[-1,1]}\big]\\
& \quad+\lambda\Im \Tr\big[\GG(P_{2}^{(2)}v^{(2)}P_{0}^{(2)})\,\rho_t^{[-2,2]}\big]\\
& \quad+\lambda\Im \Tr\big[\GG(P_{2}^{(2)}v^{(2)}P_{1}^{(2)})\,\rho_t^{[-1,1]}\big]\ .
 \end{align*}
We then insert the Fefferman-de la Llave decomposition and Eq.~\eqref{eq:dGdG_dG2+dG} in the first term to get
\begin{align*}
\Im & \Tr[(\GG(P_{1}^{(2)}v^{(2)}P_{0}^{(2)})-2\G(P_{1}^{(1)}v_{HF}^{(1)}(\ghft)P_{0}^{(1)}))\rho_t^{[-1,1]}]\\
 & =2\Im\int \Tr\big[\big(\GG(\ghftp X_{\omega}\ghft\otimes\ghft X_{\omega}\ghft)\\
 & \qquad-\Tr[X_{\omega}\ghft]\G(\ghftp X_{\omega}\ghft)+\G(\ghftp X_{\omega}\ghft X_{\omega}\ghft)\big) \rho_t^{[-1,1]} \big]d\mu(\omega)\\
 & =2\Im\int \Tr\big[ \G(\ghftp X_{\omega}\ghft)\big(\G(\ghft X_{\omega}\ghft)-\Tr[X_{\omega}\ghft])\big)\rho_t^{[-1,1]} \big]d\mu(\omega)\,,
\end{align*}
where we used $P_0^{(1)}=\ghft,\,P_1^{(1)}=\ghftp$, $P_0^{(2)}=\ghft^{\otimes 2},\,P_1^{(2)}=\ghftp\otimes\ghft\,+\,\ghft\otimes\ghftp$.
\end{proof}

\subsection{Auxiliary Lemmas}
We prove here three lemmas that we frequently need for estimating the terms $\mathcal{A}_t$, $\mathcal{B}_t$ and $\mathcal{C}_t$ from Proposition~\ref{pro:equation-dS_g/dt}.

For fermionic systems the following bound on $\G(A)$ is well-known. Note that this is the only point at which the Fermi statistics enter our paper.
\begin{lem} \label{lem:dGamma-leq-Tr}
Let $A$ be a trace-class and self-adjoint operator on a separable Hilbert space $\fh$. Then, as quadratic forms on $\fhn$,
\[
\G(A) \:
\leq \: \|A\|_{\mathcal{L}^{1}}
\,.
\] 
\end{lem}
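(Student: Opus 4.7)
My plan is to exploit the spectral decomposition of $A$ together with the Pauli exclusion principle, which is precisely the place where fermion statistics become essential.

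First, I would use the spectral theorem for trace-class self-adjoint operators to write $A=\sum_i \lambda_i |e_i\rangle\langle e_i|$, where $\{e_i\}$ is an orthonormal basis of $\fh$ consisting of eigenvectors and $\sum_i|\lambda_i|=\|A\|_{\mathcal{L}^1}$. By linearity (and continuity in $\mathcal{L}^1$) of $\G$,
\begin{equation*}
\G(A)\ =\ \sum_i \lambda_i\,\G(|e_i\rangle\langle e_i|)\ =\ \sum_i \lambda_i\,\mathcal{N}_i\,,
\end{equation*}
where $\mathcal{N}_i:=\G(|e_i\rangle\langle e_i|)$ is, on $\fhn$, the operator counting the number of particles occupying the mode $e_i$.

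Next I would verify that $0\leq\mathcal{N}_i\leq\Id$ on $\fhn$. The cleanest way is to observe that the Slater determinants $e_{i_1}\wedge\cdots\wedge e_{i_N}$ (with $i_1<\cdots<i_N$) form an orthonormal basis of $\fhn$, and each such vector is an eigenvector of $\mathcal{N}_i$ with eigenvalue $\mathbf{1}_{\{i\in\{i_1,\ldots,i_N\}\}}\in\{0,1\}$; this is the Pauli exclusion principle. Hence $\mathcal{N}_i\leq\Id$ as a quadratic form, and in particular $\lambda_i\mathcal{N}_i\leq|\lambda_i|\,\Id$ (the inequality being trivial if $\lambda_i<0$ since then $\lambda_i\mathcal{N}_i\leq0$).

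Summing over $i$ yields $\G(A)\leq\sum_i|\lambda_i|\,\Id=\|A\|_{\mathcal{L}^1}\,\Id$ as quadratic forms on $\fhn$, which is the claim. The only subtle point is the justification of summability: since $\sum_i|\lambda_i|<\infty$ and $0\leq\mathcal{N}_i\leq\Id$, the partial sums $\sum_{i\leq n}\lambda_i\mathcal{N}_i$ converge in operator norm, so there is no issue with exchanging sums and the quadratic-form inequality, nor with an a priori domain question on $\fhn$; the bound is uniform and independent of $N$. I do not anticipate a genuine obstacle here — the content of the lemma is really just the Pauli principle bundled together with the spectral decomposition.
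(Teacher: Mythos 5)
Your proof is correct and takes essentially the same route as the paper: both use the spectral decomposition $A=\sum_i\lambda_i|e_i\rangle\langle e_i|$ and the fact that Slater determinants built from the eigenbasis diagonalize $\G(A)$ with eigenvalues $\lambda_{i_1}+\dots+\lambda_{i_N}$ over \emph{distinct} indices, bounded by $\sum_i|\lambda_i|=\|A\|_{\mathcal{L}^1}$. The only cosmetic difference is that you package this as the mode-wise occupation-number bound $0\leq\mathcal{N}_i\leq\Id$ and sum operator inequalities, whereas the paper computes the quadratic form $\langle\Psi,\G(A)\Psi\rangle$ directly on a general $\Psi$ expanded in the Slater basis.
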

\begin{proof}
We use the spectral decomposition $A=\sum_{j}\lambda_{j}|\varphi_{j}\rangle\langle\varphi_{j}|$ with $\lambda_{j}\in\mathbb{R}$, $\sum_{j}|\lambda_{j}|<\infty$, for some orthonormal basis $(\varphi_{j})_{j=1}^{\infty}$, and we write any vector $\Psi\in\fhn$ as
\[
\Psi=\sum_{j_{1}<\cdots<j_{N}}\alpha_{j_{1},\dots,j_{N}}\varphi_{j_{1}}\wedge\cdots\wedge\varphi_{j_{N}} \, ,
\]
where $\|\Psi\|^{2}=\sum|\alpha_{j_{1},\dots,j_{N}}|^{2}<\infty$. Then
\begin{align*}
\langle\Psi,\G(A)\Psi\rangle & =\langle\Psi,\sum_{j_{1}<\cdots<j_{N}}(\lambda_{j_{1}}+\dots+\lambda_{j_{N}})\alpha_{j_{1},\dots,j_{N}}\varphi_{j_{1}}\wedge\cdots\wedge\varphi_{j_{N}}\rangle\allowdisplaybreaks\\
 & =\sum_{j_{1}<\cdots<j_{N}}(\lambda_{j_{1}}+\dots+\lambda_{j_{N}})|\alpha_{j_{1},\dots,j_{N}}|^{2}\allowdisplaybreaks\\
 & \leq\sum_{j_{1}<\cdots<j_{N}}\|A\|_{\mathcal{L}^{1}}\,|\alpha_{j_{1},\dots,j_{N}}|^{2}=\|A\|_{\mathcal{L}^{1}}\,\|\Psi\|^{2}\,,\end{align*}
which yields the result.
\end{proof}

We recall the definition of the direct integral of a family $(B(x))_{x\in\mathbb{R}^3}$ of operators on~$\fh$:
\begin{equation}
\Big[\Big(\int_{\mathbb{R}^3}^{\oplus}B(x^\prime_1)dx^\prime_1\Big)
\psi\Big](x_1,x_2):=\big[\big(\Id\otimes B(x_1)\big)\psi\big] (x_1,x_2)\,,
\end{equation}
for any $\psi\in \fhk{2}$. 
\begin{lem}\label{lem:dGamma2_w}
Let $A$ be a bounded non-negative operator on $\fh$ and $(B(x))_{x\in\mathbb{R}^3}$ be a family of non-negative trace class operators on $\fh$. 
Then
\begin{align}
\label{eq:estimate_direct_integral1}
\GG\Big(( \sqrt A\otimes \Id)\int^{\oplus}_{\mathbb{R}^3} B(x_1)dx_1( \sqrt A\otimes \Id)\Big) 
&\leq \G\big( \sqrt A \, \Tr[B(x)] \,  \sqrt A\big) \,.
\end{align}
If $A$ is also trace -class and such that $\Tr\big[\sqrt{A}\Tr[B(x)]\sqrt{A}\big]<\infty$ then
\begin{align}
\GG\Big(( \sqrt A\otimes \Id)\int^{\oplus}_{\mathbb{R}^3} B(x_1)dx_1( \sqrt A\otimes \Id)\Big) 
\label{eq:estimate_direct_integral2}
&\leq \int A(x;x) \,  \Tr[B(x)] dx \,.
\end{align}
where $A(x,y)=\sum_{i=1}^\infty \lambda_i \varphi_i (x) \overline{\varphi_i (y)}$ denotes the integral kernel of $A$ defined in terms of the spectral decomposition of $A$.

In particular:

If $B(x)=B$ does not depend on $x$,
\begin{equation}
\GG( A\otimes B) 
\leq \Tr[B] \G( A )\,.
\end{equation}
With $w:\mathbb{R}^{3}\to\mathbb{R}^{+}$, $w^{(2)}=w(x_{1}-x_{2})$
and $\ghfnot=\sum_{i=1}^N |\varphi_i\rangle\langle \varphi_i|$, $\langle\varphi_i|\varphi_j\rangle=\delta_{ij}$ a rank-$N$ projector on $\mathfrak{h}$, we have that
\begin{align}\label{eq:dGamma2_w_eta_purp}
\GG\big((\ghfnotp\otimes\ghfnot)w^{(2)}(\ghfnotp\otimes\ghfnot)\big)
& \leq \G\big(\ghfnotp(w*f)\ghfnotp\big)
 \leq \|w*f\|_\infty \ \G(\ghfnotp)\,,
\\
\label{eq:dGamma2_w_eta}
\GG\big((\ghfnot\otimes\ghfnot)w^{(2)}(\ghfnot\otimes\ghfnot)\big)
&\leq \G\big(\ghfnot(w*f)\ghfnot\big)
\leq \|(w*f)f\|_1\,,
\end{align}
where $f(x):=\ghfnot(x;x):=\sum_{i=1}^N |\varphi_i(x)|^2$ are the diagonal values of the integral kernel of~$\ghfnot$.
\end{lem}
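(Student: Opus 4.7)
The plan is to reduce the two-body estimate \eqref{eq:estimate_direct_integral1} to the fermionic one-body bound of Lemma~\ref{lem:dGamma-leq-Tr} applied after ``freezing'' the first coordinate. I would work with the quadratic form $\langle \Psi, \GG(C^{(2)}) \Psi\rangle = \sum_{i \neq j} \langle \Psi, C^{(2)}_{ij} \Psi\rangle$ for arbitrary $\Psi \in \fhn$, where $C^{(2)} = (\sqrt A \otimes \Id) \int^{\oplus} B(x_1) dx_1 (\sqrt A \otimes \Id)$. The crucial observation is that although $\Phi := (\sqrt A)_i \Psi$ is not antisymmetric in all $N$ variables (the factor $\sqrt A$ acts only in slot $i$), for each fixed value of $x_i$ the function of the remaining $N-1$ coordinates is still fully antisymmetric, i.e.\ it lies in $\mathfrak{H}_f^{(N-1)}$.

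First I would absorb both $(\sqrt A)_i$ factors into $\Phi$ and use the definition of the direct integral to rewrite, for each fixed $i$,
\[
\sum_{j \neq i} \langle \Psi, C^{(2)}_{ij} \Psi\rangle = \int dx_i \, \bigg\langle \Phi(\cdot, x_i, \cdot), \Big( \sum_{j \neq i} B(x_i)_j \Big) \Phi(\cdot, x_i, \cdot) \bigg\rangle_{\fh^{\otimes (N-1)}},
\]
in which $B(x_i)$ appears only as a second-quantized operator on the $N-1$ variables $(x_k)_{k\neq i}$. Lemma~\ref{lem:dGamma-leq-Tr} then gives $\sum_{j \neq i} B(x_i)_j \leq \Tr[B(x_i)]$ on this antisymmetric sector. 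Reassembling the multiplication operator $x \mapsto \Tr[B(x)]$ with the $\sqrt A$ factors yields $\sum_{j \neq i} \langle \Psi, C^{(2)}_{ij} \Psi\rangle \leq \langle \Psi, (\sqrt A \, \Tr[B(\cdot)] \, \sqrt A)_i \Psi\rangle$, and summing over $i$ gives \eqref{eq:estimate_direct_integral1}.

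For the sharper bound \eqref{eq:estimate_direct_integral2}, I would apply Lemma~\ref{lem:dGamma-leq-Tr} once more to the right-hand side of \eqref{eq:estimate_direct_integral1} to obtain $\G(\sqrt A \, \Tr[B(\cdot)] \, \sqrt A) \leq \Tr[A \, \Tr[B(\cdot)]]$, and then expand this trace in the spectral basis of $A$ to get $\int A(x;x) \Tr[B(x)] dx$. The constant case $B(x)\equiv B$ is then immediate since $\sqrt A \, \Tr[B] \, \sqrt A = \Tr[B] \, A$. For the two-body cases I would rewrite
\[
(\ghfnotp \otimes \ghfnot) w^{(2)} (\ghfnotp \otimes \ghfnot) = (\ghfnotp \otimes \Id) \int^\oplus \ghfnot \, w(x_1 - \cdot) \, \ghfnot \, dx_1 \, (\ghfnotp \otimes \Id)
\]
(and analogously with $\ghfnot$ replacing $\ghfnotp$ in the first slot), so that \eqref{eq:estimate_direct_integral1} applies with $B(x_1) = \ghfnot \, w(x_1-\cdot) \, \ghfnot$, whose trace is $(w*f)(x_1)$. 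The elementary inequality $\ghfnotp (w*f) \ghfnotp \leq \|w*f\|_\infty \ghfnotp$ (valid because $\ghfnotp$ is a projection) gives the second half of \eqref{eq:dGamma2_w_eta_purp}, while $\Tr[\ghfnot(w*f)\ghfnot] = \int f(x)(w*f)(x) dx$, combined with one more use of Lemma~\ref{lem:dGamma-leq-Tr}, gives the second half of \eqref{eq:dGamma2_w_eta}.

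The main thing to get right is this ``partial antisymmetry'' observation: freezing the $i$-th coordinate of $(\sqrt A)_i \Psi$ leaves an honestly antisymmetric state on the remaining $N-1$ coordinates, which is precisely what licenses the slot-by-slot use of Lemma~\ref{lem:dGamma-leq-Tr}. Without it, one could only settle for the cruder bound $\G(B(x_i)) \leq N \|B(x_i)\|_\infty$ on $\fhn$, losing the trace of $B(x_i)$ that becomes $(w*f)$ after integration and hence spoiling the kinetic-energy estimates that are needed to control the Coulomb singularity in the subsequent sections.
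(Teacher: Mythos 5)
Your proposal is correct and follows essentially the same route as the paper: the paper likewise freezes one coordinate (using the direct-integral structure and the diagonal action of $\sqrt{A}$), observes that the resulting $(N-1)$-variable function is still antisymmetric, and applies Lemma~\ref{lem:dGamma-leq-Tr} to bound $\G(B(x))\leq\Tr[B(x)]$ on that sector before reassembling; the only cosmetic difference is that the paper uses full permutation symmetry to reduce the double sum to a single term with prefactor $N(N-1)$ rather than keeping the sum over $i$. The derivations of \eqref{eq:estimate_direct_integral2} and of the special cases via $B(x_1)=\ghfnot(\tau_{x_1}w)\ghfnot$ with $\Tr[B(x_1)]=(w*f)(x_1)$ also match the paper.
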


\begin{proof}
 Let $\Psi^{(N)}\in\fhn$. With
\[
\tilde{\Psi}_{A,x}^{(N-1)}(x_{1},\dots,x_{N-1}):=\Big(\big( \Id^{\otimes N-1}\otimes \sqrt A\big)\Psi^{(N)}\Big)(x_{1},\dots,x_{N-1},x)\]
and the direct integral representation we get, using Lemma~\ref{lem:dGamma-leq-Tr},
\begin{align*}
\big\langle\Psi^{(N)}, &\GG\Big((\sqrt  A\otimes \Id)\int^{\oplus}_{\mathbb{R}^3} B(x_1)dx_1( \sqrt A\otimes \Id)\Big) \Psi^{(N)}\big\rangle\\
 & =N(N-1)\int\Big\langle\tilde{\Psi}_{A,x}^{(N-1)}, \big(\Id^{\otimes N-2} \otimes B(x)  \big)\tilde{\Psi}_{A,x}^{(N-1)}\Big\rangle \, dx\\
 & =N\int\Big\langle\tilde{\Psi}_{A,x}^{(N-1)},\G\big(B(x)\big)\tilde{\Psi}_{A,x}^{(N-1)}\Big\rangle \, dx \\
 & \leq N\int\Big\langle\tilde{\Psi}_{A,x}^{(N-1)},\Tr[B(x)]\tilde{\Psi}_{A,x}^{(N-1)}\Big\rangle \, dx\\
 & =\big\langle\Psi^{(N)},\G\big(\sqrt A\Tr[B(x)]\sqrt A\big)\Psi^{(N)}\big\rangle \,,
\end{align*}
and Eq.~\eqref{eq:estimate_direct_integral1} follows. If $\Tr[\sqrt A \Tr[B(x)]\sqrt A]<\infty$, then Eq.~\eqref{eq:estimate_direct_integral2} follows from Lemma \ref{lem:dGamma-leq-Tr}.

The case with $B(x)$ independent of $x$ is clear.

For the second particular case, observe that the operator $(\Id\otimes\ghfnot)w^{(2)}(\Id \otimes\ghfnot)$ can be
written as the direct integral \[
(\Id\otimes\ghfnot)w^{(2)}(\Id\otimes\ghfnot)=\int_{\mathbb{R}^{3}}^{\oplus}\ghfnot(\tau_{x_1}w)\ghfnot\, dx_1\]
with $\tau_{x_1}w(x_2)=w(x_2-x_1)$ a translation of $w$. Then with  $B(x)=\ghfnot (\tau_x w) \ghfnot$ and $A=\ghfnotp$ we get Eq.~\eqref{eq:dGamma2_w_eta_purp}, and with $A=\ghfnot$ we get Eq.~\eqref{eq:dGamma2_w_eta}.
\end{proof}

For $\rho \in \mathcal{L}^1(\fhn)$, let us introduce the shorthand notation
\begin{align}
\rhoj &:= (g-\tau_{j}g)^{1/2\,\wedge} \ \rho \ (g-\tau_{j}g)^{1/2\,\wedge} \,, \label{eq:def_rhoj} \\
\rhomj &:= (\tau_{-j}g-g)^{1/2\,\wedge} \ \rho \ (\tau_{-j}g-g)^{1/2\,\wedge} \,, \label{eq:def_rhomj}
\end{align}
with $j=1,2$, and let $\gammaj$ and $\gammamj$ be the corresponding one-particle and $\gamma^{[j](k)}$ and $\gamma^{[-j](k)}$ the corresponding $k$-particle density matrices (see also Definition~\ref{one_two_particle_density_matrix} extended to non-negative and trace class operators whose trace is not necessarily one). Note that $\rhoj$ and $\rhomj$ are not states because their trace is not one, and thus $\gammaj$ and $\gammamj$ do not necessarily satisfy Eq.~\eqref{eq:prop_p-particle_density_matrix}.

The next lemma shows the advantage we gain from using the function
\begin{equation}\label{eq:g_theta_again}
g_{\theta}(x):=N^{1-\theta} \ x \ 1_{[0,N^{\theta}]}(x)+ N\  1_{(N^{\theta},\infty)}(x) \,
\end{equation}
in the definition of the degree of evaporation. This lemma is analogous to \cite[Lemma~7.1]{Petrat:2014fk}, but note that the use of the functional calculus clarifies the fact that one ultimately uses only inequalities on functions from $\mathbb{R}$ to $\mathbb{R}$.

\begin{lem} \label{lem:estimate_by_S_theta}
For $j\in\{-2,-1,1,2\}$, any (normalized) state $\rho\in\mathcal L^1\big(\fhn\big)$, and the function $g_{\theta}$ from \eqref{eq:g_theta_again} (with the notation from \eqref{eq:def_rhoj} and \eqref{eq:def_rhomj}), 
\begin{align*}
\Tr \big[\rhoj\big]  & \leq |j| \ N^{1-\theta}\,,\\
\Tr \big[ \G(\ghfnotp) \ \rhoj \big] & \leq |j|\,(|j|+1) \ S_{g_{\theta}}\,,\\
\Tr \big[\GG(\ghfnotp\otimes\ghfnotp) \ \rhoj \big] & \leq |j|\,(|j|+1)^{2} \ N^{\theta} \ S_{g_{\theta}}\,.
\end{align*}
\end{lem}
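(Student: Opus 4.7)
The plan is to reduce each of the three estimates to a scalar pointwise inequality on $[0,N]$ for the non-negative auxiliary function
\[
\psi_j := \begin{cases} g_\theta - \tau_j g_\theta & \text{if } j\in\{1,2\}, \\ \tau_j g_\theta - g_\theta & \text{if } j\in\{-1,-2\}, \end{cases}
\]
so that $\rho^{[j]} = \psi_j^{1/2\wedge} \rho \, \psi_j^{1/2\wedge}$ in both cases, and then to lift these scalar bounds to operator inequalities on $\fhn$ via the functional calculus for the self-adjoint operator $\G(\ghfnotp)$, whose spectrum on $\fhn$ lies in $[0,N]$.

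First, I would rewrite each trace using cyclicity together with the fact that $\psi_j^\wedge$, $\G(\ghfnotp)$ and $\GG(\ghfnotp \otimes \ghfnotp)$ all commute (since $\ghfnotp^2=\ghfnotp$ gives $\GG(\ghfnotp\otimes\ghfnotp) = \G(\ghfnotp)^2 - \G(\ghfnotp)$, so all three are functions of $\G(\ghfnotp)$ alone). This yields
\[
\Tr[\rho^{[j]}] = \Tr[\rho\,\psi_j^\wedge], \qquad \Tr[\G(\ghfnotp)\,\rho^{[j]}] = \Tr[\rho\,\G(\ghfnotp)\,\psi_j^\wedge],
\]
together with $\GG(\ghfnotp\otimes\ghfnotp) \leq \G(\ghfnotp)^2$ on $\fhn$, so that
\[
\Tr[\GG(\ghfnotp\otimes\ghfnotp)\,\rho^{[j]}] \leq \Tr[\rho\,\G(\ghfnotp)^2\,\psi_j^\wedge].
\]

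Next, I would establish the three scalar inequalities on $[0,N]$, for $|j|\in\{1,2\}$ and $N\geq 1$:
\begin{align*}
\psi_j(x) & \leq |j|\,N^{1-\theta}, \\
x\,\psi_j(x) & \leq |j|(|j|+1)\,g_\theta(x), \\
x^2\,\psi_j(x) & \leq |j|(|j|+1)^2\,N^\theta\,g_\theta(x).
\end{align*}
The first is immediate: $g_\theta$ is Lipschitz with constant $N^{1-\theta}$ on $[0,N^\theta]$ and constant on $[N^\theta,\infty)$. For the second, the structural point is that $\psi_j$ is supported in $[0,\,N^\theta+|j|]$. On $[0,N^\theta]$ one has $\psi_j(x) \leq \min(|j|,x)\,N^{1-\theta}$, so $x\,\psi_j(x) \leq |j|\,x\,N^{1-\theta} = |j|\,g_\theta(x)$; on $[N^\theta, N^\theta+|j|]$ one has $g_\theta(x) = N$ and $x \leq (|j|+1)N^\theta$, so $x\,\psi_j(x) \leq (|j|+1)N^\theta \cdot |j|N^{1-\theta} = |j|(|j|+1)\,g_\theta(x)$. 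The third follows by multiplying the second by $x$ and applying the support bound $x \leq (|j|+1)N^\theta$ on $\{\psi_j>0\}$.

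Finally, functional calculus lifts these scalar bounds to operator inequalities on $\fhn$; tracing against the positive operator $\rho$ with $\Tr[\rho]=1$ yields the first estimate, and using $S_{g_\theta} = \Tr[\rho\,\hat g]$ gives the second and third. The main obstacle is the case analysis around the elbow at $x=N^\theta$, where the linear piece of $g_\theta$ meets its plateau at $N$ and where the factor $(|j|+1)$ in the second inequality (rather than simply $|j|$) is produced; once the support inclusion $\mathrm{supp}\,\psi_j \subset [0,\,N^\theta+|j|]$ is noted, the remaining arithmetic is routine.
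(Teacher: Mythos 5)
Your proposal is correct and follows essentially the same route as the paper: both proofs reduce the three operator estimates to pointwise inequalities for functions on $[0,N]$ and lift them via the functional calculus for $\G(\ghfnotp)$, the paper merely stating the scalar inequalities where you spell out the case analysis at $x=N^{\theta}$. The only cosmetic difference is that the paper uses the exact identity $\GG(\ghfnotp\otimes\ghfnotp)=(Id_{\mathbb R}\cdot(Id_{\mathbb R}-1))^{\wedge}$ while you use the upper bound $\GG(\ghfnotp\otimes\ghfnotp)\leq\G(\ghfnotp)^{2}$; since $x(x-1)\leq x^{2}$ on $[0,N]$, both yield the stated constants.
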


\begin{proof}[Proof of Lemma~\ref{lem:estimate_by_S_theta}]
The inequalities are a direct consequence of the functional calculus, once we observe that 
$\G(\ghfnotp)=\widehat{Id_{\mathbb{R}}}$, 
$\GG(\ghfnotp\otimes\ghfnotp)=\G(\ghfnotp)^{2}-\G(\ghfnotp)=(Id_{\mathbb{R}}\cdot(Id_{\mathbb{R}}-1))^{\wedge}\:$ 
and
\begin{align*}
\tau_{j}g_{\theta}-\tau_{k}g_{\theta} & \leq(k-j) \ N^{1-\theta}\,,\\
Id_{\mathbb{R}}\cdot(\tau_{j}g_{\theta}-\tau_{k}g_{\theta}) & \leq(k-j)(k-j+1) \ g_{\theta}\,,\\
Id_{\mathbb{R}}\cdot(Id_{\mathbb{R}}-1)\cdot(\tau_{j}g_{\theta}-\tau_{k}g_{\theta}) & \leq(k-j)\,(k-j+1)^{2} \ N^\theta \ g_{\theta}\,,\end{align*}
as inequalities of functions from $\mathbb R$ to $\mathbb R$, for $-2\leq j<k\leq 2$. 
\end{proof}

\subsection{Bound for $\mathcal{A}_t$}

Let us first estimate the integrand $\mathcal{A}_t(X_\om)$ of \eqref{eq:A_t}, i.e.,
\[
\mathcal{A}_t(X) := 2\Im \Tr\big[ \G(\ghftp X\ghft)\big(\G(\ghft X\ghft)-\Tr[X\ghft]\big)\rho_t^{[-1,1]} \big] \, ,
\]
where $X$ is an operator on $\fh$ such that $0\leq X\leq \Id$.
\begin{prop}
\label{pro:estimate_a(X)} Let $X$ be an operator on $\fh$
such that $0\leq X\leq \Id$ and set ${ \gammamone_t }^{\perp}:= \Tr[ \rhomone_t] - \gammamone_t$.  
Then
\begin{align} \label{eq:a(X)_leq_blabla}
\mathcal A_t(X) & \ \leq \ 
\Tr[\ghf_t X] \, \Tr[ X (2 \ghftp \gammaone_t \ghftp 
+ \ghf_t {\gammamone_t}^{\perp} \ghf_t) ] \, .
\end{align}
\end{prop}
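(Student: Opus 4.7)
The plan is to reduce $\mathcal{A}_t(X)$ to a bilinear form amenable to Cauchy--Schwarz, then to estimate each resulting factor by algebraic properties of the second quantization map $\G$.

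\emph{Step 1: Commutator reorganization.} Set $A := \G(\ghftp X \ghft)$ and $B := \G(\ghft X \ghft) - \Tr[X\ghft]$. Lemma~\ref{pro:dGamma_theta__tau_theta_dGamma} applied with $(j,k)=(1,0)$ shows that $A\hat f = \widehat{\tau_1 f}\,A$ for every function $f$, while the $(j,k)=(0,0)$ case shows that $[B,\hat f]=0$. Together with the identity $\tau_1\sqrt{\tau_{-1}g - g} = \sqrt{g - \tau_1 g}$, cyclicity of the trace gives
\[
\Tr[AB\,\rho_t^{[-1,1]}] = \Tr\bigl[\hat\psi\,A\,\hat\phi\,B\,\rho_t\bigr],
\]
where $\hat\psi := \bigl(\sqrt{g-\tau_1 g}\bigr)^{\wedge}$ and $\hat\phi := \bigl(\sqrt{\tau_{-1}g-g}\bigr)^{\wedge}$.

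\emph{Step 2: Cauchy--Schwarz.} Splitting $\rho_t = \rho_t^{1/2}\rho_t^{1/2}$ and factoring the trace as $\Tr\bigl[(\rho_t^{1/2}\hat\psi A)(\hat\phi B\,\rho_t^{1/2})\bigr]$, then using that $AA^*$ commutes with $\hat\psi$ and $B=B^*$ commutes with $\hat\phi$ (so $\hat\psi\rho_t\hat\psi = \rho_t^{[1]}$ and $\hat\phi\rho_t\hat\phi = \rho_t^{[-1]}$ appear after cyclicity), the Hilbert--Schmidt Cauchy--Schwarz inequality yields
\[
\mathcal{A}_t(X) \leq 2\sqrt{\Tr[AA^*\,\rho_t^{[1]}]\cdot \Tr[B^2\,\rho_t^{[-1]}]}.
\]

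\emph{Step 3: Spectral bound for $B$.} In an eigenbasis of $\ghft X \ghft$ with nonnegative eigenvalues $\mu_i$ summing to $\Tr[X\ghft]$, $B$ acts on the Slater basis of $\fhn$ as $\sum_i\mu_i(n_i-1)$ with $n_i\in\{0,1\}$. Since $\mathrm{rank}\,\ghft=N$ equals the particle number, the spectrum of $B$ on $\fhn$ lies in $[-\Tr[X\ghft],0]$, hence $B^2 \leq \Tr[X\ghft]\,(-B)$. Pairing with $\rho_t^{[-1]}$ and using ${ \gammamone_t }^{\perp} = \Tr[\rho_t^{[-1]}]\Id - \gammamone_t$ gives
\[
\Tr[B^2\,\rho_t^{[-1]}] \leq \Tr[X\ghft]\cdot \Tr\bigl[X\ghft\,{ \gammamone_t }^{\perp}\ghft\bigr].
\]

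\emph{Step 4: Bound for $AA^*$.} Using $\G(M)\G(N) = \GG(M\otimes N)+\G(MN)$ I decompose $AA^* = \GG(\ghftp X\ghft\otimes\ghft X\ghftp) + \G(\ghftp X\ghft X\ghftp)$. The 1-body piece is controlled by $X^2\leq X$ and $\ghft\leq\Id$, which give $\ghftp X\ghft X\ghftp \leq \ghftp X\ghftp$. For the 2-body piece, combining Lemma~\ref{lem:dGamma2_w} with the Schur-complement consequence of $X\geq 0$ in the $\ghft\oplus\ghftp$ block form constrains the off-diagonal blocks $\ghftp X\ghft,\,\ghft X\ghftp$ by the diagonal ones and produces a factor of $\Tr[X\ghft]$. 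Together one obtains
\[
\Tr[AA^*\,\rho_t^{[1]}] \leq 2\,\Tr[X\ghft]\cdot \Tr\bigl[X\ghftp\,\gammaone_t\,\ghftp\bigr].
\]

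\emph{Step 5: Combination.} Inserting Steps~3 and~4 into the Cauchy--Schwarz estimate and applying $2\sqrt{ab}\leq a+b$ gives
\[
\mathcal{A}_t(X) \leq 2\Tr[X\ghft]\sqrt{2\Tr[X\ghftp\gammaone_t\ghftp]\cdot\Tr[X\ghft{ \gammamone_t }^{\perp}\ghft]} \leq \Tr[X\ghft]\Bigl(2\Tr[X\ghftp\gammaone_t\ghftp] + \Tr[X\ghft{ \gammamone_t }^{\perp}\ghft]\Bigr),
\]
which is the claim.

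The hard part will be Step~4: the two-body estimate on $\GG(\ghftp X\ghft\otimes\ghft X\ghftp)$ must extract the factor $\Tr[X\ghft]$ with constant~$1$, and the naive operator bounds (for instance $\|\G(qXp)\Psi\|^2\leq N\langle\Psi,\G(qXpXq)\Psi\rangle$ via a term-by-term Cauchy--Schwarz) produce an extra factor of $N$ that is off by a factor $N/\Tr[X\ghft]$. Combining the block positivity of $X$ in the $\ghft\oplus\ghftp$ decomposition with the direct-integral bound of Lemma~\ref{lem:dGamma2_w} is needed to recover the sharp constant. Steps~1--3 and~5 are essentially algebraic once Step~4 is in place.
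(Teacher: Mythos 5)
Your Steps 1--3 and 5 are correct and coincide in substance with the paper's argument: the commutation via Lemma~\ref{pro:dGamma_theta__tau_theta_dGamma}, the Cauchy--Schwarz splitting into $\Tr[AA^*\rhoone_t]+\Tr[B^2\rhomone_t]$, and the spectral bound $B^2\leq\Tr[\ghft X](-B)$ (which is exactly Lemma~\ref{lem:dGamma-leq-Tr} applied to $\ghft X\ghft$) are all as in the paper. The genuine gap is Step~4, which you flag as the hard part but do not actually prove. For the two-body piece, the mechanism is not a Schur-complement argument on $X$ alone: what is needed is a Cauchy--Schwarz inequality with respect to the positive two-particle density $\gamma_t^{[1](2)}$, writing $\ghftp X\ghft\otimes\ghft X\ghftp=S^*T$ with $S=\sqrt X\ghftp\otimes\sqrt X\ghft$ and $T=\sqrt X\ghft\otimes\sqrt X\ghftp$, so that
\begin{equation*}
\Tr\big[\GG(\ghftp X\ghft\otimes\ghft X\ghftp)\rhoone_t\big]\leq\Big(\Tr\big[(\ghftp X\ghftp\otimes\ghft X\ghft)\gamma_t^{[1](2)}\big]\Big)^{1/2}\Big(\Tr\big[(\ghft X\ghft\otimes\ghftp X\ghftp)\gamma_t^{[1](2)}\big]\Big)^{1/2};
\end{equation*}
the two factors coincide by the exchange symmetry of $\gamma_t^{[1](2)}$, and only then does Lemma~\ref{lem:dGamma2_w} give $\GG(\ghftp X\ghftp\otimes\ghft X\ghft)\leq\Tr[\ghft X]\,\G(\ghftp X\ghftp)$ with the sharp factor $\Tr[\ghft X]$. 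Without this your two-body bound is an assertion, not a derivation.

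Moreover, your one-body bound $\ghftp X\ghft X\ghftp\leq\ghftp X\ghftp$ is true but is the wrong inequality: it yields $\Tr[X\ghftp\gammaone_t\ghftp]$ \emph{without} the factor $\Tr[\ghft X]$, so the sum of your two pieces is $(1+\Tr[\ghft X])\,\Tr[X\ghftp\gammaone_t\ghftp]$, which is not bounded by $2\Tr[\ghft X]\,\Tr[X\ghftp\gammaone_t\ghftp]$ when $\Tr[\ghft X]<1$ --- exactly the regime $X=X_{r,z}$ with small $r$ that matters later. The correct estimate is $X\ghft X=X^{1/2}(X^{1/2}\ghft X^{1/2})X^{1/2}\leq\|X^{1/2}\ghft X^{1/2}\|\,X\leq\Tr[\ghft X]\,X$. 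The factor is not cosmetic: in Proposition~\ref{pro:estimate-I} the bound is integrated against $d\mu(\omega)=dr\,d^3z/(\pi r^5)$, and only the product $\Tr[\ghft X_{r,z}]\,\Tr[hX_{r,z}]\sim r^6$ makes that integral converge at $r=0$; a single factor $\Tr[hX_{r,z}]\sim r^3$ would not.
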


\begin{proof}
Using the Cauchy-Schwarz inequality and $2ab\leq a^{2}+b^{2}$, and then Eq.~\eqref{eq:dGdG_dG2+dG} and Lemma~\ref{lem:dGamma-leq-Tr}, we get
\begin{align} \label{eq:a(X)_first_estimate}
\mathcal A_t(X) \ = \ & 2\Im \Tr\big[ \G(\ghftp X\ghft)\big(\G(\ghft X\ghft)-\Tr[X\ghft]\big)\rho_t^{[-1,1]} \big]  \nonumber \\[1ex]
\ \leq \ &
\Tr\big[ \G(\ghftp X \ghf_t) \, 
\G(\ghf_t X \ghftp) \, \rhoone_t \big] 
\: + \: 
\Tr\big[ \big( \Tr[\ghf_tX] - \G(\ghf_t X \ghf_t) \big)^{2} \rhomone_t  \big]
\nonumber \\[1ex] 
\ \leq \ &
\Tr\big[ \GG(\ghftp X \ghf_t \otimes \ghf_t X \ghftp)
\, \rhoone_t \big] \: + \: 
\Tr\big[ \G(\ghftp X \ghf_t^{2} X \ghftp) \, \rhoone_t \big]
\nonumber \\ &
\: + \: \Tr[\ghf_tX] \, 
\Tr\big[ \big(\Tr[\ghf_tX] - \G(\ghf_t X \ghf_t) \big) \, \rhomone_t \big] \, .
\end{align}
For the first term on the right-hand side of
\eqref{eq:a(X)_first_estimate}, we apply the Cauchy-Schwarz inequality
again and obtain 
\begin{align} \label{eq:gpXg-gXgp-leq-gpXgp-gXg}
\Tr\big[ \GG(& \ghftp X \ghf_t  
\otimes \ghf_t X \ghftp)
\, \rhoone_t \big]
\ = \ 
\Tr\Big[ \big( \ghftp \sqrt{X} \otimes \ghf_t \sqrt{X} \big)
\, \big( \sqrt{X} \ghf_t \otimes \sqrt{X} \ghftp \big) 
\, \gamma_t^{[1](2)} \Big]
\nonumber \\[1ex] 
& \ \leq \
\sqrt{ \Tr\big[ (\ghftp X \ghftp \otimes \ghf_t X \ghf_t) \, 
\gamma_t^{[1](2)} \big] \:} \: 
\sqrt{ \Tr\big[ (\ghf_t X \ghf_t \otimes \ghftp X \ghftp) \, 
\gamma_t^{[1](2)} \big] \:} 
\nonumber \\[1ex]
& \ = \
\Tr\big[ (\ghftp X \ghftp \otimes \ghf_t X \ghf_t) \, 
\gamma_t^{[1](2)} \big] 
\ = \
\Tr\big[ \GG(\ghftp X \ghftp \otimes \ghf_tX \ghf_t) 
\, \rhoone_t] \,.
\end{align}
Using Lemma~\ref{lem:dGamma2_w} yields in turn 
\begin{align} \label{eq:gpXgp-gXg-leq-}
\Tr\big[ \GG(\ghftp X \ghftp \otimes \ghf_tX \ghf_t) 
\, \rhoone_t ]
& \ \leq \ 
\Tr\big[ \ghf_t^{2} X] \, 
\Tr\big[ \G(\ghftp X \ghftp) \, \rhoone_t \big]
\nonumber \\[1ex]
& \ = \ 
\Tr[\ghf_t X] \, \Tr\big[X \ghftp \gamma^{[1]}_t \ghftp \big] \,.
\end{align}
For the second term on the right-hand side
of \eqref{eq:a(X)_first_estimate}, we observe that 
\begin{align*}
\Tr[ \G(\ghftp X \ghf_t^{2} X \ghftp) \, \rhoone_t ] 
\ = \ 
\Tr[X \ghf_t^{2} X \ghftp \gammaone_t \ghftp ]
\ \leq \
\Tr[\ghf_t X] \, \Tr[ X \ghftp \gammaone_t \ghftp ] \,,
\end{align*}
and for the third term on the right-hand side of
\eqref{eq:a(X)_first_estimate},
\begin{align*}
\Tr[\ghf_t X] \, 
\Tr\big[ \big( \Tr[\ghf_t X] - \G(\ghf_t X \ghf_t) \big) \, \rhomone_t \big] 
\ = \ 
\Tr[\ghf_t X] \, \Tr[(\Tr[ \rhomone_t]\ghf_t - \ghf_t \gammamone_t \ghf_t) \, X ] \,,
\end{align*}
which yields \eqref{eq:a(X)_leq_blabla}.
\end{proof}

We now give a bound on the integral $\int \mathcal{A}_t(X_{\omega}) d\mu(\omega)$ using the estimate from Proposition~\ref{pro:estimate_a(X)} on $\mathcal{A}_t(X)$. To get good estimates we take $g$ to be $g_\theta$ as in Eq.~\eqref{eq:g_theta}. We use the notation
\begin{align} \label{eq:f_HF}
f_{HF}(x) & \ := \ \ghf_t(x;x) \ \geq \ 0 \, ,
\end{align}
where $\ghfnot_t=\sum_{i=1}^N |\varphi_{i,t}\rangle\langle \varphi_{i,t}|$, with $\langle\varphi_{i,t}|\varphi_{j,t}\rangle=\delta_{ij}$ and $\ghf_t(x;y):=\sum_{i=1}^N \varphi_{i,t}(x) \overline{\varphi_{i,t}(y)}$, which allows us to rewrite the traces as integrals. For example,
\[
\Tr[\ghf_t X_{r,z}] \ = \ \int_{|x-z|\leq r} f_{HF}(x) \, d^3x \,.
\]
Observe that $\int f_{HF}=N$ and that the quantity $\int f_{HF}^{5/3}$ appearing in Proposition~\ref{pro:estimate-I} is controlled by the Lieb-Thirring inequality, as is discussed in Sect.~\ref{sec:proof_main_thm}.

Before we give the bound for $\mathcal{A}_t$, let us prove an auxiliary lemma. Let $A^c$ denote the complement of a set $A$ and recall that $B(0,R)$ denotes the ball of radius $R$ centered at $0$ in $\mathbb{R}^3$.

\begin{lem}\label{lem:Young-Holder-1}
For $\frac{1}{p_1}+\frac{1}{p_2}+\frac{1}{s}=2$, with $1\leq p_j,s\leq\infty$, measurable functions $\chi,f_1,f_2:\mathbb{R}^{3}\to\mathbb{R}$ and $v(x)=|x|^{-1}$,
\[
\int (\chi \, v)(x-y) \, f_1(x) \, f_2(y) \, d^3x \, d^3y
\ \leq \ 
\|f_1\|_{p_1} \, \|f_2\|_{p_2} \, \|\chi \, v\|_{s} \, .
\] 
Additionally, for $s<3$,
\[ 
\|1_{B(0,R)} \, v\|_{s}
\ = \ 
\Big(\frac{4\pi}{3-s}\Big)^{1/s} R^{3/s-1} \, , 
\] 
and, for $s>3$,
\[
\|1_{B(0,R)^c} \, v\|_{s}
\ = \ 
\Big(\frac{4\pi}{s-3}\Big)^{1/s} \, R^{3/s-1} \, , 
\] 
with the convention that $\Big(\frac{4\pi}{\infty-3}\Big)^{1/\infty} := 1$.
\end{lem}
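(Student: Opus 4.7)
The statement is essentially a Hölder--Young combination followed by two elementary spherical integrals, so the plan is short.

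For the first inequality, the natural approach is to rewrite the double integral as $\int f_1(x)\,[(\chi v)*f_2](x)\,d^3x$ and then apply Hölder's inequality with conjugate exponents $p_1$ and $p_1'$, giving the bound $\|f_1\|_{p_1}\,\|(\chi v)*f_2\|_{p_1'}$. I would then apply Young's convolution inequality to the second factor. The only thing to check is that the exponents line up: Young's inequality $\|(\chi v)*f_2\|_{p_1'}\leq\|\chi v\|_s\,\|f_2\|_{p_2}$ requires $\tfrac{1}{s}+\tfrac{1}{p_2}=1+\tfrac{1}{p_1'}=2-\tfrac{1}{p_1}$, which is exactly the hypothesis $\tfrac{1}{p_1}+\tfrac{1}{p_2}+\tfrac{1}{s}=2$. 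One edge case to mention is when some $p_j$ or $s$ equals $\infty$, where the same argument works with the usual conventions (or by a limiting argument).

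For the two norm computations, I would simply pass to spherical coordinates. For $s<3$,
\[
\|1_{B(0,R)}\,v\|_s^s \ = \ 4\pi\int_0^R r^{2-s}\,dr \ = \ \frac{4\pi\,R^{3-s}}{3-s}\,,
\]
and taking the $s$-th root gives $\bigl(\tfrac{4\pi}{3-s}\bigr)^{1/s}R^{3/s-1}$. Symmetrically, for $s>3$,
\[
\|1_{B(0,R)^c}\,v\|_s^s \ = \ 4\pi\int_R^\infty r^{2-s}\,dr \ = \ \frac{4\pi\,R^{3-s}}{s-3}\,,
\]
yielding $\bigl(\tfrac{4\pi}{s-3}\bigr)^{1/s}R^{3/s-1}$. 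The stated convention for $s=\infty$ is consistent, since in that case $\|1_{B(0,R)^c}\,v\|_\infty = R^{-1}$, matching $R^{3/s-1}\big|_{s=\infty} = R^{-1}$ with the prefactor taken to be $1$.

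There is no real obstacle here: the only point requiring care is verifying the exponent condition in Young's inequality, and then performing the two radial integrals. I would present the proof as one short paragraph for the Hölder--Young step and two lines each for the two norm identities.
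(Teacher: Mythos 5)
Your proposal is correct and follows exactly the paper's (very terse) proof: H\"older's inequality in $x$ followed by Young's convolution inequality, with the exponent check $\tfrac{1}{s}+\tfrac{1}{p_2}=1+\tfrac{1}{p_1'}$ reproducing the hypothesis, and the two norm identities obtained by direct radial integration. All the computations, including the $s=\infty$ convention, check out.
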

\begin{proof}
The first relation follows directly from applying H\"older's and Young's inequalities. The second and third relations follow directly from integration.
\end{proof}

With the ingredients above we can give a bound on $\mathcal{A}_t$.

\begin{prop}\label{pro:estimate_A} \label{pro:estimate-I}
The estimate
\begin{equation}\label{eq:A_leq_blabla}
\mathcal{A}_t \leq 5^{-5/6} \, 72 \, \pi^{1/3} N^{1/6} \big\|f_{HF}\big\|_{5/3}^{5/6} \, S_{g_{\theta}}
\end{equation}
holds.
\end{prop}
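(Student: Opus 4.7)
\textbf{Proof plan for Proposition~\ref{pro:estimate-I}.}
The plan is to start from the pointwise bound on the integrand $\mathcal{A}_t(X_\omega)$ given in Proposition~\ref{pro:estimate_a(X)} and to integrate it over $\omega = (r,z)$ with the measure $d\mu(\omega) = \frac{dr\, d^3z}{\pi r^5}$ from the Fefferman--de la Llave decomposition \eqref{eq:Feffermann-de_la_Llave1}. Setting $A := 2\ghftp\gammaone_t\ghftp+\ghf_t\,{\gammamone_t}^{\perp}\,\ghf_t\geq 0$ and $a(x) := A(x;x)$, the identity $\Tr[\ghf_t X_{r,z}] = \int f_{HF}\,X_{r,z}$ and the analogous one for $A$ lead to
\[
\mathcal{A}_t \;\leq\; \int d\mu(\omega)\,\Tr[\ghf_t X_\omega]\,\Tr[X_\omega A] \;=\; \int\!\!\int \frac{f_{HF}(x)\,a(y)}{|x-y|}\,\dx\,\dy ,
\]
where in the last step one swaps the $\omega$-integral with the $(x,y)$-integrals and recognizes $\int d\mu(\omega)\,X_\omega(x)X_\omega(y) = v(x-y)$.

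Next, I would bound $\|a\|_1=\Tr A$ in terms of $S_{g_\theta}$. By definition of the one-particle reduced density of $\rho_t^{[\pm 1]}$ and the identity $\Tr[\ghf_t\,\gammamone_t] = N\Tr[\rho_t^{[-1]}]-\Tr[\G(\ghftp)\rho_t^{[-1]}]$, the trace of $A$ is
\[
\Tr A \;=\; 2\Tr[\G(\ghftp)\rho_t^{[1]}] + \Tr[\G(\ghftp)\rho_t^{[-1]}] ,
\]
and Lemma~\ref{lem:estimate_by_S_theta} (with $|j|=1$) gives $\Tr A \leq 6\,S_{g_\theta}$.

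The main analytic step is to bound $\int\!\!\int |x-y|^{-1} f_{HF}(x)\,a(y)\,\dx\dy$ only in terms of $\|f_{HF}\|_{5/3}$, $\|a\|_1$ and $N$. For this I would split the Coulomb potential at some radius $R>0$: for the near part use Lemma~\ref{lem:Young-Holder-1} with $(p_1,p_2,s)=(5/3,1,5/2)$, which yields
\[
\int\!\!\int \frac{1_{|x-y|\le R}}{|x-y|}\,f_{HF}(x)\,a(y)\,\dx\dy \;\leq\; (8\pi)^{2/5}\,R^{1/5}\,\|f_{HF}\|_{5/3}\,\|a\|_1 ,
\]
while for the far part the crude estimate $|x-y|^{-1}\leq R^{-1}$ together with $\|f_{HF}\|_1=N$ gives $N\,\|a\|_1/R$. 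Optimizing over $R$ (the minimum occurs at $R = (5N)^{5/6}((8\pi)^{2/5}\|f_{HF}\|_{5/3})^{-5/6}$, at which the two terms equal $5^{-5/6}\cdot 5^{1/6}$ times a common factor) produces a total of $6\cdot 5^{-5/6}(8\pi)^{1/3}\,N^{1/6}\,\|f_{HF}\|_{5/3}^{5/6}\,\|a\|_1$. Since $(8\pi)^{1/3}=2\pi^{1/3}$ and $\|a\|_1\leq 6\,S_{g_\theta}$, one recovers the stated constant $5^{-5/6}\cdot 72\cdot\pi^{1/3}$.

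The only real obstacle I anticipate is the justification of the Fubini step combined with a rigorous control on $a$ (for instance verifying $A\geq 0$, which requires the easy but essential fact that $\gammamone_t\leq\Tr[\rho_t^{[-1]}]\cdot\Id$ as a consequence of fermionic statistics applied to the normalized state $\rho_t^{[-1]}/\Tr[\rho_t^{[-1]}]$); the rest is the chain of inequalities Cauchy--Schwarz / H\"older / Young packaged in Lemma~\ref{lem:Young-Holder-1}, followed by the elementary one-parameter optimization in $R$.
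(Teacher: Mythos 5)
Your proposal is correct and follows essentially the same route as the paper's proof: start from Proposition~\ref{pro:estimate_a(X)}, use Fubini with the Fefferman--de la Llave measure to reduce to $\int\!\!\int |x-y|^{-1} f_{HF}(x)\,h(y;y)\,\dx\,\dy$, split the Coulomb kernel at radius $R$ via Lemma~\ref{lem:Young-Holder-1} with exponents $(5/3,1,5/2)$, optimize over $R$, and bound the traces $\Tr[\ghftp\gammaone_t\ghftp]$ and $\Tr[\ghf_t{\gammamone_t}^{\perp}\ghf_t]$ by $2S_{g_\theta}$ each via Lemma~\ref{lem:estimate_by_S_theta}. All constants, including the optimal $R$ and the final factor $5^{-5/6}\cdot 72\,\pi^{1/3}$, match the paper's computation.
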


\begin{proof}
By Proposition~\ref{pro:estimate_a(X)},
\begin{equation}\label{eq:pre_A_estimate}
\mathcal{A}_t \leq 2 \int \Tr[\ghf_t X_{\om}] \, \Tr[ \ghftp \gammaone_t \ghftp X_{\om} ] \, d\mu(\om) + \int \Tr[\ghf_t X_{\om}] \, \Tr[ \ghf_t {\gammamone_t}^{\perp} \ghf_t X_{\om} ] \, d\mu(\om) \, .
\end{equation}

We now explicitly use the Fefferman-de la Llave decomposition of the Coulomb potential. Then, we find that for any non-negative trace-class operator $h$, using $\int h(y;y)dy = \Tr[h]$, H\"older's inequality and Lemma~\ref{lem:Young-Holder-1} in the end (where we distinguish between the short-range and the long-range part of the potential) gives us that
\begin{align*}
\int \Tr[\ghf_t X_{\om}] \, \Tr[ h X_{\om} ] \, d\mu(\om) & = \frac{1}{\pi} \int \Tr[\ghf_t X_{r,z}] \, 
\Tr[h X_{r,z}] \, \frac{dr}{r^{5}} \, d^3z
\\ & \ = \ 
\frac{1}{\pi} \int 
\Big( \int_{|x-z| \leq r} f_{HF}(x) \, d^3x \Big) \,
\Big(\int_{|y-z| \leq r} h(y;y) \, d^3y \Big) \, \frac{dr}{r^{5}} \, d^3z
\\ & \ = \ 
\int \frac{1}{|x-y|} f_{HF}(x) \, h(y;y) \: d^3x \,d^3y 
\\ & \ \leq \ 
\Big( \|1_{B(0,R)} \, v\|_{5/2} \, \|f_{HF}\|_{5/3}
\: + \: \|1_{B(0,R)^c} \, v\|_{\infty} \, \|f_{HF}\|_{1} \Big)
\, 
\Tr[h]
\\[1ex] & \ \leq \ 
\big( (8\pi)^{2/5} R^{1/5} \|f_{HF}\|_{5/3} 
\: + \: R^{-1} \|f_{HF}\|_{1} \big) \, \Tr[h]
 \, .
\end{align*}
Optimizing with respect to $R>0$ yields
\[
R \ = \ (8\pi)^{-1/3} 5^{5/6} \|f_{HF}\|_{1}^{5/6} \, \|f_{HF}\|_{5/3}^{-5/6} \, ,
\]
so that (recall $\|f_{HF}\|_{1} = N$)
\begin{align*}
\int \Tr[\ghf_t X_{\om}] \, \Tr[ h X_{\om} ] \, d\mu(\om) &\leq 5^{-5/6} 6 (8\pi)^{1/3} \Big( \int f_{HF}^{5/3} \Big)^{1/2} \, \Big(\int f_{HF} \Big)^{1/6} \, \Tr[h] \\
& = 5^{-5/6} 12 \pi^{1/3} \, \|f_{HF}\|_{5/3}^{5/6} \, N^{1/6} \, \Tr[h] \, .
\end{align*}
We now apply this inequality to \eqref{eq:pre_A_estimate}, i.e., with $h=\ghftp \gammaone_t \ghftp$
and with $h=\ghf_t {\gammamone_t}^{\perp} \ghf_t $.
It follows from Lemma~\ref{lem:estimate_by_S_theta} that
\begin{align}\label{eq:int_f1_leq_S}
\Tr[\ghftp \gammaone_t \ghftp] & =\Tr[\ghftp \gammaone_t] = \Tr[\G(\ghftp) \rhoone_t]  \leq 2 S_{g_\theta} \,,
 \\  \label{eq:int_fprime_T_leq_S}
\Tr[\ghf_t {\gammamone_t}^\perp \ghf_t] & =\Tr[\ghft]\Tr[\rho_t^{[-1]}] - \Tr\big[ \G (\ghft) \rho_t^{[-1]}\big] = \Tr\big[ \G (\ghftp) \rho_t^{[-1]}\big]  \leq 2 S_{g_\theta} \, .
\end{align}  
This proves \eqref{eq:A_leq_blabla}.
\end{proof}

\subsection{Bound for $\mathcal{B}_t$}

We estimate $\mathcal{B}_t$ in the same fashion as in \cite[Lemma~7.3]{Petrat:2014fk}. Note, that for this term and the $\mathcal{C}_t$ term it is not necessary to use the Fefferman-de la Llave decomposition.

\begin{prop}\label{pro:estimate_B}
The estimate
\begin{equation}\label{eq:B_leq_blabla}
\mathcal{B}_t \leq 2^{1/3}  \pi^{2/3} \big\|f_{HF}\big\|_{5/3}^{5/6} \, N^{1/6} \big( 6S_{g_{\theta}} + N^{1-\theta} \big)
\end{equation}
holds.
\end{prop}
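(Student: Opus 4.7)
The plan is to adapt the Cauchy--Schwarz strategy used for $\mathcal{A}_t$ to the two-body operator $X:=\GG((\ghftp\otimes\ghftp)v^{(2)}(\ghft\otimes\ghft))$. I would start from the commutation identity from Lemma~\ref{pro:dGamma_theta__tau_theta_dGamma} with $j=2,\,k=0$, namely
\[
X\,\widehat{(\tau_{-2}g-g)^{1/2}}=\widehat{(g-\tau_2 g)^{1/2}}\,X\,.
\]
Using cyclicity of the trace, the factorisation $\rho_t=\rho_t^{1/2}\rho_t^{1/2}$, and the Hilbert--Schmidt Cauchy--Schwarz inequality, this gives
\[
|\mathcal{B}_t|^{2}\leq\Tr[\rho_t^{[2]}]\cdot\Tr\big[X^{*}X\,\rho_t^{[-2]}\big].
\]

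The first factor is immediately bounded by $2 N^{1-\theta}$ via Lemma~\ref{lem:estimate_by_S_theta} (case $|j|=2$).

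For the second factor, the strategy is to establish an operator inequality of the form
\[
X^{*}X \leq c\,\|v\ast f_{HF}\|_{\infty}^{\,2}\,\G(\ghftp)\,,
\]
which combined with Lemma~\ref{lem:estimate_by_S_theta} (giving $\Tr[\G(\ghftp)\rho_t^{[-2]}]\leq 6 S_{g_\theta}$) yields $\Tr[X^{*}X\rho_t^{[-2]}]\leq 6c\,\|v\ast f_{HF}\|_{\infty}^{\,2}\,S_{g_\theta}$. Deriving this operator inequality is where I expect the main difficulty: the product $\GG(A^{*})\GG(A)$ is \emph{not} $\GG(A^{*}A)$, and the cross terms arising from fermionic anticommutation, produced via the product rule~\eqref{eq:dGdG_dG2+dG}, must be organised using Lemma~\ref{lem:dGamma2_w} to be absorbed into the same bound.

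The convolution norm $\|v\ast f_{HF}\|_{\infty}$ is then controlled by the same split-and-optimise argument as in the proof of Proposition~\ref{pro:estimate-I}: split the Coulomb kernel over a ball $B(0,R)$ and its complement, apply H\"older via $\|\mathbf{1}_{B(0,R)}\,v\|_{5/2}\|f_{HF}\|_{5/3}$ together with $R^{-1}\|f_{HF}\|_1$ (both via Lemma~\ref{lem:Young-Holder-1}), and optimise in $R$, yielding $\|v\ast f_{HF}\|_{\infty}\lesssim N^{1/6}\|f_{HF}\|_{5/3}^{5/6}$.

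Finally, the AM--GM inequality in the form $\sqrt{6ab}\leq\tfrac12(6a+b)$ applied to $\sqrt{N^{1-\theta}\cdot 6 S_{g_\theta}}$ converts the square root produced by Cauchy--Schwarz into the additive expression $\tfrac12(6 S_{g_\theta}+N^{1-\theta})$, which together with the convolution estimate produces the stated bound with the explicit constant $2^{1/3}\pi^{2/3}$.
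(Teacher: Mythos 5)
Your bookkeeping is sound at the outer layer: the trace-level Cauchy--Schwarz giving $|\mathcal{B}_t|^2\le\Tr[\rho_t^{[2]}]\cdot\Tr[X^*X\rho_t^{[-2]}]$ is legitimate (the commutation via Lemma~\ref{pro:dGamma_theta__tau_theta_dGamma} with $j=2$, $k=0$ is used correctly), the bound $\Tr[\rho_t^{[2]}]\le 2N^{1-\theta}$ is right, and the final AM--GM and the split-and-optimise estimate of the convolution have the correct powers. The gap is the operator inequality you place at the centre,
\begin{equation*}
X^*X\ \le\ c\,\|v*f_{HF}\|_\infty^2\,\G(\ghftp)\,,\qquad X:=\GG\big((\ghftp\otimes\ghftp)v^{(2)}(\ghft\otimes\ghft)\big)\,,
\end{equation*}
which is \emph{false} for every constant $c$. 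On $\fhn$ the kernel of $\G(\ghftp)$ is spanned by the Slater determinant $\Phi_{HF,t}$ built from $\Ran \ghft$, and $X\Phi_{HF,t}=\sum_{j\ne k}(\ghftp\otimes\ghftp)_{jk}\,v^{(2)}_{jk}\,\Phi_{HF,t}$ is the doubly excited component of $\sum_{j\neq k}v_{jk}\Phi_{HF,t}$, which does not vanish in general; hence $\langle\Phi_{HF,t},X^*X\,\Phi_{HF,t}\rangle>0$ while the proposed upper bound vanishes there. At best one can hope for $X^*X\lesssim \|\cdot\|\,(\G(\ghftp)+\mathrm{const})$ --- the additive constant is unavoidable and is exactly the source of the $N^{1-\theta}$ summand (the paper remarks that $\mathcal{B}_t$ is the only term feeding the second summand of the main estimate). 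Moreover, even that corrected inequality is not within reach of the tools you invoke: $X$ is a \emph{two-body} second quantization, so $X^*X$ contains three- and four-body contributions; identity \eqref{eq:dGdG_dG2+dG} covers only products of one-body $\G$'s, and a naive Cauchy--Schwarz over the $\sim N^2$ disjoint index pairs overshoots by powers of $N$. You flag this step as ``the main difficulty,'' but as stated the target cannot be reached.

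The paper's proof is engineered precisely to avoid squaring a two-body operator. Writing $\rho_t^{[-2,2]}=|\Psi^{[-2]}\rangle\langle\Psi^{[2]}|$, it first splits $(\ghftp\otimes\ghftp)v^{(2)}(\ghft\otimes\ghft)=(\ghftp\otimes\Id)\big[\int^{\oplus}\ghftp(\tau_{x}v)\ghft\,dx\big](\ghft\otimes\Id)$ as a direct integral, and only \emph{then} applies Cauchy--Schwarz. This way one factor of $\ghftp$ lands on the $\Psi^{[-2]}$ side, producing $\langle\Psi^{[-2]},\G(\ghftp)\Psi^{[-2]}\rangle\le 6S_{g_\theta}$ directly, and the object that gets squared is only the one-body operator $\G(\ghftp(\tau_xv)\ghft)$, whose square is controlled by \eqref{eq:dGdG_dG2+dG} and Lemma~\ref{lem:dGamma2_w} and yields the factor $(6S_{g_\theta}+2N^{1-\theta})\int(f_{HF}*v^2)f_{HF}$; the latter integral is then bounded by Hardy--Littlewood--Sobolev and H\"older (not by $\|v*f_{HF}\|_\infty^2$, although that alternative would give the same powers). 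If you want to retain your trace-level Cauchy--Schwarz, you must perform this splitting of $v^{(2)}$ before squaring anything.
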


\begin{proof}
We estimate the $\mathcal{B}_t$ term by using the Cauchy-Schwarz inequality to arrive at a three-particle term.

Recall that $\rho_t^{[-2,2]}$ is a rank one operator, i.e., $\rho_t^{[-2,2]}=\big|\Psi^{[-2]}\big\rangle\big\langle \Psi^{[2]}\big|$. For a linear operator $A$ on $\fh$, $j\in\{-2,2\}$ and almost every~$x\in\mathbb{R}^3$, we define the vectors $\tilde{\Psi}^{[j]}_{A,x} \in \fhk{N-1}$ by
\[
\tilde{\Psi}^{[j]}_{A,x}(x_1,\dots,x_{N-1}):=(\Id^{\otimes N-1}\otimes A)\Psi^{[j]}(x_1,\dots,x_{N-1},x)\,.
\]
Inserting the form of $v^{(2)}$ as a direct integral into the expression of $\mathcal{B}_t$ in Eq.~\eqref{eq:b(X)} yields 
\begin{align}
\mathcal{B}_t & =  \Im	\Big\langle\Psi^{[-2]},\GG \big((\ghftp\otimes \Id)\int_{\mathbb{R}^3}^{\oplus}\ghftp(\tau_{x_{1}}v)\ghft\,dx_{1}(\ghft\otimes \Id)\big)\Psi^{[2]}\Big\rangle
 \nonumber \\
	&=N \Im\int_{\mathbb{R}^3}\Big\langle\tilde{\Psi}_{\ghftp,x}^{[-2]},\G \big(\ghftp(\tau_{x}v)\ghft\big) \, \tilde{\Psi}_{\ghft,x}^{[2]}\Big\rangle dx\,.
	 \nonumber 
\end{align}
Taking the modulus of both sides and using the Cauchy-Schwarz inequality we obtain
\begin{align}
\mathcal{B}_t & \leq \Big(N\int_{\mathbb{R}^3}\big\|\tilde{\Psi}_{\ghftp,x}^{[-2]}\big\|^{2}dx\Big)^{1/2}\Big(N\int_{\mathbb{R}^3} \big\|\G(\ghftp(\tau_{x}v)\ghft) \, \tilde{\Psi}_{\ghft,x}^{[2]}\big\|^{2}dx\Big)^{1/2} \nonumber \\
&= \big\langle\Psi^{[-2]},\G (\ghftp)\Psi^{[-2]} \big\rangle^{1/2}\Big(N\int_{\mathbb{R}^3} \big\langle\tilde{\Psi}_{\ghft,x}^{[2]},\G(\ghft(\tau_{x}v)\ghftp) \, \G(\ghftp(\tau_{x}v)\ghft) \, \tilde{\Psi}_{\ghft,x}^{[2]}\big\rangle dx\Big)^{1/2}\,. \nonumber 
\end{align}
From Lemma \ref{lem:estimate_by_S_theta} we deduce that $\langle\Psi^{[-2]},\G(\ghftp)\,\Psi^{[-2]}\rangle\leq 6 S_{g_\theta}$.
We estimate the remaining integral using Eq.~\eqref{eq:dGdG_dG2+dG}, the Cauchy-Schwarz inequality, Lemma~\ref{lem:dGamma2_w} and Lemma~\ref{lem:estimate_by_S_theta}:
\begin{align}
N\int_{\mathbb{R}^3} & \Big\langle\tilde{\Psi}_{\ghft ,x}^{[2]},\Big[\GG\big((\ghft(\tau_{x}v)\ghftp)\otimes(\ghftp(\tau_{x}v)\ghft)\big)+ \G\big(\ghft(\tau_{x}v)\ghftp(\tau_{x}v)\ghft\big)\Big] \tilde{\Psi}_{\ghft,x}^{[2]}\Big\rangle dx \nonumber \\
&\leq N\int_{\mathbb{R}^3} \Big\langle\tilde{\Psi}_{\ghft,x}^{[2]},\Big[\GG\big((\ghft(\tau_{x}v)^{2}\ghft)\otimes \ghftp\big)+\G\big(\ghft(\tau_{x}v)^{2}\ghft\big)\Big]\tilde{\Psi}_{\ghft,x}^{[2]}\Big\rangle dx \nonumber \\
&=\Big\langle\Psi^{[2]},\Big[\GG\big(\ghft^{\otimes2}(v^{2})^{(2)}\ghft^{\otimes2}\big) \G(\ghftp) + \GG\big(\ghft^{\otimes2}(v^{2})^{(2)}\ghft^{\otimes2}\big)\Big]\Psi^{[2]}\Big\rangle \nonumber \\
&\leq \Big(\langle\Psi^{[2]},\G(\ghftp)\Psi^{[2]}\rangle+\langle\Psi^{[2]},\Psi^{[2]}\rangle \Big) \int_{\mathbb{R}^3}(f_{HF}*v^{2})f_{HF}  \nonumber \\
&\leq (6S_{g_\theta}+2N^{1-\theta}) \int_{\mathbb{R}^3}(f_{HF}*v^{2})f_{HF}\,. \nonumber 
\end{align}

By the Hardy-Littlewood-Sobolev inequality (see, e.g., \cite[Theorem~4.3]{liebloss:2001}) we find
\begin{equation}\label{eq:HLS}
\big\|(v^2* f)f\big\|_1 \leq 4^{1/3} \pi^{4/3} \|f\|_{3/2}^2.
\end{equation}
We then apply H\"older's inequality with $1=\frac{3}{4}+\frac{1}{4}$ to obtain
\begin{equation}\label{eq:HLS_Hoelder}
\|f\|_{3/2}^2 = \|f^{5/4} f^{1/4}\|_1^{4/3} \leq \|f^{5/4}\|_{4/3}^{4/3} \, \|f^{1/4}\|_4^{4/3} = \|f\|_{5/3}^{5/3} \, \| f \|_1^{1/3} \, .
\end{equation}
Applying this to $f_{HF}$ and $ab\leq (a^2+b^2)/2$ to the bound we obtained on $\mathcal{B}_t$ yields the result.
\end{proof}

\subsection{Bound for $\mathcal{C}_t$}\label{sec:estimate_C}

Our estimate for $\mathcal{C}_t$ is analogous to \cite[Lemma~7.3]{Petrat:2014fk}. Note that for this estimate our choice of the function $g_{\theta}$ is crucial, while in the bounds for $\mathcal{A}_t$ and $\mathcal{B}_t$ we could have used the identity function to obtain the desired estimate. By using $g_{\theta}$ with appropriate $\theta<1$ we obtain the desired $N$-dependence in the estimate for $\mathcal{C}_t$.

\begin{prop}
\label{pro:estimate_C}  The estimate
\begin{align}
\label{eq:c(X)_leq_blabla}
\mathcal{C}_t \leq 4 \sqrt{2} \, \big\|f_{HF}* v^2\big\|_{\infty}^{1/2} N^{\theta/2} S_{g_{\theta}}
\end{align}
holds.
\end{prop}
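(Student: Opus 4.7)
The plan is to mirror the strategy that worked for $\mathcal{B}_t$ (Proposition~\ref{pro:estimate_B}): rewrite $\mathcal{C}_t$ as a sesquilinear expression in $\Psi^{[\pm 1]}$, apply Cauchy--Schwarz on the form generated by $\GG$ to split the expression into a ``bad'' piece (carrying two $\ghftp$'s) and a ``good'' piece (which will carry the potential squared), and bound each piece with Lemma~\ref{lem:estimate_by_S_theta} and Lemma~\ref{lem:dGamma2_w} respectively.

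Concretely, using $\rho_t^{[-1,1]}=|\Psi^{[-1]}\rangle\langle\Psi^{[1]}|$, I rewrite
\[
\mathcal{C}_t \;=\; 2\,\Im\,\big\langle \Psi^{[1]},\;\GG\big((\ghftp\otimes\ghftp)\,v^{(2)}\,(\ghft\otimes\ghftp)\big)\,\Psi^{[-1]}\big\rangle.
\]
Factoring $(\ghftp\otimes\ghftp)\,v^{(2)}\,(\ghft\otimes\ghftp)=A^{*}B$ with $A=\ghftp\otimes\ghftp$ and $B=v^{(2)}(\ghft\otimes\ghftp)$, and noting that $\GG(A^{*}B)=\sum_{i\ne j}(A^{*}B)_{ij}$, the standard Cauchy--Schwarz inequality on this sum yields
\[
|\langle \Psi^{[1]},\GG(A^{*}B)\Psi^{[-1]}\rangle|^{2}
\;\leq\; \langle \Psi^{[1]},\GG(A^{*}A)\Psi^{[1]}\rangle\;\langle \Psi^{[-1]},\GG(B^{*}B)\Psi^{[-1]}\rangle.
\]

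For the bad factor, $A^{*}A=\ghftp\otimes\ghftp$ (since $\ghftp$ is a projection), so Lemma~\ref{lem:estimate_by_S_theta} with $|j|=1$ gives
$\langle \Psi^{[1]},\GG(\ghftp\otimes\ghftp)\Psi^{[1]}\rangle\le 4 N^{\theta} S_{g_{\theta}}$, which is precisely the source of the factor $N^{\theta/2}$ in the claimed bound. For the good factor, $B^{*}B=(\ghft\otimes\ghftp)(v^{2})^{(2)}(\ghft\otimes\ghftp)$; after swapping the two tensor factors (which leaves $\GG$ and the symmetric kernel $(v^{2})^{(2)}$ invariant), Lemma~\ref{lem:dGamma2_w}, Eq.~\eqref{eq:dGamma2_w_eta_purp}, applied with $w=v^{2}$, gives
\[
\GG\big((\ghft\otimes\ghftp)(v^{2})^{(2)}(\ghft\otimes\ghftp)\big)\;\leq\;\|v^{2}*f_{HF}\|_{\infty}\,\G(\ghftp),
\]
and then Lemma~\ref{lem:estimate_by_S_theta} yields $\langle \Psi^{[-1]},\G(\ghftp)\Psi^{[-1]}\rangle\le 2 S_{g_{\theta}}$.

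Multiplying the two bounds and taking the square root delivers
\[
\mathcal{C}_t\;\leq\;2\sqrt{4N^{\theta}S_{g_{\theta}}}\;\sqrt{2\,\|v^{2}*f_{HF}\|_{\infty}\,S_{g_{\theta}}}\;=\;4\sqrt{2}\,\|v^{2}*f_{HF}\|_{\infty}^{1/2}\,N^{\theta/2}\,S_{g_{\theta}},
\]
which is exactly \eqref{eq:c(X)_leq_blabla}. The only delicate point is the Cauchy--Schwarz step: one has to ensure that the ``asymmetric'' sandwich $(\ghft\otimes\ghftp)\cdots(\ghft\otimes\ghftp)$ produced by $B^{*}B$ still falls within the scope of Lemma~\ref{lem:dGamma2_w}, which is resolved by the exchange symmetry of $\GG$. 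All remaining estimates are direct applications of Lemmas~\ref{lem:dGamma2_w} and~\ref{lem:estimate_by_S_theta}, and the choice $g=g_{\theta}$ is what makes the $N$-dependence come out as $N^{\theta/2}$ rather than a full power of $N$.
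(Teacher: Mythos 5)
Your proof is correct and follows essentially the same route as the paper's: the same Cauchy--Schwarz splitting of $\GG\big((\ghftp\otimes\ghftp)v^{(2)}(\ghft\otimes\ghftp)\big)$ into the factor $\GG(\ghftp\otimes\ghftp)$ paired with $\rho_t^{[1]}$ and the factor carrying $(v^{2})^{(2)}$ paired with $\rho_t^{[-1]}$, followed by Lemma~\ref{lem:dGamma2_w} and Lemma~\ref{lem:estimate_by_S_theta}. The constants $4N^{\theta}S_{g_{\theta}}$ and $2S_{g_{\theta}}$ and the final bound $4\sqrt{2}\,\|f_{HF}*v^{2}\|_{\infty}^{1/2}N^{\theta/2}S_{g_{\theta}}$ all match the paper.
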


\begin{proof}
Using the Cauchy-Schwarz inequality, Lemma~\ref{lem:dGamma2_w} and Lemma~\ref{lem:estimate_by_S_theta} we find
\begin{align*}
\mathcal{C}_t & = 2\Im \Tr\big[\GG\big((\ghftp \otimes\ghftp )v^{(2)}(\ghft\otimes\ghftp )\big)\, \rho_t^{[-1,1]}\big] \nonumber \\
& \leq 2  \Big( \, \Tr\big[\GG\big((\ghft\otimes\ghftp )\big(v^{(2)}\big)^2(\ghft\otimes\ghftp )\big)\, \rho_t^{[-1]}\big]
\Tr\big[\GG\big(\ghftp \otimes\ghftp \big)\, \rho_t^{[1]}\big] \Big)^{1/2} \nonumber \\
& \leq 2 \Big( \big\|f_{HF}* v^2\big\|_{\infty} \, \Tr\big[\G\big(\ghftp \big)\, \rho_t^{[-1]}\big] 4 N^{\theta} S_{g_{\theta}} \Big)^{1/2} \nonumber \\
& \leq 4 \sqrt{2} \, \big\|f_{HF}* v^2\big\|_{\infty}^{1/2} N^{\theta/2} S_{g_{\theta}} \, ,
\end{align*}
which is the result.
\end{proof}

\subsection{Kinetic Energy Estimates and Proof of Theorem~\ref{thm:estimate-S}}\label{sec:proof_main_thm}

In order to estimate $\|f_{HF}\|_{5/3}^{5/3}$ and $\|f_{HF}* v^2\|_{\infty}$ in terms of the kinetic energy we use the following inequalities.
\begin{prop}[Lieb-Thirring Inequality, see \cite{PhysRevLett.35.687} or {\cite[p.73]{MR2583992}}
]\label{pro:Lieb-Thirring}
Let $\gamma \in \mathcal{L}^{1}(\fh)$ be a one-particle density matrix
of finite kinetic energy, i.e., $0 \leq \gamma \leq \Id$ and
$\Tr[-\Delta\gamma]<\infty$. Then
\begin{equation}\label{eq:Lieb-Thirring}
C_{LT} \int f^{5/3}(x) \, d^3x 
\ \leq \
\Tr[-\Delta\gamma] \, ,
\end{equation}
with $C_{LT}=\frac{9}{5}(2\pi)^{2/3}$ and where $f(x):=\gamma(x;x)$ is the corresponding one-particle density.
\end{prop}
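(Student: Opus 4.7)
Since this is a classical theorem of Lieb-Thirring (the statement even cites the original paper and the Lieb-Seiringer textbook), my plan is to recover it via the standard duality argument with the bound on sums of negative eigenvalues of Schr\"odinger operators; this is the approach taken in the cited references.

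The first step is to establish the eigenvalue moment bound: for any real-valued $V$ with $V_- \in L^{5/2}(\mathbb{R}^3)$, if $(e_j)_j$ denotes the negative eigenvalues (with multiplicity) of $-\Delta + V$ acting on $L^2(\mathbb{R}^3)$, then
\begin{equation*}
\sum_j |e_j| \ \leq \ L_{1,3} \int_{\mathbb{R}^3} V_-(x)^{5/2}\, d^3x,
\end{equation*}
where $V_-=\max\{0,-V\}$ and $L_{1,3}$ is the sharpest available Lieb-Thirring constant in dimension three. This estimate can be obtained via the Birman-Schwinger principle together with a trace-class bound on $|V|^{1/2}(-\Delta - z)^{-1}|V|^{1/2}$, or via a coherent-state decomposition.

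The second step exploits the Pauli constraint $0 \leq \gamma \leq \Id$: for any self-adjoint operator $A$ on $\fh$ with negative eigenvalues $(e_j(A))_j$,
\begin{equation*}
\Tr[A\gamma] \ \geq \ -\sum_j |e_j(A)|,
\end{equation*}
since the right-hand side is the value attained when $\gamma$ equals the spectral projection onto the negative subspace of $A$ (any other admissible $\gamma$ necessarily gives a larger trace because it puts mass at most one on each negative eigenvector and nonnegative mass on positive ones). Applied with $A = -\Delta + V$, combined with the moment bound from step one, and using $\Tr[V\gamma]=\int V f$, this yields
\begin{equation*}
\Tr[-\Delta\, \gamma] \ \geq \ -\int_{\mathbb{R}^3} V(x) f(x)\, d^3x \ - \ L_{1,3}\int_{\mathbb{R}^3} V_-(x)^{5/2}\, d^3x.
\end{equation*}

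The third step is an optimization: choose $V(x) = -c\, f(x)^{2/3}$ for a constant $c > 0$ to be tuned. Both terms on the right then reduce to multiples of $\int f^{5/3}$, producing $\bigl(c - L_{1,3}\, c^{5/2}\bigr)\int f^{5/3}$. Maximizing in $c$ gives $c = (2/(5L_{1,3}))^{2/3}$ and a resulting constant $\tfrac{3}{5}\bigl(\tfrac{2}{5L_{1,3}}\bigr)^{2/3}$. The only real obstacle is step one, namely producing the eigenvalue bound with a constant sharp enough to yield the stated $C_{LT}=\tfrac{9}{5}(2\pi)^{2/3}$. Since this is a well-documented classical result proved in \cite{PhysRevLett.35.687} and \cite[p.~73]{MR2583992}, I would ultimately quote it rather than reproduce the optimization of $L_{1,3}$.
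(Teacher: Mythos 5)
The paper offers no proof of this proposition at all---it is quoted verbatim from the cited references---and your outline is precisely the standard Legendre-duality argument (eigenvalue-moment bound for $-\Delta+V$, the variational inequality $\Tr[A\gamma]\geq-\sum_j|e_j(A)|$ valid under the Pauli constraint $0\leq\gamma\leq\Id$, and optimization over $V=-cf^{2/3}$) by which those references establish it, so you are taking essentially the same route as the paper. Your reduction to the constant $\tfrac{3}{5}\bigl(\tfrac{2}{5L_{1,3}}\bigr)^{2/3}$ is correct, and deferring the numerical value of $L_{1,3}$ to the literature is exactly what the paper does implicitly.
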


\begin{prop}[Hardy's Inequality, see \cite{hardy:1920} or, e.g., \cite{MR1717839}]\label{pro:hardy}
Let $\gamma \in \mathcal{L}^{1}(\fh)$ be a one-particle density matrix
of finite kinetic energy, i.e., $0 \leq \gamma \leq \Id$ and
$\Tr[-\Delta\gamma]<\infty$. Then 
\begin{equation}\label{eq:hardy}
\int \frac{f(x)}{|x|^2} \, d^3x \leq 4 \, \Tr[-\Delta\gamma] \, ,
\end{equation}
where $f(x):=\gamma(x;x)$ is the corresponding one-particle density.
\end{prop}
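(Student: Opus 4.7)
The plan is to reduce the density-matrix Hardy inequality to the classical scalar Hardy inequality for functions in $\sobolev$ via spectral decomposition. Since $\gamma$ is a non-negative trace-class operator on $\fh$ with $\gamma \leq \Id$, I can write $\gamma = \sum_{j} \lambda_j |\psi_j\rangle\langle \psi_j|$ with $\lambda_j \in [0,1]$ and $(\psi_j)_j$ an orthonormal family. Then $f(x) = \sum_j \lambda_j |\psi_j(x)|^2$ almost everywhere, while $\Tr[-\Delta\gamma] = \sum_j \lambda_j \|\nabla \psi_j\|_{L^2}^2 < \infty$ forces each $\psi_j$ with $\lambda_j > 0$ to lie in $\sobolev$. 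Granting the scalar estimate $\int |\psi|^2/|x|^2\, d^3x \leq 4 \int |\nabla\psi|^2\, d^3x$ for $\psi \in \sobolev$, I multiply by $\lambda_j$, sum over $j$, and interchange the integral with the sum by monotone convergence to obtain \eqref{eq:hardy}.

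For the scalar Hardy inequality I would use the vector field $F(x) := x/|x|^2$ on $\mathbb{R}^3 \setminus \{0\}$, which satisfies $\nabla \cdot F(x) = 1/|x|^2$ by direct calculation. For $\psi \in C_c^\infty(\mathbb{R}^3 \setminus \{0\})$, integration by parts gives
\[
\int \frac{|\psi(x)|^2}{|x|^2}\, d^3x
= - 2\, \mathrm{Re} \int \overline{\psi(x)}\, \nabla \psi(x) \cdot \frac{x}{|x|^2}\, d^3x,
\]
and the Cauchy-Schwarz inequality together with the pointwise bound $|F(x)| = 1/|x|$ yields
\[
\int \frac{|\psi|^2}{|x|^2}\, d^3x
\ \leq\ 2 \Big( \int |\nabla \psi|^2\, d^3x \Big)^{1/2}
\Big( \int \frac{|\psi|^2}{|x|^2}\, d^3x \Big)^{1/2}.
\]
Rearranging gives the constant $4$.

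The only non-routine point is extending the scalar inequality from compactly supported smooth functions vanishing near the origin to arbitrary $\psi \in \sobolev$. This is where I expect the main (but still classical) obstacle to lie: one uses that in dimension three the singleton $\{0\}$ has vanishing $H^1$-capacity, so $C_c^\infty(\mathbb{R}^3 \setminus \{0\})$ is dense in $\sobolev$. Along an approximating sequence $\psi_n \to \psi$ in $H^1$, the right-hand side converges by definition, while Fatou's lemma applied to $|\psi_n|^2/|x|^2$ controls the left-hand side; this passes the inequality to the limit. Once the scalar statement is established on all of $\sobolev$, the density-matrix formulation follows directly from the spectral decomposition outlined above.
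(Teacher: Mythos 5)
Your proof is correct. Note that the paper does not actually prove this proposition; it simply cites Hardy's original paper and a standard reference, so there is no in-text argument to compare against. What you supply is the classical textbook proof: the reduction of the density-matrix statement to the scalar inequality via the spectral decomposition $\gamma=\sum_j\lambda_j|\psi_j\rangle\langle\psi_j|$ (with $\Tr[-\Delta\gamma]$ understood as $\sum_j\lambda_j\|\nabla\psi_j\|_{L^2}^2$, which is the convention the paper uses throughout), Tonelli to interchange sum and integral, the divergence identity $\nabla\cdot(x/|x|^2)=1/|x|^2$ with integration by parts and Cauchy--Schwarz to get the sharp constant $\bigl(2/(d-2)\bigr)^2=4$ in $d=3$, and the capacity/density argument plus Fatou to pass from $C_c^\infty(\mathbb{R}^3\setminus\{0\})$ to all of $\sobolev$. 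All steps are sound; the only point worth making explicit is that for $\psi\in C_c^\infty(\mathbb{R}^3\setminus\{0\})$ the left-hand side is finite, so the division after Cauchy--Schwarz is legitimate.
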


We now combine the results of Sects.~\ref{sec:time-derivative-S} to \ref{sec:estimate_C} to prove Theorem~\ref{thm:estimate-S}.

\begin{proof}[Proof of Theorem~\ref{thm:estimate-S}]
We choose $\theta = \frac{1}{3}$ so that our bound for $\mathcal{C}_t$ is good enough. Collecting the estimates for the $\mathcal{A}_t$, $\mathcal{B}_t$ and $\mathcal{C}_t$ terms from Propositions~\ref{pro:estimate-I}, \ref{pro:estimate_B} and \ref{pro:estimate_C} and using the kinetic energy inequalities from Propositions~\ref{pro:Lieb-Thirring} and \ref{pro:hardy}, we can continue the estimate for the time derivative of $S_{g_{1/3},t}$ from Proposition~\ref{pro:equation-dS_g/dt} and find (recall that $K := \sup_{t \geq 0} \Tr[-\Delta \ghf_t]$)
\begin{align}
\frac{dS_{g_{1/3}}(t)}{dt} & = \lambda \big( \mathcal{A}_t + \mathcal{B}_t + \mathcal{C}_t \big) \nonumber \\
& \leq \lambda \, 5^{-5/6} \, 72 \, \pi^{1/3} \, N^{1/6} \, \left( \frac{5}{9} (2\pi)^{-2/3} K \right)^{1/2} \, S_{g_{1/3}}(t) \nonumber \\
& \quad + \lambda \, 2^{1/3}  \pi^{2/3} \left( \frac{5}{9} (2\pi)^{-2/3} K \right)^{1/2} \, N^{1/6} \Big( 6S_{g_{1/3}}(t) + N^{2/3} \Big) \nonumber \\
& \quad + \lambda \, 4 \sqrt{2} \, \left( 4 K \right)^{1/2} N^{1/6} S_{g_{1/3}}(t) \nonumber \\
& \leq 30 \, \lambda \, \sqrt{K} \, N^{1/6} \Big( S_{g_{1/3}}(t) + N^{2/3} \Big) \, .
\end{align}
Integrating this inequality (Gr\"onwall lemma) yields Theorem~\ref{thm:estimate-S}.
\end{proof}

\appendix

\section{Some Results about the Theory of \\ the Time-Dependent Hartree-Fock
Equation}\label{sec:Theory_TDHF}

In this appendix we recall some known facts about the theory of the
TDHF equation. We begin by stating a theorem summarizing those
results proved in \cite{MR0456066} which we use.
\begin{thm}
Let $E$ a separable Hilbert space, $A:E\supseteq\mathcal{D}(A)\to E$
self-adjoint such that $\exists\mu\in\mathbb{R}$, $A\geq\mu\, \Id$.  Let
$M:=(A-\mu+\Id)^{1/2}$ and\[ H_{k,p}^{A}(E):=\big\{M^{-k}TM^{-k} \;
\big| \; T=T^{*}\,,\quad T\in\mathcal{L}^{p}(E)\big\}\,,\] equipped
with the norm $\|T\|_{k,p,A}=\|M^{k}TM^{k}\|_{p}$ where
$\|X\|_{p}=\Tr[|X|^{p}]^{1/p}$ for $1\leq p<\infty$ or
$\|X\|_{\mathcal{B}(E)}$ for $p=\infty$ (we write
$\mathcal{L}^{\infty}(E)$ for $\mathcal{B}(E)$). We adopt the special
notations $H(E):=H_{0,\infty}^{A}(E)$ for the space of bounded
self-adjoint operators on $E$ and $H_{1}^{A}(E):=H_{1,1}^{A}(E)$ for a
weighted space of trace-class operators on $E$.

Let $\mathcal{W}\in\mathcal{B}(H_{1}^{A}(E);H(E))$ be such that
\begin{enumerate}
\item $\big(\mathcal{W}(T)M^{-1}\big)(E)\subseteq\mathcal{D}(M)$,
\item $\big(T\mapsto M\mathcal{W}(T)M^{-1}\big)\in\mathcal{B}(H_{1}^{A}(E);H(E))$,
\item $\forall T,S\in H_{1}^{A}(E): \ \ $
  $\Tr[\mathcal{W}(T)S]=\Tr[\mathcal{W}(S)T]$.
\end{enumerate}

Then
\begin{itemize}
\item For any $T_{0}\in H_{1}^{A}(E)$ there exists $t_{0}>0$ and $T\in C([0,t_{0});H_{1}^{A}(E))$
such that, $\forall t\in[0,t_{0})$,\[
T(t)=e^{-itA}\,T_{0}\,e^{itA} -i \int_{0}^{t}e^{-i(t-s)A} \, \big[\mathcal{W}(T(s)),T(s)\big] \, e^{i(t-s)A}\, ds\,.\]
Such a function $T$ is called a local \textit{mild solution} of the
TDHF equation and, provided its interval of definition is maximal,
it is unique.
\item If, moreover, $T_{0}\in H_{2,1}^{A}(E)$ then $T\in C^{1}([0,t_{0});H_{1}^{A}(E))$
and\[
\begin{cases}
i\frac{dT}{dt}(t) & =\ \big[A,T(t)\big] \ +\ \big[\mathcal{W}(T(t))\,,\, T(t)\big]\\
T(0) & =\ T_{0}\end{cases}\,.\]
Such a function $T$ is called a \textit{classical solution} of the
TDHF equation.
\item Any mild solution to the TDHF equation satisfies\[
\forall t\in[0,t_{0})\,,\quad \Tr\big[MT(t)M\big]+\frac{1}{2}\Tr\big[T(t)\, \mathcal{W}(T(t))\big]=\Tr\big[MT_{0}M\big]+\frac{1}{2}\Tr\big[T_{0}\, \mathcal{W}(T_{0})\big]\,.\]

\item If $\exists k_{1}\in\mathbb{R}$ such that\footnote{There was a typographical error in Assumption $iv)$ in \cite{MR0456066}, namely, 
$\mathcal{W}(T)T\geq k_{1}$ shall be read $\mathcal{W}(T)\geq k_{1}$.
} \[
\big(T\in H_{1}^{A}(E)\,,\,0\leq T\leq \Id\big)\quad\Rightarrow\quad\big(\mathcal{W}(T)\geq k_{1}\big)\,,\]
and $T_{0}\in H_{1}^{A}(E)$, $0\leq T_{0}\leq \Id$, then $T$ can be
extended to the entire positive real axis. Moreover if $T_{0}\in H_{A}^{2,1}(E)$,
then $T$ is the unique global classical solution.
\end{itemize}
\end{thm}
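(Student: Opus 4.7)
The theorem collects four assertions—local mild solvability, regularity, energy conservation, and global existence under a lower bound on $\mathcal{W}$—so my plan is to treat them in turn, all within the semigroup framework of the integral (Duhamel) equation.

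First, for local existence, I would set up a Banach fixed-point argument for
\[
\Phi[T](t) \ = \ e^{-itA}\,T_{0}\,e^{itA} - i\int_{0}^{t} e^{-i(t-s)A}\bigl[\mathcal{W}(T(s)),T(s)\bigr]\,e^{i(t-s)A}\,ds
\]
on a closed ball in $C([0,t_{0}];H_{1}^{A}(E))$. The conjugation $e^{-itA}(\cdot)e^{itA}$ is an isometry of $H_{1}^{A}(E)$ since $e^{itA}$ commutes with $M$. The crux is a local Lipschitz bound on $T\mapsto[\mathcal{W}(T),T]$ in $\|\cdot\|_{1,1,A}$. Writing $M[\mathcal{W}(T),T]M = (M\mathcal{W}(T)M^{-1})(MTM) - (MTM)(M^{-1}\mathcal{W}(T)M)$ and inserting $MM^{-1}$ factors, hypothesis (i) guarantees the domain compatibility needed for these products, hypothesis (ii) provides $\|M\mathcal{W}(T)M^{-1}\|_{\infty}\lesssim \|T\|_{1,1,A}$, and we then control $\|M[\mathcal{W}(T),T]M\|_{1}$ by a quadratic expression in $\|T\|_{1,1,A}$. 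This yields the contraction estimate on short time; the standard blow-up alternative produces a maximal interval of existence and uniqueness.

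Second, for the classical regularity statement, I would upgrade the fixed point to $H_{2,1}^{A}(E)$: using that the free conjugation preserves the $M^{2}$-weighted norm and that a similar commutator-and-weight bookkeeping shows the nonlinearity also maps $H_{2,1}^{A}$ into itself, one obtains $T\in C^{1}([0,t_{0});H_{1}^{A}(E))$, and $[A,T(t)]$ is trace class so the strong form of the equation holds. For the energy identity, I would differentiate $E(t) = \Tr[MT(t)M] + \tfrac{1}{2}\Tr[T(t)\,\mathcal{W}(T(t))]$ along a classical solution; the symmetry hypothesis (iii) $\Tr[\mathcal{W}(T)S]=\Tr[\mathcal{W}(S)T]$ merges the two contributions from differentiating the quadratic term into a single $\Tr[\mathcal{W}(T)\,\partial_{t}T]$, which then cancels the $\Tr[A\,\partial_{t}T]$ term via the equation and the cyclicity of the trace. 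The identity then extends to mild solutions by density of $H_{2,1}^{A}$ in $H_{1}^{A}$ combined with continuous dependence on initial data (itself a consequence of the Lipschitz estimate above applied to differences).

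Finally, the global existence for $0\leq T_{0}\leq \Id$ hinges on two ingredients. The first is spectrum propagation: because $\mathcal{W}(T(t))$ is self-adjoint, the effective one-body generator $A+\mathcal{W}(T(t))$ is self-adjoint and the associated two-parameter propagator $U(t,s)$ is unitary, so $T(t)=U(t,0)T_{0}U(t,0)^{*}$ preserves $0\leq T(t)\leq \Id$ and $\Tr[T(t)]=\Tr[T_{0}]$. The second is an a priori bound on $\|T(t)\|_{1,1,A} = \Tr[MT(t)M]$ coming from the conservation of $E(t)$ and $\mathcal{W}(T(t))\geq k_{1}$, which give $\Tr[MT(t)M]\leq E(0)-\tfrac{1}{2}k_{1}\Tr[T_{0}]$, uniformly in $t$. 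This rules out finite-time blow-up via the standard continuation criterion. The main technical obstacle is the first ingredient: because $T(t)$ itself depends on the solution, constructing the unitary propagator generated by the time-dependent self-adjoint family $A+\mathcal{W}(T(t))$ requires Kato's theory of abstract linear evolution equations, for which one must first establish sufficient continuity (and ideally differentiability) of $t\mapsto \mathcal{W}(T(t))$ from the output of the fixed-point construction. Once this is in place, all four assertions follow.
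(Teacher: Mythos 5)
The paper does not prove this theorem: it is stated in Appendix~\ref{sec:Theory_TDHF} explicitly as a summary of results imported from \cite{MR0456066} (with a typo in that reference corrected in the footnote), whose existence and uniqueness proof is in turn built on Segal's theory of nonlinear semigroups \cite{MR0152908}. So there is no in-paper proof to compare against, only the citation. That said, your sketch follows essentially the same strategy as the cited source: Duhamel formulation, a local Lipschitz estimate on $T\mapsto[\mathcal{W}(T),T]$ in $H_{1}^{A}(E)$ obtained by sandwiching with $M$ and invoking hypotheses (i)--(ii) (note $M^{-1}\mathcal{W}(T)M=(M\mathcal{W}(T)M^{-1})^{*}$, so (ii) controls both factors), contraction mapping, energy conservation from the symmetry hypothesis (iii), and globalization from the a priori bound $\Tr[MT(t)M]\leq E(0)-\tfrac{1}{2}k_{1}\Tr[T_{0}]$ together with propagation of $0\leq T\leq\Id$ via the unitary propagator of $A+\mathcal{W}(T(t))$. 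For that last propagator you do not need the full Kato theory: $\mathcal{W}(T(t))$ is a bounded self-adjoint perturbation depending norm-continuously on $t$ (by $\mathcal{W}\in\mathcal{B}(H_{1}^{A}(E);H(E))$ and $T\in C([0,t_0);H_{1}^{A}(E))$), so a Dyson series in the interaction picture suffices, followed by uniqueness for the frozen-coefficient linear equation to identify $T(t)=U(t,0)T_{0}U(t,0)^{*}$.

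The one step I would not accept as written is the classical-regularity argument. You propose to rerun the fixed point in $H_{2,1}^{A}(E)$, claiming that ``similar commutator-and-weight bookkeeping shows the nonlinearity also maps $H_{2,1}^{A}$ into itself.'' That requires a bound on $\|M^{2}\mathcal{W}(T)M^{-2}\|_{\infty}$, and the hypotheses only provide control of $M\mathcal{W}(T)M^{-1}$ (one power of $M$); nothing stated implies $\mathcal{W}$ is bounded with two weights. This is exactly the issue behind the paper's own remark that \cite{MR0456066} works not in $H_{2,1}^{A}(E)$ but in a larger, ``more natural but less explicit'' space --- namely the domain of the generator $T\mapsto[A,T]$ acting in $H_{1}^{A}(E)$, which contains $H_{2,1}^{A}(E)$. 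The correct route is the standard semilinear-evolution result: since the nonlinearity $F(T)=-i[\mathcal{W}(T),T]$ is a continuous quadratic (hence $C^{1}$) map of $H_{1}^{A}(E)$ into itself and $T_{0}$ lies in the generator's domain, the mild solution is automatically a classical solution in $H_{1}^{A}(E)$; no second fixed point in a smaller weighted space is needed, and none is available from the hypotheses. With that substitution your outline is sound.
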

\begin{rem}
In \cite{MR0456066} the space $H_{2,1}^{A}(E)$ is not used. They
use a space larger than $H_{2,1}^{A}(E)$ which is more natural, but
less explicit. As it is enough for us to use classical solutions for
initial data in $H_{2,1}^{A}(E)$ and then use a density result, we
restrict ourselves to this framework.
\end{rem}
We now quote a result which, although not explicitly
stated in \cite{MR0456066}, is a direct consequence of
\cite{MR0456066} along with \cite{MR0152908}.
\begin{prop}
The map
\begin{align*}
H_{1}^{A}(E)\times[0,\infty) & \to H_{1}^{A}(E)\\
(T_{0},t) & \mapsto T(t) \, ,
\end{align*}
where $T(t)$ is the (mild) solution to the TDHF equation with initial
data $T_{0}$, is jointly continuous in $T_{0}$ and $t$.
\end{prop}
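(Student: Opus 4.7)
The plan is to deduce joint continuity from (a) the $t$-continuity of $T(T_0,\cdot)$ already supplied by the preceding theorem, together with (b) a locally uniform Lipschitz estimate of the form $\|T(T_0,t)-T(T_0',t)\|_{1,1,A}\le e^{Ct}\|T_0-T_0'\|_{1,1,A}$, valid for $T_0,T_0'$ in any fixed bounded subset of $H_1^A(E)$ and $t$ in a bounded interval. Granted (a) and (b), joint continuity at $(T_0^{*},t^{*})$ is immediate from the triangle inequality
\begin{align*}
\|T(T_0,t)-T(T_0^{*},t^{*})\|_{1,1,A}\le e^{Ct}\|T_0-T_0^{*}\|_{1,1,A}+\|T(T_0^{*},t)-T(T_0^{*},t^{*})\|_{1,1,A}.
\end{align*}

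To establish (b), I would subtract the mild integral equations for $T=T(T_0,\cdot)$ and $T'=T(T_0',\cdot)$ and decompose the nonlinearity using the linearity of $\mathcal{W}$,
\begin{align*}
[\mathcal{W}(T),T]-[\mathcal{W}(T'),T']=[\mathcal{W}(T-T'),T]+[\mathcal{W}(T'),T-T'].
\end{align*}
Since $e^{\pm isA}$ commutes with $M=(A-\mu+\Id)^{1/2}$ by functional calculus and the trace norm is unitarily invariant, the free propagators drop out of $\|\cdot\|_{1,1,A}=\|M\cdot M\|_{1}$. Hypothesis (2) on $\mathcal{W}$ gives the operator-norm bound $\|M\mathcal{W}(S)M^{-1}\|_{\infty}\le C_{0}\|S\|_{1,1,A}$, and the self-adjointness of $\mathcal{W}(S)$ and $M$ yields the same bound for $M^{-1}\mathcal{W}(S)M$ by taking adjoints. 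Inserting appropriate factors of $M^{\pm 1}$ into each piece of $M[\mathcal{W}(S),U]M$ and using $\|BC\|_{1}\le\|B\|_{\infty}\|C\|_{1}$ then gives $\|[\mathcal{W}(S),U]\|_{1,1,A}\le 2C_{0}\|S\|_{1,1,A}\|U\|_{1,1,A}$, whence
\begin{align*}
\|T(t)-T'(t)\|_{1,1,A}\le\|T_0-T_0'\|_{1,1,A}+2C_{0}\int_{0}^{t}\bigl(\|T(s)\|_{1,1,A}+\|T'(s)\|_{1,1,A}\bigr)\|T(s)-T'(s)\|_{1,1,A}\,ds,
\end{align*}
and Gr\"onwall closes the argument once a uniform \emph{a priori} bound on $\|T(s)\|_{1,1,A}$ and $\|T'(s)\|_{1,1,A}$ is available.

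Securing this \emph{a priori} bound is the main obstacle. For initial data satisfying $0\le T_0\le\Id$, where global existence is guaranteed, the bound on $\|T(s)\|_{1,1,A}=\Tr[MT(s)M]$ follows from the energy-conservation identity stated in the theorem together with the lower bound $\mathcal{W}(T)\ge k_{1}$, and is uniform in $s\ge 0$ and in $T_0$ with $\|T_0\|_{1,1,A}$ bounded. For general $T_0\in H_{1}^{A}(E)$, only local existence is available; in that case I would cover the trajectory $\{T(T_0^{*},s)\mid s\in[0,t^{*}]\}$ by finitely many overlapping time windows produced by the contraction-mapping existence proof, on each of which the fixed-point argument automatically yields a uniform $H_{1}^{A}$-bound on nearby solutions, and then concatenate the window-wise Lipschitz estimates by a compactness argument on $[0,t^{*}]$. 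The reference to \cite{MR0152908} cited in the proposition is presumably used precisely to formalize this patching step in the nonlinear-semigroup framework.
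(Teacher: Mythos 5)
Your proposal is correct in substance, but it takes a different (and much more explicit) route than the paper. The paper's entire proof is a one-sentence citation: it observes that the existence--uniqueness argument of \cite{MR0456066} is carried out in the abstract nonlinear-semigroup framework of \cite{MR0152908}, whose Corollary~1.5 already delivers joint continuity in the initial datum and in time. You instead reconstruct that abstract result from scratch for this particular equation: the bilinear estimate $\|[\mathcal{W}(S),U]\|_{1,1,A}\le 2C_{0}\|S\|_{1,1,A}\|U\|_{1,1,A}$ (obtained by commuting $e^{\pm isA}$ past $M$, inserting $M^{\pm1}$, and using hypothesis~(2) together with self-adjointness of $M$ and $\mathcal{W}(S)$) is precisely the local Lipschitz property of the nonlinearity that Segal's theorem takes as input, and your Gr\"onwall argument is the standard proof of his continuous-dependence corollary. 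The two routes therefore coincide conceptually; what yours buys is transparency about where each structural hypothesis enters, and in particular an explicit Lipschitz constant $e^{Ct}$ with $C$ controlled by the a priori bound on $\|T(s)\|_{1,1,A}$ --- which, for $0\le T_{0}\le\Id$, you correctly extract from the energy-conservation identity, trace conservation, and the lower bound $\mathcal{W}(T)\ge k_{1}$. The only soft spot is the final patching step for general $T_{0}$ where only local existence is available; your sketch (covering the reference trajectory by finitely many existence windows and concatenating the window-wise estimates) is the standard remedy and is exactly what the abstract framework of \cite{MR0152908} formalizes, but as written it remains a sketch rather than a complete argument. Since the proposition is anyway only applied along solutions that are known to be global, this does not affect its use in the paper.
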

Indeed the proof of existence and uniqueness in \cite{MR0456066}
is based on the results in \cite{MR0152908} which also ensure the
continuity with respect to the initial data (see \cite[Corollary~1.5, p.~350]{MR0152908}).

It was shown in \cite{MR0456066} that those results apply to the
case $E=\mathfrak{h}=L^{2}(\mathbb{R}^{3})$, $A=-\Delta$, 
\[
\mathcal{W}(\gamma) \ = \ \Tr_{2}\big[ v^{(2)} (\Id-\Ex) (\Id \otimes \gamma) \big] \, ,
\]
and $v^{(2)}=|x-y|^{-1}$. The proof then extends to the case
$A=h^{(1)}$ with $h^{(1)}=-C\Delta+w(x)$, where the external potential
$w$ is an infinitesimal perturbation of the Laplacian.

\section{Some Estimates of the Direct Term and the Kinetic Energy}\label{sec:misc_appendix}

\paragraph{The dynamics is the free dynamics to leading order in the $\lambda K^{1/2}N^{1/6}t \sim 1$ regime.}
We now substantiate by heuristic argument that, in the particular case of the Coulomb interaction potential, if $\lambda K^{1/2}N^{1/6}t$ is assumed to be of order one, which is the regime where our estimates are relevant, then the evolution is the free evolution to leading order. 
Note that the exchange term is expected to be subleading with respect to the direct term; we thus neglect the exchange term in the following computation.

We now estimate the effect of the direct term on the time derivative of the average momentum per particle. We denote the Hartree-Fock density at time $t$ by $f_{HF,t}=\sum_{j=1}^N|\varphi_{t,j}|^2$; thus the direct term is the convolution $\lambda v*f_{HF,t}$. For the absolute value of the time derivative of the expectation value of the momentum per particle we find, using 
$|\nabla v| = 3 v^2$ and \eqref{eq:HLS} with \eqref{eq:HLS_Hoelder},
\begin{align}
\left| N^{-1} \partial_t \Tr \big(p_t (-i\nabla)\big) \right| &= N^{-1} \left| \Tr \big(p_t [h_{HF},\nabla]\big) \right| \nonumber \\
&\leq \lambda N^{-1} \left| \Tr \big(p_t (\nabla v * f_{HF,t})\big) \right| \nonumber \\
&\leq 3 \lambda N^{-1} \big\| (v^2 * f_{HF,t}) f_{HF,t} \big\|_1 \nonumber \\
&\leq C \lambda N^{-2/3} K.
\end{align}
After a time $t$ the effect of the direct term on the momentum is thus expected to be of order $\lambda N^{-2/3} K t$. Since $\lambda K^{1/2}N^{1/6}t$ is assumed to be of order one, the average change in momentum is of order $K^{1/2}N^{-5/6}$. Since this is much smaller than the average momentum of a particle $(K/N)^{1/2}$, we conclude that the dynamics is, to leading order, free.

\paragraph{The estimate \eqref{eq:main_estimate} allows to distinguish the free dynamics and the Hartree-Fock dynamics to the next order.}

Again using heuristic arguments we substantiate that, for large enough kinetic energy $K\gg N^{4/3}$, estimate~\eqref{eq:main_estimate} allows to distinguish the effect of the direct term on the evolution, i.e., our main result shows that the Hartree-Fock equation gives a better approximation to the Schr\"odinger equation than the free equation. This is because our convergence rate is $N^{-1/6}$ (let us assume $\gamma_0=p_0$), i.e., for $\lambda K^{1/2}N^{1/6}t$ of order $1$, the error between Schr\"odinger and Hartree-Fock dynamics is for any bounded observable of order $N^{-1/6}$. For $K\gg N^{4/3}$ this rate is much smaller than the average change in momentum estimated above, i.e., $N^{-1/6} \ll K^{1/2}N^{-5/6}$.

\paragraph{Estimate on the Kinetic Energy.} We prove the estimate \eqref{eq:bound_on_e_kin}. 
Recall that $f_{HF,t}=\sum_{j=1}^N|\varphi_{t,j}|^2$ and that the direct term is the convolution $\lambda v*f_{HF,t}$. 
First, we use the conservation of the total Hartree-Fock energy and the fact that the exchange energy is bounded by the direct energy (which follows directly from applying the Cauchy-Schwarz inequality). Then, by H\"older's inequality, we find
\begin{align*}
\Tr[-\Delta \ghft] &\leq \mathcal{E}(\Phi_{HF,0}) + \frac{\lambda}{2} \int (v* f_{HF,t})(x) f_{HF,t}(x) d^3x \\
&\leq \mathcal{E}(\Phi_{HF,0}) + \frac{\lambda}{2} \|v* f_{HF,t}\|_{\infty}\, N \,.
\end{align*}
By splitting the area of integration in space  in the convolution  product $v*f_{HF,t}$ into a short-range and a long-range part, using H\"older's inequality, Lemma~\ref{lem:Young-Holder-1} and the Lieb-Thirring inequality, we find
\begin{align}\label{eq:v_star_rho_infinity_proof}
\big\|v*f_{HF,t}\big\|_{\infty} &\leq \| f_{HF,t} \|_{5/3} \|1_{B(0,R)} \, v\|_{5/2} + \| f_{HF,t} \|_1 \| 1_{B(0,R)^c} v \|_{\infty} \nonumber \\
&\leq C K^{3/5}R^{1/5} + N R^{-1} \, .
\end{align}
Optimizing with respect to $R$ gives the bound 
\begin{equation}\label{eq:v_star_rho_infinity}
\big\|v*f_{HF,t}\big\|_{\infty} \leq C K^{1/2}N^{1/6} \,.
\end{equation}
Then, by Eq.~\eqref{eq:v_star_rho_infinity} and $ab\leq \frac{a^2}{2} + \frac{b^2}{2}$, we find
\begin{align*}
\Tr[-\Delta \ghft] 
&\leq \mathcal{E}(\Phi_{HF,0}) + C \lambda \left(\Tr[-\Delta \ghft]\right)^{1/2} N^{1/6} N \\
&\leq \mathcal{E}(\Phi_{HF,0}) + \frac{1}{2} \Tr[-\Delta \ghft] + C \lambda^2 N^{7/3} \,,
\end{align*}
which proves \eqref{eq:bound_on_e_kin}.

\section*{Acknowledgments}

The authors are indebted to N.~Benedikter, M.~Porta and B.~Schlein for helpful discussions and sharing their results prior to publication. S.~B.'s research is supported by the Basque Government through the BERC 2014-2017 program and by the Spanish Ministry of Economy and Competitiveness MINECO: BCAM Severo Ochoa accreditation SEV-2013-0323. S.~P.'s research has received funding from the People Programme (Marie Curie Actions) of the European Union's Seventh Framework Programme (FP7/2007-2013) under REA grant agreement n\textdegree~291734. T.~T.\ is supported by the DFG Graduiertenkolleg 1838, and for part of this work was supported by DFG Grant No.\ Ba-1477/5-1 and also by the European Research Council under the European Community's Seventh Framework Program (FP7/2007-2013)/ERC grant agreement 20285.

\bibliographystyle{plain}
\bibliography{Biblio}

\end{document}